\documentclass[10pt]{article}
 
\usepackage{fullpage}
\usepackage[all=normal,bibliography=tight]{savetrees}
\usepackage{microtype}
\usepackage{todonotes}
\usepackage{amsthm}
\usepackage{amssymb,bbm}
\usepackage{microtype}

\usepackage{booktabs}

\usepackage{xspace}
\usepackage{todonotes}
\usepackage{comment}
\usepackage{afterpage}
\usepackage{pdflscape}
\usepackage{tikz}
\usepackage{graphicx}
\usepackage[absolute]{textpos}
\usetikzlibrary{arrows, decorations.markings}
\usetikzlibrary{fit}					
\usetikzlibrary{backgrounds}

\usepackage{amsmath,amssymb,amsthm}
\usepackage{graphicx}
\usepackage{caption}
\usepackage{subcaption}
\usepackage{url}
\usepackage{mathtools}

\newcommand{\uselipics}{no}

\numberwithin{equation}{section}
\newtheorem{theorem}{Theorem}[section]
\newtheorem{lemma}[theorem]{Lemma}
\newtheorem{claim}[theorem]{Claim}

\newtheorem{proposition}[theorem]{Proposition}

\theoremstyle{definition}
\newtheorem{definition}[theorem]{Definition}

\newcommand{\chead}[1]{\multicolumn{1}{c}{#1}}
\usepackage{paralist}
\usepackage[shortlabels]{enumitem}

\usepackage{mathtools}
\usepackage{microtype}

\usepackage{multirow}
\usepackage{tabularx}
\usepackage{dcolumn}
\usepackage{colortbl}
\renewcommand{\chead}[1]{\multicolumn{1}{c}{#1}}
\newcommand{\NAsymbol}{\textemdash} %
\newcommand{\NA}{\multicolumn{1}{c}{\NAsymbol}}
\newcolumntype{d}[1]{D{.}{.}{#1}} 
\newcolumntype{A}{>{\columncolor[gray]{0.9}}d{1.3}}
\newcommand{\NAgr}{\multicolumn{1}{>{\columncolor[gray]{0.9}}c}{\NAsymbol}}
\newcommand{\mhead}[1]{\multicolumn{2}{@{}c@{}}{#1}}
\newcommand{\mrow}[1]{\multirow{4}{*}{#1}}
\newcommand{\mrowNA}{\hfil\mrow{\NAsymbol}\hfill}

\newcommand{\iflipics}[2]{\ifthenelse{\equal{\uselipics}{yes}}{#1}{#2}}

\newcommand{\onlyfull}[1]{\iflipics{}{#1}}


\def\cqedsymbol{\ifmmode$\lrcorner$\else{\unskip\nobreak\hfil
\penalty50\hskip1em\null\nobreak\hfil$\lrcorner$
\parfillskip=0pt\finalhyphendemerits=0\endgraf}\fi} 


\def\Nesetril{Ne\v{s}et\v{r}il\xspace}

\def\Sanchez{S\'{a}nchez\xspace}

\def\Hlineny{Hlin{\v{e}}n{\'y}\xspace}
\def\Obdrzalek{Obdr{\v{z}}{\'a}lek\xspace}
\def\Gajarsky{Gajarsk{\'y}\xspace}


\def\maxoutdeg{\Delta^{\!+}}
\def\wcol{\operatorname{wcol}}

\newcommand{\widthm}[1]{ \mathrm{#1} } 
\DeclareMathOperator{\tw}{ \widthm{tw} }

\DeclareMathOperator{\td}{ \widthm{td} }

\newcommand{\cqed}{\renewcommand{\qed}{\cqedsymbol}}

\newcommand{\order}{L}
\usepackage{mathrsfs}

\newcommand{\mc}{\mathcal}
\newcommand{\Oh}{\mc{O}}

\newcommand{\Cc}{\mathscr{C}}

\newcommand{\N}{\mathbb{N}}
\newcommand{\R}{\mathbb{R}}

\newcommand{\col}{\mathrm{col}}

\newcommand{\WReach}{\mathrm{WReach}}
\newcommand{\SReach}{\mathrm{SReach}}

\def\grad_#1{\nabla\!_{#1}}
\def\topgrad_#1{\widetilde \nabla\!_#1}

\newcommand{\bd}{\mathrm{bd}}


\newcommand{\tree}{\texttt{tree1}\xspace}
\newcommand{\treeshrink}{\texttt{tree2}\xspace}
\newcommand{\ldit}{\texttt{ld\_it}\xspace}
\newcommand{\mfcs}{\texttt{mfcs}\xspace}
\newcommand{\tgva}{\texttt{new1}\xspace}
\newcommand{\tgvb}{\texttt{new2}\xspace}
\newcommand{\tgvc}{\texttt{new\_ld}\xspace}
\newcommand{\ldpow}{\texttt{ld}\xspace}

\makeatletter
\newcommand{\manuallabel}[2]{\def\@currentlabel{#2}\label{#1}}
\makeatother

\newtheorem*{rrule*}{Reduction Rule A1}
\newtheorem*{rrrule*}{Reduction Rule B1}

\title{Empirical Evaluation
of Approximation Algorithms for Generalized Graph Coloring and Uniform Quasi-Wideness}

\author{Wojciech Nadara\thanks{%
Institute of Informatics, University of Warsaw, Poland.
\texttt{wn341489@students.mimuw.edu.pl}
Supported by the ``Recent trends in kernelization: theory and experimental evaluation'' project, carried out within the Homing programme of the Foundation for Polish Science co-financed by the European Union under the European Regional Development Fund.}
\and
  Marcin Pilipczuk\thanks{%
Institute of Informatics, University of Warsaw, Poland.
\texttt{malcin@mimuw.edu.pl}
Supported by the ``Recent trends in kernelization: theory and experimental evaluation'' project, carried out within the Homing programme of the Foundation for Polish Science co-financed by the European Union under the European Regional Development Fund.}
\and
Roman Rabinovich\thanks{%
Lehrstuhl f\"ur Logic und Semantik, Technische Universit\"at Berlin, Berlin, Germany.
\texttt{roman.rabinovich@tu-berlin.de}}
\and
Felix Reidl\thanks{%
Department of Computer Science, Royal Holloway Universiy of London, London, United Kingdom. \texttt{felix.reidl@rhul.ac.uk}}
\and
Sebastian Siebertz\thanks{%
Institute of Informatics, University of Warsaw, Poland.
\texttt{siebertz@mimuw.edu.pl}.
The work of Sebastian Siebertz is supported by the National Science Centre of Poland via POLONEZ grant agreement UMO-2015/19/P/ST6/03998, 
which has received funding from the European Union's Horizon 2020 research and 
innovation programme (Marie Sk\l odowska-Curie grant agreement No.\ 665778).}}

\date{}
 
\begin{document}

\maketitle

\begin{textblock}{5}(12.77, 13.7)
\includegraphics[width=38px]{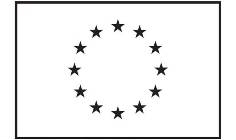}%
\end{textblock}

\begin{abstract}
\noindent The notions of \emph{bounded expansion} and \emph{nowhere denseness}
not only offer robust and general definitions of uniform sparseness of graphs,
they also describe the tractability boundary for several important algorithmic questions.
In this paper we study two structural properties of these graph classes that
are of particular importance in this context, namely the property of having
bounded \emph{generalized coloring numbers} and the property of being 
\emph{uniformly quasi-wide}.
We provide experimental evaluations of several algorithms that approximate
these parameters on real-world graphs.
On the theoretical side, we provide a new algorithm for uniform quasi-wideness
with polynomial size guarantees in graph classes of bounded expansion
and show a lower bound indicating that the guarantees of this algorithm
are close to optimal in graph classes with fixed excluded minor.
\end{abstract}

\section{Introduction}

\subsection{Sparse graph classes}

\paragraph{Treewidth and graph minors.} 
The exploitation of structural properties found in sparse graphs has a long
and fruitful history in the design of efficient algorithms. Besides the long
list of results on planar graphs and graphs of bounded degree (which are too
numerous to be fairly represented here), the celebrated structure theory of graphs
with excluded minors, developed by Robertson and Seymour~\cite{GM} falls into
this category. It not only had an immense influence on the design of efficient
algorithms (see e.g.\ \cite{DemaineH08,DemaineHK09}), it also introduced the
now widely used notion of treewidth (see e.g.~\cite{bodlaender1997treewidth})
and gave rise to the field of parameterized complexity.

To motivate our study of bounded expansion and nowhere dense classes, 
let us first elaborate a bit on the concepts of treewidth and minors. 
A graph has bounded treewidth if it can be recursively decomposed
along small separators. More formally, a clique in a graph $G$ is a 
set of vertices that are pairwise adjacent in $G$. If two graphs $G$ 
and $H$ each contain cliques of equal size, a clique-sum of $G$ 
and $H$ is formed from their disjoint union by identifying pairs 
of vertices in these two cliques to form a single shared clique, 
and then possibly deleting some of the clique edges. A $k$-clique-sum 
is a clique-sum in which both cliques have at most $k$ vertices. 
A graph has treewidth at most $k$ if it can be 
obtained via repeated  $k$-clique-sums of graphs starting with graphs with at most 
$k+1$ vertices. Hence, graphs of bounded treewidth can be 
decomposed into simple pieces of size at most $k+1$ that are
connected in a well controlled manner. Furthermore, such decompositions
can be computed efficiently both in theory~\cite{Bodlaender96}
and in practice~\cite{Tamaki17}, and many problems that are hard to 
solve in general can be solved efficiently on graphs of bounded
treewidth by dynamic programming (see e.g. the corresponding chapter of~\cite{dabook}). 

A graph $H$ is a minor of another graph $G$ if $H$ can be obtained 
from a subgraph of $G$ by contracting some edges. Equivalently, 
$H$ is a minor of $G$ if we can find pairwise disjoint, connected 
subgraph~$H_v$ of $G$, one for each $v\in V(H)$, such that whenever
$\{u,v\}$ is an edge of $H$, then also two vertices of~$H_u$ and 
$H_v$ are connected in $G$. If $G$ does not have $H$ as a minor, 
then we say that $G$ is $H$-minor-free. The structure theorem 
of Robertson and Seymour now reads as follows. For every graph $H$ 
there exists a number $k$ such that every $H$-minor-free graph can 
be obtained via repeated $k$-clique-sums of graphs that are $k$-almost
embeddable into a surface into which $H$ does not embed. We do not 
want to formally define the notion of almost embeddability. The point
we want to make is that again we can recursively decompose 
$H$-minor-free graphs into simpler pieces along small separators, 
so that we can apply dynamic programming techniques, while on the simpler
pieces we can apply topological arguments. While there exist efficient
algorithms to compute such decompositions, and the obtained algorithms
are efficient in theory, the constants appearing in the decomposition 
theorem are enormous and the algorithms fall short of being applicable
in practice. 

\paragraph{Graph classes of bounded expansion and nowhere dense graph classes.}
A complete paradigm shift was initiated by Ne\v{s}et\v{r}il and Ossona de
Mendez with their foundational work and introduction of the notions of
\emph{bounded expansion}~\cite{NesetrilM08,NesetrilM08a,NesetrilM08b} and
\emph{nowhere denseness}~\cite{NesetrilM11a}. These graph classes extend and properly contain $H$-minor-free classes and many arguments
based on topology can be replaced by more  general, and surprisingly 
often much simpler, arguments based on \emph{density}. We refer to 
the textbook~\cite{sparsity} for extensive background on the theory of 
sparse graph classes. Both notions are defined via the concept of excluded
\emph{bounded-depth minors}. We say that a graph $H$ is a \emph{minor
at depth~$r$} of another graph $G$, if we demand in the above definition
of a general minor that every subgraph~$H_v$ representing a vertex 
$v\in V(G)$ has radius at most $r$. Hence, minors at depth $0$ correspond
to subgraphs, minors at depth $1$ are obtained by identifying stars
in~$G$ with vertices of~$H$, and so on. In the limit, minors at depth $n$
correspond to general minors. 

We now define classes of bounded expansion and nowhere dense classes
as follows. A class $\Cc$ of graphs has \emph{bounded expansion}
if there exists a function $f\colon \N\rightarrow\N$ such that 
for every radius $r\in\N$ the density of depth-$r$ minors in graphs 
from $\Cc$ is bounded by $f(r)$. Similarly, a class $\Cc$ of graphs
is \emph{nowhere dense} if there exists a function $t\colon 
\N\rightarrow\N$ such that for every radius $r\in\N$ the graphs
from $\Cc$ exclude the complete graph $K_{t(r)}$ as a depth-$r$
minor. 

\subsection{Local structures}

As both notions of bounded expansion and nowhere dense
are defined by local constraints, we cannot expect
to find global decomposition theorems as in the case of $H$-minor-free
graphs. 
That is, we should not expect that graphs from a graph class of bounded expansion or a nowhere dense graph class
admit such a rigorous structure as a tree decomposition of bounded width, as it is the case of bounded treewidth graphs,
or even such a decomposition as the aforementioned Robertson-Seymour structure theorem for graphs excluding a fixed minor. 
Instead, we need to resort to a more local structures that may not be as powerful or descriptive as the aforementioned decompositions, but still admitting
a surprising number of algorithmic applications. 

\paragraph{$p$-treewidth and $p$-treedepth colorings.}
In these local structures in sparse graph classes again the notion of treewidth, and in fact the more restrictive notion
of treedepth, play key roles for the decomposition of bounded expansion
and nowhere dense classes. For a number $p$, a 
\emph{$p$-treewidth coloring} of a graph $G$ is a vertex coloring $\lambda
\colon V(G)\rightarrow C$ so that the subgraph of 
$G$ induced by the combination of any $i\leq p$ colors has treewidth
at most $i$. Hence, a $p$-treewidth coloring of a graph $G$ can 
be understood as a decomposition of $V(G)$ into disjoint pieces, so that 
any subgraph induced by at most $i$ pieces is strongly
structured -- it has treewidth at most $i$. There is a long line of research 
on low treewidth colorings
in $H$-minor-free graphs. It is a classical observation, underlying the
famous Baker approximation approach, that if in a connected planar graph
$G$ we fix a vertex~$v$ and color all vertices according to the residue
of their distance from $v$ modulo $p+1$, then the obtained coloring
with $p+1$ colors has the property that the union of any $p$ color
classes induces a graph of treewidth $\mathcal{O}(p)$. As proved
by Demaine et al.~\cite{demaine2005algorithmic} 
and by DeVos et al.~\cite{devos2004excluding}, such colorings
with $p+1$ colors can be found for any $H$-minor-free class of graphs. 
Decompositions of this kind are central in the design of approximation
and parameterized algorithms in $H$-minor-free graph classes. We refer
to \cite{demaine2005algorithmic,demaine2011contraction,
demaine2010approximation,devos2004excluding} for a broader discussion. 

When Ne\v{s}et\v{r}il and Ossona de Mendez introduced classes of
bounded expansion~\cite{NesetrilM08a}, they observed that these 
classes can be characterized by the existence of $p$-treewidth colorings
that use only $f(p)$ colors for some function $f$. They proved that 
in fact one can find even stronger colorings, namely, $p$-treedepth 
colorings. The treedepth of a graph is the minimum height of a rooted 
forest whose ancestor-descendant closure contains the graph; this 
parameter is never smaller than the treewidth. Analogously to $p$-treewidth
colorings, $p$-treedepth colorings are defined as colorings $\lambda\colon
V(G)\rightarrow C$ for some color set $C$ such that for all $i\leq p$
the combination of at most $i$ color classes induces a subgraph of
treedepth at most $i$. They proved that a class $\Cc$ of graphs has bounded
expansion if and only if there exists a function $f\colon \N\rightarrow\N$
such that for every $p\in \N$, every $G\in\Cc$ admits a $p$-treedepth
coloring with $f(p)$ colors. Similarly, a class of graphs $\Cc$ 
is nowhere dense if and only if there exists a function $f\colon 
\N\times \R\rightarrow\N$ such that for every $p\in \N$ and 
every real $\epsilon>0$, every $n$-vertex subgraph of a 
graph from $G\in\Cc$ admits 
a $p$-treedepth coloring with $f(p,\epsilon)\cdot n^\epsilon$ colors. 
A direct algorithmic application is a fixed-parameter algorithm for 
the subgraph isomorphism problem~\cite{NesetrilM08b, pilipczuk2018polynomial}. The algorithmic task is to find in 
a graph $G$ a pattern graph $H$ with~$p$ vertices. By finding a 
$p$-treedepth coloring of $G$ with a few colors first, 
the problem reduces to finding $H$ in a graph of treedepth $p$, 
which can be done in linear time.

The strong structure imposed by a $p$-treedepth coloring makes it often a tool of choice for
theoretical development of algorithms in graphs of bounded expansion.
Obviously, the number of colors
needed for a $p$-treedepth coloring leads to the dominating factor 
in algorithms based on this decomposition paradigm. Attempts to
get colorings with weaker structural properties but less colors
were made e.g.~in~\cite{KunOS18}.
Unfortunately, recent experiments by O'Brien and Sullivan~\cite{OBrienSullivan}
indicate that the number of colors in a $p$-treedepth coloring in real-world graphs
is too large for practical applications of this concept.

\paragraph{Generalized coloring numbers.}
This motivates to look at other, often weaker or at least less intuitive structures
that can be found in sparse graph classes and that witness their sparsity.
Low treedepth colorings are strongly related to the \emph{generalized 
coloring numbers}, which are one of the objects of study of the
present paper. The name ``coloring numbers'' may be misleading, 
as we are not coloring the vertices of a graph, but rather ordering them, 
but maybe the following analogy is sufficiently motivating the name. 
A graph $G$ is called $d$-degenerate if its vertices can be ordered
so that every vertex has at most $d$ smaller neighbors. Now, by a 
simple greedy procedure one can find a proper coloring of the 
vertices of $G$. Starting with the smallest vertex one colors the
vertices in increasing order. As every vertex has at most $d$
smaller neighbors in the order, one can always
find a color among $d+1$ colors that is not conflicting with 
the colors that these earlier colored neighbors received before. 
Hence, $d+1$ colors suffice to 
color the whole graph. This gives for example a simple procedure
to find a $6$-coloring of a planar graph, as planar graphs are 
$5$-degenerate. Therefore, the degeneracy of a graph is sometimes
called its coloring number (do not confuse this with its chromatic
number, which may be much smaller). 

The generalized coloring numbers can be seen as generalizations 
of the degeneracy order. They are vertex
orderings that, parameterized by a radius $r$, measure reachability 
properties at distance~$r$. In this work, we focus on one (arguably most popular and applicable)
  generalized coloring number called the \emph{weak coloring number} $\wcol_r$. 
For the exact definition we refer to Section~\ref{sec:wcol};
below we state an algorithmic and engineering goal we want to pursue.

\begin{quote}
\textsc{Weak coloring number} \\
\textbf{Input}: undirected graph $G$, radius $r$. \\
\textbf{Output}: an ordering $L$ of $V(G)$ with small weak coloring number $\wcol_r(G, L)$.
\end{quote}

A greedy coloring along such an order $L$
for an appropriate radius $r$ results in a $p$-treedepth coloring with
a number of colors depending on the quality of the order, 
as observed by Zhu~\cite{zhu2009colouring}. Therefore, the 
generalized coloring numbers provide an efficient way to approximate
$p$-treedepth colorings. Furthermore, they have also
direct combinatorial and algorithmic applications, including study
of the VC-density and neighborhood complexity of graphs~\cite{EickmeyerGKKPRS17,Pil2017number,ReidlVS16}, approximation 
and kernelization of distance-$r$ dominating sets~\cite{AmiriMRS17,DrangeDFKLPPRVS16,Dvorak13,dvovrak2017distance},
and construction of sparse neighborhood covers~\cite{GroheKS17}. 
Recall that a kernelization algorithm is a polynomial pre-processing 
algorithm that attempts to reduce the problem size up to the
point where a brute force algorithm leads to fixed-parameter tractability. 
Polynomial time pre-processing is an essential step for practical 
algorithms for hard combinatorial problems.
Also, while not expressed explicitly in these terms, the 
enumeration algorithm for first-order queries on sparse structured 
databases~\cite{KazanaS13} is essentially based on generalized 
coloring orderings. 

\paragraph{Uniform quasi-wideness.}
We now turn to combinatorial and algorithmic properties of nowhere 
dense graph classes. 
These classes are even more general than bounded expansion classes. 
These classes also admit low treedepth colorings and generalized coloring
orderings with few colors, however, there is an unavoidable factor
$n^\epsilon$ depending on the graph size in the bound on the number
of colors.
This greatly limits the algorithmic usability of these structures, as one
no longer can depend exponentially on the number of colors in a theoretical
running time bound of the algorithm in question (which is often the case
in graphs of bounded expansion). 

Therefore, other properties of these classes are more relevant
for practical algorithmic applications. One of these properties is 
\emph{uniform quasi-wideness}, which is a concept that was originally
studied in model theory~\cite{podewski1978stable} and finite
model theory~\cite{Dawar10}. A class $\Cc$ of 
graphs is \emph{wide} if for every radius $r$ and for every number
$m$, in every sufficiently large graph $G\in \Cc$ one can find a set
of $m$ vertices that are pairwise at distance greater than $r$. 
This is a very restrictive concept; not even the class of all stars
possesses it. However, it is instructive to consider the example of 
star graphs here. It may be possible to remove only a few vertices from
the graphs under consideration, in case of star graphs the centers
of the stars, to obtain a wide class. Exactly this is formalized in the 
definition of uniform quasi-wideness. A class $\Cc$ of graphs is
uniformly quasi-wide if for every radius $r$ there exists a number $s$
such that for every number~$m$ for sufficiently large graphs $G\in\Cc$
we can find a set of at most~$s$ vertices that can be removed 
from $G$ so that we find $m$ vertices at mutual distance greater
than $r$ in the resulting subgraph of $G$. 

Below we state an algorithmic and engineering goal we want to pursue:
\begin{quote}
\textsc{Uniform Quasi-Wideness} \\
\textbf{Input}: undirected graph $G$, set $A \subseteq V(G)$, radius $r$. \\
\textbf{Output}: as small as possible $S \subseteq V(G)$ and as large as possible
$B \subseteq A \setminus S$ such that the elements of $B$ are pairwise within distance
larger than $r$ in the graph $G-S$.
\end{quote}

Uniform quasi-wideness exactly characterizes nowhere dense graph classes:
a graph class closed under taking subgraphs is nowhere dense if and only
if it is uniformly quasi-wide~\cite{NesetrilM10}. 
Note that the uniform quasi-wideness definition does not impose an $n^\epsilon$
in any of the bounds and therefore is often the tool of choice for algorithms
in nowhere dense graph classes. 
On the other hand, it clearly does not give so intuitive and clear structural characterization
such as a tree decomposition or a $p$-treedepth coloring; to use it, one requires to find
a good leverage for it.

Intuitively, uniform quasi-wideness is a very useful property when
dealing with \emph{local properties} of graphs. This concept
was applied very successfully in parameterized complexity, 
e.g.\ to show that the distance-$r$ dominating set problem 
is fixed-parameter tractable on nowhere dense graph 
classes~\cite{DawarK09}, and in fact, more generally, testing
first-order properties is fixed-parameter tractable on nowhere
dense graph classes~\cite{GroheKS17}. The dominating set 
problem plays a central role in parameterized complexity as it
is the foremost example of a $\mathsf{W}[2]$-complete
problem. In fact, under the standard assumption that $\mathsf{FPT}
\neq \mathsf{W}[2]$, for subgraph closed classes, 
nowhere dense classes constitute 
the limit of algorithmic tractability for distance-$r$ dominating
set, distance-$r$ independent set and first-order 
model-checking~\cite{DrangeDFKLPPRVS16,DvorakKT13,pilipczuk2018kernelization}.
On the other hand, more and more sophisticated kernelization 
algorithms for distance-$r$ dominating set on nowhere dense
classes, which are all using the notion of uniform quasi-wideness, 
were developed~\cite{DawarK09,DrangeDFKLPPRVS16,
EickmeyerGKKPRS17,KreutzerRS17}. The concept was also 
applied in the context of lossy kernelization~\cite{EibenKMPS18}
and for efficient algorithms for the reconfiguration variants
of the above problems~\cite{lokshtanov2018reconfiguration,Siebertz17}. 

\subsection{Our contribution}
In summary, one core strength of the bounded expansion/nowhere 
dense framework is that there exists a multitude of equivalent 
definitions that provide 
complementing perspectives. We outlined two structural properties of
these classes that are of particular importance in the algorithmic context, 
namely the property of having bounded generalized coloring numbers 
and the property of being uniformly quasi-wide. 
Recall that probably the strongest and most intuitive one, $p$-treedepth colorings,
have been experimentally studied by O'Brien and Sullivan~\cite{OBrienSullivan}
with rather discouraging conclusions.

The central question of our work here is to investigate 
the two other outlined local structures. That is, we investigate
how the generalized coloring numbers and uniform quasi-wideness behave on
real-world graphs, an endeavor which so far has only been conducted for a 
single notion of bounded expansion and on a smaller scale~\cite{ComplexNetworks}.
Controllable numbers would be a prerequisite for practical
implementations of these algorithms based on such structural approaches.

\paragraph{Comparison of different approaches.}
We provide an experimental evaluation of several algorithms that approximate
these parameters on real world graphs. Our main goal is to identify which of
the approaches from the literature give best results and how they compare with simple
heuristics. 
That is, we do not provide here any start-to-end pipeline for any concrete optimization problem,
but rather aim at identifying the correct tools and algorithmic primitives for future applications.
We remark that a subsequent work of the first author~\cite{NadaraDomSet} uses the best
implementation for the uniform quasi-wideness property in an experimental study of
kernelization algorithms for \textsc{Dominating Set}. 

We describe the studied approaches for generalized coloring numbers in Section~\ref{sec:wcol}
and discuss the results of the experiments in Section~\ref{sec:wcol-results}.
The main finding is that all approaches with theoretical guarantees are outperformed by the simplest heuristic that sorts the vertices by their degrees. Note that this heuristic 
can be easily fooled by an artificial example. 
This simplest heuristic is in turn outperformed by two greedy approaches
that construct orderings from left-to-right or from right-to-left, making locally optimal
decisions.
Furthermore, all studied approaches benefit from a subsequent post-processing by a simple local
search routine that improves the quality of the ordering by at least a few per cent.

Similarly, the studied approaches to uniform quasi-wideness are described in Section~\ref{sec:uqw}
and the experimental results are presented and discussed in Section~\ref{sec:uqw-results}.
The comparison of approaches with theoretical guarantees reveal that the approach based
on so-called distance trees~\cite{Pil2017number} is superior to other methods.
However, we also find out that 
a very simple heuristic that deletes a few vertices of highest degree and
then computes the desired scattered set greedily outperforms all sophisticated approaches. 

\paragraph{Bounds on generalized coloring numbers on a large corpus of graphs.}
As a side result, the experiments yield bounds on weak coloring numbers
for a quite large corpus of real-world graphs
from different sources. 
We do not see any clear and rigorous
method of deciding whether these numbers are relatively small 
or large, that is, whether the studied graphs really come from some sparse graph class
with good bounds on the sparsity constants.
As a proxy, in Section~\ref{sec:statistics_wcol}
we discuss correlation between the obtained upper bounds for weak coloring numbers
and the graph size. Here, the main finding is that for radii $r \leq 3$ the weak
coloring numbers grow very slowly with the number of vertices of the graph
(which is expected in graphs of bounded expansion), but this breaks down for larger radii
(which is also somewhat expected as the radius approaches the logarithm of the number
 of vertices). 

We remark that the obtained numbers are only upper bounds on the weak coloring
numbers of the studied graph corpus, and we do not really know their exact values.
All known exact algorithms for computing the exact value of the weak coloring number
have exponential dependency on the graph size, which is infeasible even on our dataset of small
graphs (where graphs have around 200 vertices on average). 
While it is plausible that an involved branching algorithm with pruning is able to 
compute the exact value for this small dataset, developing such an algorithm and its implementation
seems challenging and beyond this work. 
Furthermore, we are not aware (and were not able to develop on our own) any good methods
of lower bounding the weak coloring numbers in a graph by, say,
exhibiting some small dense structure in a graph (in the same way as a large well-linked set
or a bramble of high order lower bounds the treewidth of a graph). 

Thus, being able to discover the \emph{exact} value of the weak coloring number of graphs
even for our dataset of small graphs remains a challenging future direction for research.
However, judging from the fact that on datasets of small- and medium-sized graphs various
approaches resulted in similar values of the weak coloring number, we guess that our values are not
far from the optimal ones. 

\paragraph{Contributions to the theory.}
Setting up the experiments led also to some contributions to the theory.
One of the studied approaches, combining generalized coloring numbers
with uniform quasi-wideness~\cite{KreutzerPRS16}, turned out to be very conservative 
in its choices. 
Inspired by the approach of~\cite{KreutzerPRS16},
we design a new algorithm for uniform quasi-wideness that avoids the conservative
steps and is arguably simpler.
In particular, our algorithm gives polynomial size guarantees in graph classes of bounded expansion.
Furthermore, we show a lower bound indicating that the guarantees of this algorithm
are close to optimal in graph classes with a fixed excluded minor.

\onlyfull{\medskip \noindent\textbf{Organization.} We give background on the theory of bounded
expansion and nowhere dense graphs in Section~\ref{sec:prelims}. In 
Section~\ref{sec:wcol} and Section~\ref{sec:uqw} we describe our approaches to compute 
the weak coloring numbers and uniform quasi-wideness. Our experimental
setup is described in Section~\ref{sec:setup} and our results are presented
in Section~\ref{sec:wcol-results} and Section~\ref{sec:uqw-results}.
Finally, Section~\ref{sec:lb} describes the lower bound for the new algorithm for
uniform quasi-wideness.

\hfill}

\section{Preliminaries}\label{sec:prelims}
\textbf{Graphs.}
All graphs in this paper are finite, undirected and simple, that is, they
do not have loops or multiple edges between the same pair of vertices. For
a graph $G$, we denote by $V(G)$ the vertex set of $G$ and by $E(G)$ its
edge set. If $U\subseteq V(G)$, then $G[U]$ means the subgraph of $G$ induced by $U$.
The \emph{distance} between a vertex $v$ and a vertex $w$ is the length
(that is, the number of edges) of a shortest path between $v$ and $w$. For
a vertex $v$ of $G$, we write \smash{${N^G(v)}$} for the set of all neighbors of~$v$, \smash{$N^G(v)=\{\,u\in V(G)\mid \{u,v\}\in E(G)\,\}$}, and for \smash{$r\in\N$} we
denote by \smash{$N_r^G[v]$} the \emph{closed $r$-neighborhood of~$v$}, that is,
the set of vertices of $G$ at distance at most~$r$ from $v$. Note that we
always have \smash{$v\in N^G_r[v]$}.
\iflipics{}{When no confusion can arise regarding the
graph $G$ we are considering, we usually omit the superscript~$G$.}
The radius of a connected graph~$G$ is the minimum integer $r$ such that
there exists $v\in V(G)$ with the property that all vertices of~$G$ have distance
at most $r$ to $v$. 
A set $A$ is \emph{$r$-independent} if all distinct vertices of $A$ have
distance greater than $r$. 

\smallskip\noindent
\textbf{Bounded expansion and nowhere denseness.} A {\em{minor model}} of 
a graph $H$ in a graph~$G$ is a family $(I_u)_{u\in V(H)}$ of pairwise 
vertex-disjoint connected subgraphs of $G$, called {\em{branch sets}},
such that whenever $uv$ is an edge in~$H$, there are $u'\in V(I_u)$ and 
$v'\in V(I_v)$ for which $u'v'$ is an edge in~$G$. The graph $H$ is a 
{\em{depth-$r$ minor}} of $G$, denoted $H\preccurlyeq_rG$, if there is a minor 
model $(I_u)_{u\in V(H)}$ of~$H$ in~$G$ such that each $I_u$ has radius at 
most $r$.

A \emph{topological minor model} of a graph $H$ in a graph~$G$ consists of an injective function
$f\colon V(H) \to V(G)$ and a family of paths $(P_{uv})_{uv \in E(H)}$. The path $P_{uv}$ connects $f(u)$ with $f(v)$ in $G$.
Furthermore, no other vertex from the image of $f$ lies on $P_{uv}$ and the paths $P_{uv}$ are pairwise vertex-disjoint except for the endpoints.
The graph $G$ is a \emph{depth-$r$ topological minor} of $G$ if there is a topological minor model of $H$ in $G$ with every path $P_{uv}$
of length at most $2r+1$.

A class $\Cc$ of graphs is \emph{nowhere dense} if there is a function 
$t\colon \N\rightarrow \N$ such that for all $r\in \N$ it holds that 
$K_{t(r)}\not\preccurlyeq_r G$
for all $G\in \Cc$, where $K_{t(r)}$ denotes the clique on $t(r)$ vertices.
A class~$\Cc$ has \emph{bounded expansion}
if there is a function $d\colon\N\rightarrow\N$ such that for all 
$r\in \N$ and all $H\preccurlyeq_rG$ with $G\in\Cc$, the {\em{edge density}}
of $H$, i.e.\ $|E(H)|/|V(H)|$, is bounded by $d(r)$. 
\iflipics{}{Note that every 
class of bounded expansion is nowhere dense. The converse is not necessarily true in general~\cite{sparsity}.}

\section{The weak coloring numbers}\label{sec:wcol}

\subsection{Definitions}
The \emph{coloring number $\col(G)$} of a graph $G$ is the minimum integer
$k$ such that there is a linear order~$L$ of the vertices of $G$
for which each vertex $v$ has \emph{back-degree} at most $k-1$, i.e., at most
$k-1$ neighbors $u$ with $u<_Lv$. It is well-known that for any graph $G$,
the chromatic number~$\chi(G)$ satisfies $\chi(G)\le \col(G)$, which possibly
explains the name ``coloring number''. 

We study a generalization of the coloring number that was introduced by 
Kierstead and Yang~\cite{kierstead03} in the context of coloring games 
and marking games on graphs. The \emph{weak coloring numbers} $\wcol_r$
are a series of numbers, parameterized by a positive integer $r$, which denotes
the radius of the considered ordering. 

The invariants $\wcol_r$ are defined in a way similar to the 
definition of the coloring number. Let $\Pi(G)$ be the set of all 
linear orders of the vertices of the graph
$G$, and let $L\in\Pi(G)$. Let $u,v\in V(G)$. For a positive integer $r$, 
we say that $u$ is \emph{weakly $r$-reachable} from~$v$ with respect to~$L$, 
if there exists a
path $P$ of length~$\ell$, $0\le\ell\le r$, between $u$ and $v$ such that
$u$ is minimum among the vertices of $P$ (with respect to $L$). Let
$\WReach_r[G,L,v]$ be the set of vertices that are weakly $r$-reachable
from~$v$ with respect to $L$. Note that $v\in\WReach_r[G,L,v]$. 
The \emph{weak $r$-coloring number
  $\wcol_r(G)$} of $G$ is defined as
\begin{equation*}
  \wcol_r(G)\coloneqq \min_{L\in\Pi(G)}\:\max_{v\in V(G)}\:
  \bigl|\WReach_r[G,L,v]\bigr|\,.
\end{equation*}

As proved by Zhu \cite{zhu2009colouring}, 
the weak coloring numbers can be used to characterize bounded
expansion and nowhere dense classes of graphs: A class $\Cc$ of graphs has
bounded expansion if and only if there exists a function $f\colon\N\rightarrow\N$
such that $\wcol_r(G)\leq f(r)$ for all $r\in\N$ and all $G\in\Cc$. A class~$\Cc$ is nowhere dense if and only if there is a function $f\colon \N\times\R\rightarrow \N$ such that for every real $\epsilon>0$ 
and every $r\in \N$ and all $n$-vertex graphs $H$ 
that are subgraphs of some $G\in\Cc$ we have $\wcol_r(H)\leq f(r,\epsilon)\cdot n^\epsilon$. 

An interesting aspect of the weak coloring numbers is that these
invariants can also be seen as gradations between the coloring number
$\col(G)$ and the \emph{treedepth $\td(G)$} (which is the
minimum height of a depth-first search tree for a supergraph of $G$
\cite{nevsetvril2006tree}). 
More explicitly, for every graph $G$ we have (see \cite[Lemma~6.5]{sparsity})
\[\col(G)= \wcol_1(G)\le \wcol_2(G)\le
  \dots\le \wcol_\infty(G)= \td(G)\,.\]
Consequently, we also consider an algorithm for computing treedepth in 
our empirical evaluation.

A related notion to weak coloring numbers are \emph{strong} coloring
numbers, which were also introduced in~\cite{kierstead03}. 
Let $L\in \Pi(G)$, let $r$ be a positive integer and let $v\in
V(G)$. We say that a vertex~$u$ is \emph{strongly $r$-reachable} from $v$ if
there is a path~$P$ of length $\ell$, $0\le \ell \le r$ such that $u=v$ or $u$ is
the only  vertex of $P$ smaller than~$v$ (with respect to~$L$). Let
$\SReach_r[G,L,v]$ be the set of vertices that are strongly $r$-reachable from
$v$ with respect to $L$. Again, $v\in \SReach_r[G,L,v]$. The \emph{strong
  $r$-coloring number} $\col_r(G)$ is defined as $\col_r(G) \coloneqq \min_{L\in\Pi(G)}\:\max_{v\in V(G)}\:
  \bigl|\SReach_r[G,L,v]\bigr|.$ As weak coloring numbers converge to
  treedepth with growing $r$, strong coloring numbers converge to treewidth~\cite{GroheKRSS15}:
  \[\col(G)= \col_1(G)\le \col_2(G)\le
  \dots\le \col_\infty(G)= \tw(G)\,.\]
The reason is that treewidth of $G$ can be characterized by the minimal width of
an elimination ordering of $G$ defined exactly as $\col_\infty(G)$.

Clearly, for all $r\in\N$, $\col_r(G) \le \wcol_r(G)$ (and thus $\tw(G)\le
\td(G)$). Moreover, for all $r$ we have $\wcol_r(G)\le
(\col_r(G))^r$~\cite{kierstead03}. It follows that for every graph $G$ there is
some (possibly large) integer $r$ such that $\wcol_{r-1}(G) \le \tw(G) \le \wcol_r(G)$. This gives a hope that an
elimination ordering computed for treewidth gives a good upper bound for
$\wcol_{r'}(G)$ where $r'\le r-1$. We we will evaluate orders produced by an algorithm for
treewidth approximations, but interpreted as an order for weak coloring numbers.

Concrete bounds for the weak coloring numbers on restricted graph classes
are given in~\cite{GroheKRSS15, KreutzerPRS16,NesetrilM08,FelixThesis,HeuvelMQRS17,zhu2009colouring}. 
The approximation algorithms we study are based on the approaches described in 
\cite{NesetrilM08,FelixThesis, HeuvelMQRS17}, which we describe in more
detail in the following subsections.


\subsection{Distance-constrained Transitive Fraternal Augmentations}  
\def\dir#1{\vec{#1}}

\iflipics{\input{dtf-short}}{In this section we describe an approach based on distance-constrained transitive fraternal augmentations, developed in~\cite{NesetrilM08a,FelixThesis}.
In~\cite{FelixThesis} one can find the following guarantee.
\begin{theorem}[\cite{FelixThesis}]
Given a graph $G$ and an integer $r$, one can construct an ordering $L$ of $V(G)$ with the sizes of weakly reachable sets bounded by
$$2^{2^{\Oh(r)}} \cdot (\widetilde{\bigtriangledown}_{r+1}(G) \cdot \bigtriangledown_0(G))^{2^{\Oh(r)}},$$
where $\widetilde{\bigtriangledown}_{r+1}(G)$ is the maximum density of depth-$(r+1)$ topological minors in $G$
while $\bigtriangledown_0(G)$ is the maximum density of depth-0 minors (i.e., subgraphs) of $G$ (that is, the degeneracy of $G$). 
\end{theorem}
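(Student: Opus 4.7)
The plan is to build, iteratively, a sequence of \emph{distance-labeled augmentations} $\dir{G}_0,\dir{G}_1,\ldots,\dir{G}_r$ of $G$ and then read off the ordering $L$ from a topological sort of the final orientation. Formally, $\dir{G}_0$ is obtained from $G$ by giving each edge an orientation whose maximum in-degree is bounded by the degeneracy $\bigtriangledown_0(G)$; every arc is labeled with distance $1$. Given $\dir{G}_{i-1}$, I would form $\dir{G}_i$ by applying two augmentation rules in a distance-bounded way: (transitive) for every directed path $u\to w\to v$ with labels $a$ and $b$ such that $a+b\le r+1$, add an arc $u\to v$ (or $v\to u$) with label $a+b$; (fraternal) for every two arcs $u\to w$, $v\to w$ with labels $a$, $b$ and $a+b\le r+1$, add an arc between $u$ and $v$ with label $a+b$, oriented so that the resulting digraph still admits an orientation with low in-degree via a density-based argument.

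The key invariant to maintain is that the underlying simple graph of $\dir{G}_i$ is a depth-$(r+1)$ topological minor of $G$, so its edge density is bounded by $\widetilde{\bigtriangledown}_{r+1}(G)$; this is precisely where the grad parameters enter. Consequently, at every step one can reorient the freshly added arcs so that the maximum in-degree of $\dir{G}_i$ is bounded by some $d_i$ which is a polynomial (of degree depending on $r$) in $\widetilde{\bigtriangledown}_{r+1}(G)\cdot \bigtriangledown_0(G)$. Iterating for $r$ rounds gives the doubly-exponential guarantee in the statement, because each round multiplies the bound by a factor polynomial in $\widetilde{\bigtriangledown}_{r+1}(G)\cdot\bigtriangledown_0(G)$ and raises the exponent by a constant.

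After building $\dir{G}_r$, take $L$ to be any linear order consistent with the acyclic orientation of $\dir{G}_r$ (existence of such an acyclic reorientation is again a consequence of the bounded in-degree, by a greedy removal of a minimum-in-degree vertex argument). The main content of the proof is then to establish the \emph{reachability lemma}: for every vertex $v\in V(G)$ and every $u\in\WReach_r[G,L,v]$ there is an arc from $u$ to $v$ in $\dir{G}_r$, with label at most $r$. This is shown by induction on the length $\ell$ of the weakly reachable path $P$ from $u$ to $v$; splitting $P$ at the second-smallest vertex (in $L$) yields two shorter pieces to which the inductive hypothesis, together with either the transitive or the fraternal rule (depending on whether the split vertex is larger or smaller than its neighbors on $P$), applies. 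Given the lemma, $|\WReach_r[G,L,v]|\le 1+d_r$ for every $v$, yielding the stated bound.

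The main obstacle, and the step where most of the work lies, is the density bookkeeping: showing that one can simultaneously (i) add all arcs required by the transitive and fraternal rules at round $i$ and (ii) reorient the result so that the maximum in-degree stays bounded by a function of $\widetilde{\bigtriangledown}_{r+1}(G)$. The fraternal step is the delicate one, since it adds edges between pairs of non-adjacent vertices and one must argue that each such added edge is witnessed by a short subdivided structure in $G$ to conclude that $\dir{G}_i$ stays a depth-$(r+1)$ topological minor. The correct accounting, already worked out in \cite{NesetrilM08a,FelixThesis}, produces exactly the $2^{2^{\Oh(r)}}$ tower and the $(\widetilde{\bigtriangledown}_{r+1}(G)\cdot\bigtriangledown_0(G))^{2^{\Oh(r)}}$ factor claimed in the theorem.
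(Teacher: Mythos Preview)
Your overall plan --- build a sequence of distance-constrained transitive-fraternal augmentations $\dir G_1,\ldots,\dir G_r$, bound their degree via the grad parameters, then read off an ordering --- is indeed the method the paper describes (citing \cite{FelixThesis,NesetrilM08a}). The divergence, and the gap, is in how you extract the ordering $L$ and justify the reachability bound.

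The paper does \emph{not} take $L$ to be a topological sort of $\dir G_r$. Instead it computes a \emph{degeneracy ordering of the underlying undirected graph $G_r$} and invokes a lemma (attributed to \cite{AmiriMRS17,GroheKS17}) stating that if every vertex has at most $c$ smaller $G_r$-neighbors under $L$, then $|\WReach_r[G,L,v]|\le (\maxoutdeg(\dir G_r)+1)\cdot c+1$. The proof of that lemma uses only the ``three-case'' property of dtf-augmentations: any two vertices at distance $d\le r$ in $G$ are joined by a direct arc in one direction, or share a common out-neighbor whose two arc-weights sum to $d$. This works for \emph{any} ordering $L$, at the price of the multiplicative factor $\maxoutdeg(\dir G_r)+1$.

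Your route instead asserts a stronger reachability lemma: every $u\in\WReach_r[G,L,v]$ is an in-neighbor of $v$ in $\dir G_r$. Your induction for this needs the arc directions in $\dir G_r$ to agree with $L$ (so that when you split the path at the second-smallest vertex $w$, the inductively obtained arcs $u\to w$ and $w\to v$ feed the transitive rule). That forces $L$ to be a topological sort of $\dir G_r$, hence $\dir G_r$ must be acyclic. But in the construction the fraternal arcs at round $i$ are oriented by a degeneracy ordering of an auxiliary graph $G^f_i$, independently of the existing arcs, so directed cycles can and do appear. Your fix --- ``reorient via greedy minimum-in-degree removal'' --- produces a degeneracy ordering of $G_r$, not a topological sort of $\dir G_r$; and once you flip arcs, the transitive/fraternal closure properties you need for the induction no longer hold for the new orientation. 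So the reachability lemma as stated does not go through, and the step ``take $L$ consistent with the acyclic orientation of $\dir G_r$'' is where the plan breaks.

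The repair is exactly what the paper does: drop the demand for acyclicity, take $L$ to be the degeneracy ordering of $G_r$, and accept the weaker bound $(\maxoutdeg(\dir G_r)+1)\cdot c+1$ from the cited lemma. The density bookkeeping you outline for $\maxoutdeg(\dir G_r)$ is then the right engine for the $2^{2^{\Oh(r)}}\cdot(\widetilde{\bigtriangledown}_{r+1}(G)\cdot\bigtriangledown_0(G))^{2^{\Oh(r)}}$ bound.
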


Given a graph $G$ and a linear order $L$ of its vertices, observe that
we have the following properties:
\begin{enumerate}
\item Let $u,v,w\in V(G)$ be such that $v\in \WReach_i[G,L,u]$ and 
$w\in \WReach_j[G,L,u]$ for some numbers $i,j$. 
Then either $v\in \WReach_{i+j}[G,L,w]$ or $w\in \WReach_{i+j}[G,L,v]$.
\item Let $u,v,w\in V(G)$ be such that $u\in \WReach_i[G,L,v]$ and 
$v\in \WReach_j[G,L,w]$ for some numbers $i,j$.
Then $u\in \WReach_{i+j}[G,L,w]$.
\end{enumerate}

\noindent
We can approximate the weak coloring numbers by
orienting the input graph $G$ and iteratively inserting arcs so that 
the above reachability properties are satisfied. Introducing
an arc with the aim of satisfying property~1 above is called 
a \emph{fraternal augmentation}, while introducing an arc
with the aim of satisfying property~2 is called a 
\emph{transitive augmentation}. These operations were 
studied first in~\cite{NesetrilM08a}. We are going to work with 
an optimized version, called \emph{distance-constrained
transitive-fraternal augmentations}, short \emph{dtf-augmentations}, 
which was introduced in~\cite{FelixThesis} as a more practical variant 
of transitive-fraternal augmentations. 

Let $G$ be an undirected graph and let $\dir G_1$ be any orientation of~$G$.
Then a dtf-augmentation of~$G$ is a sequence $\dir G_1 \subseteq \dir G_2 \subseteq \dots$ of
directed graphs which satisfy the following two constraints:
\index{dtf-augmentation}
\begin{enumerate}
    \item Let $u,v,w\in V(G)$ be such that $uv \in E(\dir G_i)$ and $uw \in E(\dir G_j)$
    are arcs of $\dir G_i$ and $\dir G_j$, respectively. Then it
    follows that either $vw \in E(\dir G_{i+j})$ or $wv \in E(\dir G_{i+j})$.
    \item Let $u,v,w\in V(G)$ be such that $vu \in E(\dir G_i)$ and $wv \in E(\dir G_j)$ are arcs of $\dir G_i$ and $\dir G_j$, respectively. Then it
    follows that $wu \in \dir G_{i+j}$. 
\end{enumerate}
Just as above, arcs added because of the first item are called \emph{fraternal}
and arcs added because of the second item are called \emph{transitive}.  To
simplify notation we associate a weight function
$\omega_i \colon V(\dir G_i)^2 \to \{1,\ldots,\}\cup\{\infty\}$ with the $i$-th
dtf-augmentation~$\dir G_i$ where $w_1(uv) = 1$ if $uv\in E(\dir G_1)$ and
$w_1(uv) = \infty$ if $uv\notin E(\dir G_1)$ and
\begin{align*}
  \omega_i(uv) = \begin{cases}
    \min\{ \omega_{i-1}(uv), i \}  &\text{if~$uv \in \dir G_i$} \\
    \infty                         &\text{else}
  \end{cases}
\end{align*}
In other words: if the arc~$uv$ is present in $\dir G_i$ but not in $\dir
G_{i-1}$, then we have $\omega_{\geq i}(uv) = i$ and $\omega_{< i}(uv) =
\infty$. It can be shown that the arcs of weight~$d$ appear exactly in
augmentation~$\dir G_d$. These augmentations behave similarly
to graph powers in the following sense: consider two vertices~$u,v$ that are
at distance~$d$ in~$G$. Then in every augmentation~$\dir G_r$ for~$r \geq d$ we
either find the arc~$uv \in \dir G_d$ with~$\omega_d(uv) = d$, or the
arc~$vu \in \dir G_d$ with~$\omega_d(vu) = d$, or we find a common out-neighbor~$w$ of~$u$ and~$v$ in~$\dir G_d$ such that~$\omega_r(wu) +
\omega_r(wv) = d$.

Importantly, graph classes of bounded expansion admit dtf-augmentations in
which the maximum out-degree~$\Delta^{\!+}(\dir G_r)$ depends only on a function of 
depth~$r$ and on the graph class in question~\cite{FelixThesis} (we remark that
commonly in the literature one orients the graphs $\dir G_i$ to minimize in-degrees
instead of out-degrees, however, for consistency with the weak coloring numbers
we orient so that an arc $uv\in E(\dir G_i)$ corresponds to $u\in \WReach_i[G,L,v]$). 
The algorithm to
compute such augmentations closely follows the original algorithm for tf-augmentations (described
in~\cite{NesetrilM08a,sparsity}): first, the orientation~$\dir G_1$ is chosen to be the
acyclic ordering derived from the degeneracy ordering of~$G$; this orientation
minimizes~$\maxoutdeg(\dir G_1)$. Second, we
can orient the fraternal arcs added in step~$r$ by first collecting \emph{all} potential
fraternal edges in an auxiliary graph~$G^f_r$ and then
again compute an acyclic orientation $\dir G^f_r$ which minimizes the out-degree. We then insert the arcs into~$\dir G_r$ according to their orientation
in~$\dir G^f_r$.

If instead of
computing fraternal edges at step~$r$ by searching for fraternal
configurations in all pairs $\dir G_i$, $\dir G_j$ with~$i+j = r$, it suffices
to consider the pair~$\dir G_{r-1}$, $\dir G_1$. The same optimization
does \emph{not} hold for transitive arcs, however.

The precise connection between dtf-augmentations and $\wcol$-orderings is
presented in the following lemma. 

\begin{lemma}[\cite{AmiriMRS17,GroheKS17}]
  Let~$\dir G_r$ be the $r$-th dtf-augmentation of a graph~$G$ and let~$G_r$ be
  the underlying undirected graph. Let~$L$ be an ordering of~$V(G)$ such that 
  every vertex has at most~$c$ smaller neighbors with respect to~$L$. Then
  $\operatorname{WReach}_r[G,L,v] \leq (\maxoutdeg(\dir G_r)+1)c + 1$ for all~$v \in G$.
\end{lemma}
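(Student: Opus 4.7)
The plan is to apply the structural dichotomy for dtf-augmentations already highlighted just before the lemma: for every pair of vertices $u,v$ at distance $d \leq r$ in $G$, the graph $\vec{G}_r$ contains either the arc $uv$, the arc $vu$, or a witness $w$ together with arcs $wu$ and $wv$ whose weights sum to $d$. I would combine this with the paper's orientation convention---an arc $xy \in E(\vec{G}_i)$ encodes $x \in \WReach_i[G,L,y]$, and in particular $x <_L y$---so that every arc of $\vec{G}_r$ points from an $L$-smaller vertex to an $L$-larger one. It then follows that the in-neighbors of $v$ in $\vec{G}_r$ are precisely its $L$-smaller neighbors in the underlying graph $G_r$, so the hypothesis on $L$ yields $|N^-_{\vec{G}_r}(v)| \leq c$.

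Next I would fix an arbitrary $u \in \WReach_r[G,L,v] \setminus \{v\}$, which satisfies $u <_L v$ and $\mathrm{dist}_G(u,v) \leq r$, and apply the dichotomy. The case that $vu$ is an arc of $\vec{G}_r$ is ruled out immediately by the orientation convention, since $u <_L v$. In the remaining two cases we get either $u \in N^-_{\vec{G}_r}(v)$ (direct arc $uv$) or a witness $w$ with $w \in N^-_{\vec{G}_r}(v)$ and $u \in N^+_{\vec{G}_r}(w)$. A two-layer count then gives at most $c$ candidates for $u$ via the first alternative, and at most $c \cdot \maxoutdeg(\vec{G}_r)$ via the second. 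Adding $v$ itself yields
\[
|\WReach_r[G,L,v]| \;\leq\; 1 + c + c \cdot \maxoutdeg(\vec{G}_r) \;=\; (\maxoutdeg(\vec{G}_r)+1)\,c + 1,
\]
which is exactly the claimed bound.

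The only point deserving attention is the exclusion of the ``reverse-arc'' case, and this hinges entirely on the orientation convention: once we agree that every arc is oriented from the $L$-smaller to the $L$-larger endpoint, both the bound $|N^-_{\vec{G}_r}(v)| \leq c$ and the impossibility of $vu$ being an arc follow with no further work. Consequently there is no real technical obstacle here; the proof reduces to a two-layer counting argument that charges each weakly reachable vertex either directly to an in-neighbor of $v$, or to an out-neighbor of an in-neighbor of $v$.
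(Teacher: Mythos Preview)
The paper does not prove this lemma; it simply cites~\cite{AmiriMRS17,GroheKS17}. So there is no paper proof to compare against, and I assess your argument on its own.

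Your argument hinges on the claim that every arc of $\vec G_r$ points from the $L$-smaller endpoint to the $L$-larger one. This is not justified and is false in general. The orientation of $\vec G_r$ is fixed \emph{before} $L$ is chosen: $\vec G_1$ is oriented via a degeneracy ordering of $G$, and the fraternal arcs at each later step are oriented via degeneracy orderings of auxiliary graphs $G^f_i$; the ordering $L$ in the lemma is only afterwards taken to be a degeneracy ordering of the final undirected $G_r$. The paper's parenthetical remark that ``an arc $uv\in E(\vec G_i)$ corresponds to $u\in\WReach_i[G,L,v]$'' is explaining why the authors reversed the usual sign convention (minimising out-degree rather than in-degree); it is not asserting that the orientation agrees with $L$.

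A concrete counterexample: let $G=K_{1,n}$ with centre $x$, take $r=1$, and orient every edge towards $x$ (a valid $\vec G_1$ with $\Delta^{\!+}=1$). A degeneracy ordering of $G_1=G$ places $x$ first, so $c=1$. Then $|N^-_{\vec G_1}(x)|=n$, refuting your claim that $|N^-(v)|\le c$. Moreover, for each leaf $y$ we have $x\in\WReach_1[G,L,y]$ with $x<_L y$, and the only arc present between them is $yx$ --- exactly the ``reverse-arc'' case you declared impossible.

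A second issue: you state the dichotomy as producing a common \emph{in}-neighbour $w$ (arcs $wu,wv$), following the paper's weight formula. But under the augmentation rules actually stated (fraternal: $uv,uw\Rightarrow vw$ or $wv$; transitive: $wv,vu\Rightarrow wu$), the configuration that escapes both closures is a common \emph{out}-neighbour (arcs $uw,vw$); the paper's phrase ``common out-neighbor'' is correct and its weight formula has a typo. With the correct direction the witness $w$ lies in $N^+_{\vec G_r}(v)$ and $u\in N^-_{\vec G_r}(w)$, so the roles of $c$ and $\Delta^{\!+}$ in your two-layer count are swapped, and bounding the number of candidates for $u$ then requires an additional argument that is absent from what you wrote.
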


Therefore we can obtain a $\wcol_r$-ordering from the $r$th dtf-augmentation~$\dir G_r$
by simply computing a degeneracy ordering of~$G_r$.

}

\subsection{Flat decompositions}\label{ss:flat}

\iflipics{\input{short-flat}}{\newcommand{\GHi}[1]{G[H_{\ge#1}]}
The following approach for approximating the weak coloring numbers 
was introduced in \cite{HeuvelMQRS17} and provably yields good results
on graphs that exclude a fixed minor. 

\begin{theorem}[\cite{HeuvelMQRS17}]
Let $r \geq 1$ and $t \geq 4$ be integers and assume that $G$ does not contain $K_t$ as a minor.
Then
$$\wcol_r(G) \leq \binom{r+t-2}{t-2} \cdot (t-3)(2r+1) \in \Oh(r^{t-1}).$$
\end{theorem}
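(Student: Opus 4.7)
The plan is to derive the bound from a structural decomposition of $K_t$-minor-free graphs combined with the classical fact that a graph of treewidth $w$ admits an ordering $L$ with $\wcol_r(G, L) \le \binom{r+w}{w}$. The structural ingredient I need is a \emph{chordal partition}: a partition $\mc{P}$ of $V(G)$ into bags such that (i) each bag is a union of at most $t-3$ shortest paths of $G$, and (ii) the quotient graph $G/\mc{P}$ obtained by contracting each bag has treewidth at most $t-2$. I would produce such a partition by iterated BFS: pick a root, take the BFS layering, and recurse into each layer, bottoming out after $t-3$ levels. The recursion is well-founded because a BFS layer of a $K_t$-minor-free graph is itself $K_{t-1}$-minor-free -- otherwise the branch sets of a hypothetical $K_{t-1}$-minor in a layer could be extended downward along BFS-tree paths and merged at a common apex vertex (say the root), yielding a $K_t$-minor in $G$.

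Given the decomposition, I would build the ordering $L$ of $V(G)$ in two stages. First, take an elimination ordering $L'$ of the quotient $G/\mc{P}$ witnessing $\wcol_r(G/\mc{P}, L') \le \binom{r+t-2}{t-2}$; such an ordering exists by the classical bound for graphs of treewidth at most $t-2$. Second, within each bag, list the vertices consecutively along the constituent geodesics in some fixed sequence. Globally, place all vertices of an earlier bag (in $L'$) before any vertex of a later bag.

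To bound $|\WReach_r[G,L,v]|$, consider any weakly $r$-reachable vertex $u$ witnessed by a path $P$ of length $\le r$ with $u = \min_L P$. Projecting $P$ to the quotient yields a walk of length $\le r$ from $\operatorname{bag}(v)$ to $\operatorname{bag}(u)$, and $\operatorname{bag}(u)$ must be $L'$-minimum along this walk; hence $\operatorname{bag}(u) \in \WReach_r[G/\mc{P}, L', \operatorname{bag}(v)]$, so there are at most $\binom{r+t-2}{t-2}$ bags in play. Within each such bag, which consists of at most $t-3$ geodesics, I would argue that on each geodesic $\gamma$ the vertices that are $L$-minimum on some length-$\le r$ path from $v$ must project to an interval of length $\le 2r+1$ in $\gamma$: since $\gamma$ is a shortest path in $G$, the subpath of $\gamma$ between any two such witnesses has length equal to their $G$-distance, which is at most $2r$. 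This gives at most $(t-3)(2r+1)$ reachable vertices per bag, and multiplying yields the claimed product bound.

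The main obstacle is proving the chordal partition lemma with exactly the stated parameters -- specifically, that $t-3$ geodesics per bag suffice while the quotient retains treewidth at most $t-2$. The iterated BFS must be set up so that each recursive level contributes one geodesic to each final bag and bumps the treewidth of the emerging quotient by exactly one; this requires careful tracking of how the layer-wise BFS trees piece together into the quotient tree decomposition. The ``geodesic slab'' counting inside each bag is a secondary subtlety but follows cleanly once the partition is in hand. Everything else in the proof is combinatorial bookkeeping on top of these two structural facts.
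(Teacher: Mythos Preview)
Your overall architecture is exactly that of \cite{HeuvelMQRS17}: build a ``chordal partition'' whose bags are unions of a bounded number of geodesics, observe that the quotient has treewidth at most $t-2$ (so $\wcol_r$ of the quotient is at most $\binom{r+t-2}{t-2}$), and multiply by the $(t-3)(2r+1)$ per-bag bound coming from the geodesic structure. The counting arguments you sketch for the second stage---projecting a witnessing path to the quotient, and the $2r+1$ window on each geodesic via the triangle inequality---are correct.

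The gap is in the construction of the partition. Iterated BFS layering does \emph{not} produce bags that are unions of $t-3$ geodesics. Your lemma that a BFS layer of a $K_t$-minor-free graph is $K_{t-1}$-minor-free is true (the union of BFS-tree paths from the branch sets to the root, minus the layer itself, is a connected $t$-th branch set), and it does make the recursion well-founded. But at the bottom of the recursion you obtain an iterated refinement by layers: a single ``cell'' is an intersection of $t-3$ nested BFS layers, which is just some $K_3$-minor-free induced subgraph---a forest, not a union of $t-3$ shortest paths. There is no mechanism by which ``each recursive level contributes one geodesic to each final bag''; a layer is not a path, and nothing in your recursion singles out paths. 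Consequently neither the $(t-3)(2r+1)$ per-bag bound nor the treewidth-$(t-2)$ quotient bound is available from this construction.

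The construction the paper describes (following \cite{HeuvelMQRS17}) is different and more direct. One builds the parts $H_1,H_2,\ldots$ greedily: given $H_1,\ldots,H_q$, take a component $C$ of the remainder; it touches at most $t-2$ earlier parts (else contracting yields a $K_t$-minor, using that the quotient built so far is chordal so these neighbours form a clique). Now grow a BFS tree in $G[C]$ from a root chosen adjacent to one of those parts, and let $H_{q+1}$ be a minimal subtree reaching a neighbour of each remaining adjacent part---this is a union of at most $t-3$ root-to-leaf BFS paths, i.e., geodesics of $G[C]$. One subtlety your write-up misses even at the counting stage: these are geodesics of $G[C]$, not of $G$. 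The $2r+1$ bound still goes through because any path witnessing $u\in\WReach_r[G,L,v]$ with $u\in H_i$ cannot visit a part $H_{i'}$ with $i'<i$ (such a vertex would be $L$-smaller than $u$), so the path lies entirely in $G[C]$ and the $G[C]$-geodesic property suffices.
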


A \emph{decomposition} of a graph~$G$ is a sequence
$\mathcal{H}=(H_1,\ldots,H_\ell)$ of non-empty subgraphs of $G$ such that
the vertex sets $V(H_1),\ldots,V(H_\ell)$ partition $V(G)$. The
decomposition $\mathcal H$ is \emph{connected} if each~$H_i$ is
connected. 

A decomposition of a graph $G$ induces a partial order on $V(G)$
by defining $u<v$ if $u\in V(H_i)$ and $v\in V(H_j)$ for $i<j$. A decomposition
yields a good order for the weak coloring numbers for a given $r$ if we can 
\begin{enumerate}[(1)]
\item guarantee that the $r$-neighborhood of each $v\in V(H_j)$
has a small intersection with $\bigcup_{i<j}V(H_i)$ (then, in particular,
$\WReach_r[G,L,v]\cap \bigcup_{i<j}V(H_i)$ is small), and 
\item ensure that we can 
order the vertices inside each $H_i$ so that we have good weak
reachability properties. 
\end{enumerate} 
We call such a decomposition \emph{flat.}
The following procedure was proposed in~\cite{HeuvelMQRS17} to compute a 
decomposition of a graph $G$. If $G$ excludes the complete graph
$K_t$ as a minor, the resulting decomposition is flat. For a decomposition $(H_1,\ldots,H_\ell)$ of a 
graph~$G$ and $1\le i\le\ell$, we denote by $\GHi{i}$ the subgraph 
of~$G$ induced by $\bigcup_{i\le j\le\ell}V(H_j)$.

  Without loss of generality we may assume that $G$ is connected. We
  iteratively construct a connected  decomposition
  $H_1,\ldots,H_\ell$ of $G$, see Figure \ref{fig:flat} for an example. 
  To start, we choose an arbitrary vertex $v\in V(G)$ and let $H_1$ be the
  connected subgraph $G[v]$. 
  Now assume that for some~$q$, $1\le q\le\ell-1$, the sequence
  $H_1,\ldots,H_q$ has already been constructed and let $G'$ be the graph induced by vertices not in $H_i$, i.e., $G' = G[V(G) \setminus \bigcup_{1 \le i \le q }V(H_i)]$. Fix some component $C$ of
  $G'$ and denote by 
  $Q_1,\ldots,Q_s\in \{H_1,\ldots, H_q\}$ the subgraphs that have a 
  connection to $C$.  Using that $K_t$ is excluded as a minor, 
  one may argue that $s\leq t-2$.
  Because~$G$ is connected, we have
  $s\geq 1$. Let $v$ be a vertex of $C$ and let 
  $T$ be a breadth-first search tree in $G[C]$
  with root~$v$. We choose $H_{q+1}$ to be a minimal connected subgraph of
  $T$ that contains $v$ and that contains for each~$i$, $1\le i\le s$, at
  least one neighbor of $Q_i$.  
  As shown in~\cite{HeuvelMQRS17}, if $K_t\not\preccurlyeq G$, then the above
procedure produces a linear order~$L$ that certifies that $\wcol_r(G)\in
O(r^{\,t-1})$. 

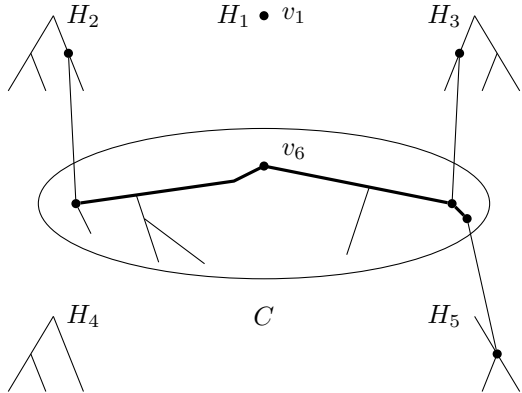
\begin{figure}
  \centering
  \begin{tikzpicture}
  \tikzstyle{vertex}=[circle,inner sep=1,minimum size =1mm,semithick,fill=black, draw=black]

  \node[vertex] (v1) at (0,0){};
  \node at (0.4,0) {$v_1$};
  \node at (-0.4,0) {$H_1$};
  \node at (-2.4,0) {$H_2$};
  \node at (2.4,0) {$H_3$};
  \node at (-2.4,-4) {$H_4$};
  \node at (2.4,-4) {$H_5$};

  \draw (2.8,0) -- (3.4,-1);
  \draw (2.8,0) -- (2.4,-1);
  \draw (3.1,-0.5) -- (2.9,-1);

  \draw (-2.8,0) -- (-3.4,-1);
  \draw (-2.8,0) -- (-2.4,-1);
  \draw (-3.1,-0.5) -- (-2.9,-1);

  \draw (-2.8,-4) -- (-3.4,-5);
  \draw (-2.8,-4) -- (-2.4,-5);
  \draw (-3.1,-4.5) -- (-2.9,-5);

  \draw (2.8,-4) -- (3.4,-5);
  \draw (3.1,-4.5) -- (2.9,-5);

  \draw (0,-2.5) ellipse (3cm and 1cm);
  \node at (0,-4){$C$};

  \node[vertex] (2) at (-2.6,-0.5){};
  \node[vertex] (3) at (2.6,-0.5){};
  \node[vertex] (5) at (3.1,-4.5){};

  \node[vertex] (c2) at (-2.5,-2.5){};
  \node[vertex] (c3) at (2.5,-2.5){};
  \node[vertex] (c5) at (2.7,-2.7){};

  \draw (2) -- (c2);
  \draw (3) -- (c3);
  \draw (5) -- (c5);
  
  \node[vertex] (v6) at (0,-2){};
  \node at (0.4,-1.8){$v_6$};

  \draw[very thick] (v6.center) -- ++(-0.4,-0.2) -- (c2);
  \draw (c2)-- ++(0.2,-0.4);
  \draw (v6.center)  ++(-1.7,-0.38) -- ++(0.3,-0.9);
  \draw (-1.59,-2.7) -- ++(0.8,-0.6);
  \draw[very thick] (v6.center) -- (c3) -- (c5);
  \draw (v6.center)  ++(1.4,-0.28) -- ++(-0.3,-0.9);
\end{tikzpicture}

\caption[Flat decomposition]{Construction of a flat decomposition. Thick lines in component $C$ are the new $H_6$.}
\label{fig:flat}
\end{figure}


\subsubsection{Implementation details}
Observe that this procedure leaves some freedom on how to pick the vertex
$v$ of $C$ from which we start the breadth-first search
and in which order to insert the vertices of $H_i$.
We evaluate several options. For the choice of the root vertex, 
   the following choices seem reasonable. 
\begin{enumerate}
\item Choose a vertex that is maximizing the number of neighbors
in some $Q_i$, to possibly
obtain a set $V(H_{q+1})$ that is smaller than when we choose a vertex far from
all $Q_i$.\label{flat:1}
\item Choose a vertex that has maximum degree in $C$, high degree vertices
should be low in the order. \label{flat:2}
\item Choose a vertex that has maximum degree in $C$, but only among those that
are adjacent to some $Q_i$.\label{flat:3}
\end{enumerate}
For the order of the vertices of $H_i$, we check the following options.
\begin{enumerate}
\item The breadth-first search and the depth-first search order from the root.
\item Sorted by degrees, non-increasingly.
\item Each of the above, but reversed.
\end{enumerate}
}

\subsection{Two known heuristics for a related graph parameter}

\subsubsection{Treedepth heuristic}

Since the `limit' of weak-coloring numbers is exactly the treedepth of a graph,
i.e., $\wcol_{\infty}(G) = \td(G)$, we consider simply computing a treedepth
decomposition and using an ordering derived from the decomposition. Our
algorithm of choice, developed by \Sanchez{} Villaamil~\cite{FernandoThesis} and
implemented by Oelschl\"agel~\cite{TDImplementation},%
\footnote{To the best of our knowledge, the cited thesis of Oelschl\"{a}gel~\cite{TDImplementation}
  is not available on the web. Our repository~\cite{our-repo} includes
    the source code by Oelschl\"{a}gel, while the thesis of
    \Sanchez{} Villaamil~\cite{FernandoThesis} describes the heuristic.}
recursively extracts
separators from the graph. To minimize the search space, only \emph{close}
separators are considered, that is, separators~$S$ that lie in the closed
neighborhood of some vertex. Furthermore, the algorithm makes use of the
following proposition.

\begin{proposition}[\cite{BerrySeparators}]
  If~$S \subseteq G$ is a minimal separator of a graph~$G$ and~$x \in S$, then
  for each connected component~$C$ of~$G - (S \cup N(x))$ the set~$N(C)$ is 
  a minimal separator of~$G$.
\end{proposition}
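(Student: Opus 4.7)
The plan is to use the standard characterization of minimal separators: a set $X \subseteq V(G)$ is a minimal separator of $G$ if and only if $G - X$ contains at least two \emph{full components} with respect to $X$, where a component $K$ of $G - X$ is full if $N(K) = X$. I will exhibit two full components of $G - N(C)$.

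First I would show that $C$ is itself a full component of $G - N(C)$. The key observation is that $N(C) \subseteq S \cup N(x)$: any vertex outside $C$ adjacent to $C$ must lie in $S \cup N(x)$, for otherwise $C$ would not be a component of $G - (S \cup N(x))$. Hence $C$ remains a connected component of $G - N(C)$, and by definition of $N(C)$ every vertex of $N(C)$ has a neighbor in $C$, giving the first full component.

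Next I would locate a second full component. Since $S$ is a minimal separator, $G - S$ has at least two full components with respect to $S$; pick one, call it $F$, that does not contain $C$. A short argument shows $F \cap N(C) = \emptyset$: a vertex $v \in F \cap N(C)$ would satisfy $v \notin S$ (as $F \subseteq V(G) \setminus S$) and would have a neighbor in $C$, but $C$ and $F$ lie in distinct components of $G - S$ and have no edges between them. Let $B$ be the component of $G - N(C)$ that contains $F$. It then remains to verify that every $v \in N(C)$ has a neighbor in $B$.

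The main obstacle is the case $v \in N(x) \setminus S$; for $v \in S$ the claim is immediate, since $F$ is a full component of $G - S$ and so $v$ has a neighbor in $F \subseteq B$. I plan to handle the remaining case by showing that $x$ itself lies in $B$, after which the adjacency of $v$ and $x$ finishes the argument. For $x \in B$, I first argue $x \notin N(C)$: since $x \in S$, every neighbor of $x$ is contained in $S \cup N(x)$ and so cannot lie in $C$. Second, $x$ has a neighbor in $F$ by fullness of $F$ with respect to $S$, and this neighbor lies outside $N(C)$; hence $x$ and $F$ belong to the same component of $G - N(C)$, namely $B$. Combining everything, both $C$ and $B$ are full components of $G - N(C)$ with respect to $N(C)$, so $N(C)$ is a minimal separator of $G$.
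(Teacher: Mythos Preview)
Your proof is correct. The paper itself does not prove this proposition; it is quoted from Berry, Bordat, and Cogis and used as a black box inside the treedepth heuristic, so there is no in-paper argument to compare against.

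On its own merits, your argument via the two-full-components characterization of minimal separators is the standard one and goes through cleanly: showing that $C$ is a full component of $G - N(C)$ is immediate from $N(C) \subseteq S \cup N(x)$, and for the second full component you correctly pick a full component $F$ of $G - S$ not containing $C$, observe $F \cap N(C) = \emptyset$, and then use $x \notin N(C)$ together with the edge from $x$ into $F$ to place $x$ in the component $B \supseteq F$ of $G - N(C)$, which handles the neighbors of $C$ lying in $N(x) \setminus S$. One tiny point worth making explicit is that $B \neq C$; this follows since $F$ is nonempty, $F \subseteq B$, and $F \cap C = \emptyset$.
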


\noindent 
Let $N_S(G)$ be the set of minimal separators that can be constructed from a
minimal separator $S$ by applying the above proposition, where~$S$ is an 
arbitrary minimal close separator. The algorithm then finds the separator
$S_0 \in N_S(G)$ which minimizes the size of the largest connected component
in~$G - S_0$ (the implementation supports other heuristics, but this heuristic
turned out to have an acceptable running time for the large instances).

\subsubsection{Treewidth heuristic}\label{subsec:min-fill-in}

A well-known approach to compute a treewidth decomposition of a graph
is to find a linear order of the vertices, an elimination order, of
possibly small maximum back-degree. From such an order it is easy to
construct a tree decomposition of width equal to the back-degree (see,
e.g.\@ \cite{BodlaenderKoster2010}). Let
$L\in \Pi(G)$ and let $v\in V(G)$. The
\emph{back-degree} of $v$ is defined as
\[ \bd(v,G) \coloneqq |\SReach_\infty[G,L,v]|\,. \]
There are a number of heuristics to produce good elimination
orders. We chose one that is simple, fast
and that gives rather good results for treewidth: the so-called
minimum-degree heuristic~\cite{BodlaenderKoster2010}.

The minimum-degree algorithm orders the vertices of the
graphs starting from the biggest vertex which is one with minimum
degree. Assume that we already ordered vertices with indices greater
than~$i$, we put on position $i$ a vertex with the least back-degree.

\subsection{New heuristics}
\subsubsection{Greedy approach based on weakly reachable sets}\label{ss:WReachLeft}

Since our goal is to construct an ordering minimizing the largest weakly reachable set,
we propose the following greedy approach.

The crucial observation for our heuristic is that
the set $\WReach_r[G,L,v]$ depends on the partition of vertices of $V(G) \setminus \{v\}$
into vertices smaller and larger than $v$ in $L$, depends on the relative order in $L$
of vertices smaller than $v$, but \emph{does not} depend on the relative order of vertices
larger than $v$.
Furthermore, if in a given ordering $L$ one moves a vertex $v$ to a later position in the order,
  then the set $\WReach_r[G,L,v]$ can only increase.

This motivates the following approach. We compute an order $L$ from left to right.
Having already decided on a set $V' \subseteq V(G)$ as the smallest $|V'|$ vertices
in the constructed order $L$ and an ordering~$L'$ of $V'$, we compute for every $v \in V(G) \setminus V'$
the size of the weakly reachable set of $v$, assuming that~$v$ is the next vertex in the ordering. 
At every step, we take a vertex with the largest set, breaking ties by degrees (i.e., preferring vertices of larger degrees).

We optimize the running time of this greedy algorithm as follows. 
For every vertex $v \in V(G) \setminus V'$, we maintain its current weakly reachable set assuming that $v$ is placed next in the ordering, called henceforth \emph{potential weakly reachable set of $v$}.
Observe that, whenever we decide to place some vertex $v_0 \in V(G) \setminus V'$ as the next vertex in the constructed ordering,
it affects the potential weakly reachable sets of the remaining vertices only in the following fashion: some of them may additionally include now $v_0$. 
The set of vertices of $V(G) \setminus V'$ that now start to contain $v_0$ in their potential weakly reachable sets
can be discovered by a single depth-$r$ breadth first search from~$v_0$ in $G-V'$.
Observe that the number of vertices visited by all the breadth first searches in the algorithm equals the total size of all constructed weakly reachable sets, and thus we expect it to be much smaller than quadratic in $n$.

\subsubsection{Greedy approach based on strongly reachable sets}\label{ss:SReachRight}

We also propose a modification of the previous heuristic that constructs the order from right to left (i.e., from vertices later in the order to smaller).

If we decide to go from right to left, we cannot compute potential weakly reachable sets as previously, since $\WReach_r[G,L,v]$ depends on the relative order of vertices smaller than $v$.
Thus, we use a related notion of strongly reachable sets, $\SReach_r[G,L,v]$. 
Here, the crucial observation is that $\SReach_r[G,L,v]$ only depends on the partition of $V(G) \setminus \{v\}$ into vertices smaller and larger than $v$ in $L$.

For $i\in\{0,\ldots,|V(G)|-1\}$, assume that we have already decided to place $V_i \subseteq V(G)$ as the $i = |V_i|$ largest vertices (thus $V_0 = \emptyset$). For $i+1$,  we compute,
for every $v \in V(G) \setminus V_i$, the strongly reachable set of $v$ (called henceforth the \emph{potential strongly reachable set})
if $v$ is placed next in the order, i.e., $\SReach[G,L',v]$ for some $L'$ with $V_i$ as the largest $i$ vertices and $v$ the next largest one. Here we use that the result is the same for every such $L'$. We choose a vertex $v_i$ with the smallest potential strongly reachable set, breaking ties by degrees (i.e., preferring vertices of smaller degree) and define $V_{i+1} = V_i \cup \{v_i\}$.

We optimize the running time of this greedy algorithm as follows. 
For every vertex $v \in V(G) \setminus V'$, we maintain its potential strongly reachable set as a balanced binary search tree (\texttt{set} from the STL library in C++).
Assume that a vertex $v_0$ is placed as next in the ordering, and let $S$ be its potential strongly reachable set.
The crucial observation is that only potential strongly reachable sets of vertices from $S$ change: first, they lose $v_0$ and second, they may gain new vertices by paths passing through $v_0$.
The latter can be discovered as follows.
We partition $S$ into layers $S_1,S_2,\ldots,S_r$, where $S_i \subseteq S$ are vertices $v$ whose shortest path from $v_0$ to $v$ via $V'$ is of length exactly $i$.
After putting $v_0$ into the constructed order, the potential strongly reachable set of $v \in S_i$ starts to include the whole $S_j$ for every $i + j \leq r$.
Our algorithm computes layers $S_i$ by breadth-first search and then iterates over all choices of indices $i,j$ with $i+j \leq r$ and inserts every $w \in S_j$ into the potential strongly reachable set of every $v \in S_i$.

\subsubsection{Sorting by degrees and other simple heuristics}\label{ss:simple}

We also included in the comparison the following naive heuristics.
\begin{itemize}
\item
For $r=1$ an optimal order is a degeneracy order, which can be easily computed.
We can check if this order produces reasonable results for higher values of $r$ as well.
\item
Intuitively, it makes sense to sort vertices by descending degree (ties are broken arbitrarily)
because from vertices of high degree more vertices can be reached in one step.
This intuition is further supported by one popular network model, the \emph{Chung--Lu} random
graphs which sample graphs with a fixed degree distribution and successfully replicate several
statistics exhibited by real-world networks~\cite{chung2002average,chung2002connected}. In this
model, vertices are assigned weights (corresponding to their expected degree) and edges are sampled
independently but biased according to the endpoints weights. Under this model, vertices of the same
degree are exchangeable and the one ordering we can choose to minimize the number of $r$-reachable
vertices is simply the descending degree ordering. 
\item
A simple idea of generalizing the above heuristics to bigger values of $r$
is to apply them to the $r$th power $G^r$ of $G$, i.e., $G^r$ is defined as the graph
with
$V(G^r) = V(G)$ and $uv \in E(G^r) \Leftrightarrow \mathrm{dist}_G(u, v) \le r$.
\item As a baseline, we also included random ordering of vertices.
\end{itemize}

\subsection{Local search}
In addition to all these approaches we can try to improve their results
by local search
, a technique where we make small changes
to a candidate solution. 
%
We applied the following local changes and tested whether they caused
improvements to the current order $L$. 
\begin{itemize}
\item Take any vertex $v$ that has biggest $\WReach_r[G,L,v]$ and swap it with a random vertex that is smaller with respect to $L$.
\item Take any vertex $v$ that has biggest $\WReach_r[G,L,v]$ and swap it with its direct predecessor $u$ in $L$.
\end{itemize}

Both heuristics try to place a vertex with many weakly reachable vertices to the
left of them and thus to make them non-weakly reachable.
The advantage of the second rule is that the only possible changes are that
$\WReach_r[G,L,v]$ loses $u$ (if $u$ was there) and that $\WReach_r[G,L,u]$ may
obtain $v$. So $\WReach_r[G,L,v]$ is trivial to recompute and the only
computationally heavy update is for the new $\WReach_r[G,L,u]$.
For the first rule, recomputing $\WReach$ sets is more expensive.
However, the disadvantage of the second rule is that it does not lead to further
improvements quickly, hence applications of only the first rule
give better results than applications of the second rule only. 
In our implementation we did a few optimizations
in order to improve the results of the second rule, but we refrain from describing 
them in detail. The
final algorithm conducting local search firstly performs a round of applications of the first rule and when
they no longer improve the results, it performs a round of applications of the second rule. This combination turned out
to be empirically most effective.


\section{Uniform quasi-wideness}\label{sec:uqw}

Intuitively, a class of graphs is \emph{wide} if for every graph~$G$
from the class, every radius $r\in\N$ and every large subset $A\subseteq V(G)$ 
of vertices one can find a large subset $B\subseteq A$ of vertices 
which are pairwise at distance greater than $r$ (recall that such a 
subset is called \textit{$r$-independent}). The notion of 
uniform quasi-wideness allows to additionally delete a small number of
vertices to make $B$ $r$-independent. The following definition formalizes 
the meaning of ``large'' and ``small''. 

\begin{definition}
A class $\Cc$ of graphs is \emph{uniformly quasi-wide} if for every
$m\in \N$ and every $r\in \N$ there exist numbers $N(m,r)$ and $s(r)$ such 
that the following holds. 
\medskip
\begin{quotation}
\noindent \textit{Let $G\in\Cc$ and let $A\subseteq V(G)$ with $|A|\geq N(m,r)$. Then
there exists a set $S\subseteq V(G)$ with $|S|\leq s(r)$ and a set $B\subseteq 
A\setminus S$ of size at least $m$ 
such that for all distinct $u,v\in B$ we have $\mathrm{dist}_{G-S}(u,v)>r$.} 
\end{quotation}
\end{definition}

Uniform quasi-wideness was introduced by Dawar in~\cite{Dawar10}
and it was proved by Ne\v{s}et\v{r}il and Ossona de Mendez in~\cite{NesetrilM10} 
that uniform quasi-wideness is equivalent to nowhere denseness. Very recently, 
it was shown that the function $N$ in the above definition can be chosen to 
be polynomial in $m$~\cite{KreutzerRS17,Pil2017number}. A single exponential
dependency was earlier established for classes of bounded expansion~\cite{KreutzerPRS16}. We are going to evaluate the algorithms derived from
the proofs in~\cite{KreutzerPRS16,Pil2017number}, as well as a new algorithm
that is streamlined for bounded expansion classes and also achieves polynomial
bounds in $m$. We discuss these algorithms in more detail next.
\onlyfull{We will prove in Section~\ref{sec:lb} that the bounds of our new algorithm are close to optimal.}


\subsection{Distance trees}
First, we describe the algorithm that was introduced in~\cite{Pil2017number}.
We do so in
sufficient detail so that we can subsequently describe three of its variants which
we have implemented and included in our experimental evaluation.

Recall from Section~\ref{sec:prelims}
that a minor model of a graph $H$ in a graph~$G$ is a family $(I_u)_{u\in V(H)}$ of pairwise 
vertex-disjoint connected subgraphs of $G$
such that $u_1u_2 \in E(H)$ implies that there is $u_1'u_2' \in E(G)$ with $u_i' \in V(I_{u_i})$ for $i=1,2$.
A \emph{depth-$r$ minor} is a minor that admits a minor model where every set $I_u$ is of radius at most $r$.

\subsubsection{Description of the algorithm of Pilipczuk, Siebertz, and Toru\'{n}czyk}
On the theory side, the work of Pilipczuk, Siebertz, and Toru\'{n}czyk~\cite{Pil2017number} proved the following bounds.

\begin{theorem}[Theorem~1.5 of~\cite{Pil2017number}]
For all $r, t \in \mathbb{N}$ there is a polynomial $N$ with $N(m) = \Oh_{r,t}(m^{(4t+1)^{2rt}})$
such that the following holds.
Let $G$ be a graph without a $K_t$ as a depth-$\lfloor 9r/2 \rfloor$ shallow minor
and let $A \subseteq V(G)$ be a vertex subset of size at least $N(m)$ for a given $m$.
Then there exists a set $S \subseteq V(G)$ of size $|S| \leq t$ and a set $B \subseteq A \setminus S$
of size $|B| \geq m$ which is $r$-independent in $G \setminus S$. Moreover, given $G$ and $A$,
   such sets $S$ and $B$ can be computed in time $\Oh_{r,t}(|A| \cdot |E(G)|)$. 
\end{theorem}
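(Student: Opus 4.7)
The plan is to prove the theorem by iteratively removing ``hub'' vertices into $S$, using extraction arguments to isolate structured sub-collections of shortest paths between $A$-vertices, until either a large $r$-independent set $B$ emerges or a $K_t$ minor at depth $\lfloor 9r/2 \rfloor$ is forced. The polynomial exponent $(4t+1)^{2rt}$ reflects roughly $2rt$ sub-steps of extraction, each of which shrinks the current working set $A'$ to a subset of size at least $|A'|^{1/(4t+1)}$, while the depth bound $\lfloor 9r/2 \rfloor$ reflects that the extracted minor model will ultimately glue paths of length up to $r$ with an intermediate rerouting of length about $r/2$.

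First, for each ordered pair $a, b \in A$ with $\mathrm{dist}_G(a,b) \leq r$, I would fix a shortest $a$--$b$ path $P_{a,b}$. These paths are classified by a bounded number of combinatorial ``types'' determined, for example, by the sequence of BFS-layer indices of their internal vertices with respect to a BFS from a fixed root in $A$. By iterated pigeonhole or Ramsey-style extractions on these types, one isolates a subset $A_1 \subseteq A$ of polynomial size on which all designated paths are of a common type. The key combinatorial claim is then that for such a uniformly-structured $A_1$, either no two vertices of $A_1$ are within distance $r$ in $G$ (so $A_1$ already witnesses $r$-independence with $S = \emptyset$), or polynomially many of the designated paths share a common interior vertex $v$, the \emph{hub}.

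In the hub case I add $v$ to $S$, restrict attention to those vertices of $A_1$ whose designated short-partner paths pass through $v$, and recurse on this sub-collection inside $G - \{v\}$. After at most $t$ hubs $v_1, \ldots, v_t$ have accumulated in $S$, termination is forced: otherwise, by concatenating the uniformly-structured paths discovered in successive rounds around the hubs, one can assemble $t$ pairwise vertex-disjoint connected subgraphs of $G$, each of radius at most $\lfloor 9r/4 \rfloor$ around its hub, that are pairwise connected in $G$ --- that is, a $K_t$ as a depth-$\lfloor 9r/2 \rfloor$ minor, contradicting the hypothesis. Hence termination yields $|S| \leq t$ and $|B| \geq m$.

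The main obstacle is the bookkeeping of the minor depth across the $t$ rounds: when a new hub is added and later-round paths are required to avoid all previously chosen hubs, each such avoidance may lengthen a path, and one must show that the accumulated inflation stays within $\lfloor 9r/2 \rfloor$ while simultaneously preserving vertex-disjointness of the $t$ branch sets. The polynomial bound $N(m) = \Oh_{r,t}(m^{(4t+1)^{2rt}})$ then follows by propagating the $(4t+1)$-factor loss over the $\Oh(rt)$ sub-steps, and the running time $\Oh_{r,t}(|A| \cdot |E(G)|)$ is dominated by a constant number of BFS computations per level.
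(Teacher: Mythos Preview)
Your approach differs substantially from the one the paper follows (which is the algorithm of~\cite{Pil2017number}). There, for the base case $r=2$, one first uses Ramsey to pass to a large independent set $A$, then builds a binary \emph{distance tree}: vertices of $A$ are inserted one by one, branching left at a node if the new vertex is at distance $\leq 2$ from that node's label and right otherwise. The excluded shallow $K_t$ minor forces a long root-to-leaf path in this tree; a long monochromatic (all-left or all-right) segment of that path yields either a large set at pairwise distance $>2$ (done, with $S=\emptyset$) or a large set $X$ at pairwise distance exactly~$2$. In the second case a separate shallow-minor argument produces a single vertex $w$ adjacent to many elements of $X$; one puts $w$ into $S$ and recurses on $N(w)\cap X$. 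General $r$ is handled by an outer loop that contracts disjoint half-radius balls and reduces to the cases $r\in\{1,2\}$. The device doing the real work is thus the distance tree, not any classification of shortest paths.

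Your proposal replaces this with a direct ``path-type plus hub'' extraction, and the genuine gap is your \emph{key combinatorial claim}: that once the designated shortest paths among $A_1$ share a common BFS-layer profile, either $A_1$ is $r$-independent or polynomially many of these paths share a common interior vertex. No argument is offered, and as stated the claim is false in general: take $A_1$ sitting in a single BFS layer with many pairwise vertex-disjoint length-$r$ paths between its elements, all with identical layer profiles --- there is no hub, yet $A_1$ is far from $r$-independent. The distance-tree route sidesteps this precisely by first forcing pairwise distance \emph{exactly} $2$ on a large subset (so a hub, if it exists, is a common \emph{neighbor}), and then invoking the excluded minor to actually produce that common neighbor. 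Your sketch also defers the $\lfloor 9r/2\rfloor$ depth and disjointness bookkeeping to an acknowledged ``main obstacle'', but the hub-existence step already fails before one gets there.
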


For simplicity, we focus on the case $r=2$. First, observe that every graph from
a nowhere dense class contains large independent sets. By definition of a nowhere
dense class, some complete graph $K_t$ is excluded as a depth-$0$ minor, that
is, simply as a subgraph. Hence, Ramsey's Theorem immediately implies that 
if we consider any set $A\subseteq V(G)$ of size at least 
$\binom{t+m-2}{m-1}$, then there exists a set $B\subseteq A$ of size $m$ which
is independent (without deleting any elements). 
Furthermore, the proof of Ramsey's Theorem yielding this bound
is constructive and can easily be implemented. The difficult part is now to find
in a large independent set a large $2$-independent set, possibly after deleting 
a few elements (consider a family of stars to see that deletion may be necessary).

Assume now that $A$ is a large independent set. 
The idea is to arrange the elements of $A$ in a binary tree~$T$, which
we call a \emph{distance tree},
and prove that this tree contains a long path. From this path the set $B$ is extracted. 

\newcommand{\dau}{\mathrm{D}}
\newcommand{\son}{\mathrm{S}}
	
We identify the nodes of $T$ with words over the alphabet $\{0,1\}$, 
where $\epsilon$ corresponds to the root, and where for a word $w$ the
word $w0$ is its left and the word $w1$ is its right successor, respectively. Fix some
enumeration of the set $A$. We define $T$ by processing the elements 
of $A$ sequentially according to the enumeration. 
We start with the tree that has its root labeled with the first element of $A$. 
For each remaining element $a\in A$ we
execute the following procedure which results in adding a node with label $a$ 
to $T$. 
  
When processing the vertex $a$, do the following. Start with~$w$ being the empty word. While~$w$ is a node of $T$, repeat the following step: 
if the distance from $a$ to the vertex $b$ which is at the position corresponding
to $w$ in $T$ is at most~$2$, replace $w$ by $w0$, otherwise, 
replace~$w$ by $w1$. Once $w$ does not correspond to a node of $T$, 
extend $T$ by adding the node corresponding to $w$ and label it with~$a$. 
In this way, we have processed the element $a$, and now proceed to the next 
element of $A$ until all elements are processed. This completes the construction of $T$.
Thus, $T$ is a tree labeled with vertices of $A$, and every vertex of $A$ 
appears exactly once in~$T$.

Now, based on the fact that some complete graph $K_t$ is excluded as a depth-$2$
minor of $G$, it is shown that~$T$ contains a long path. This path either has
many left branches or many right branches. Take a subpath that has only left
branches or only right branches. Such a path corresponds to a set~$X$ such that
all elements have pairwise distance $2$, or all elements have pairwise distance
greater than $2$, that is, to a $2$-independent set. In the second case, we have
found the set $B$ that we are looking for. In the other case, we proceed to show
that there must exist an element $w\in V(G)$ that is adjacent to many elements
of $X$, i.e., $N(w)\cap X$ is large. We add the vertex $w$ to the set $S$ of elements
to delete and repeat the above tree-classification procedure with the set
$A'=N(w)\cap X$. It is shown that this process must stop after at most $t$ steps and
yields a set $B$ which is $2$-independent in $G-S$. 

The general case reduces to the case $r=1$ or $r=2$ if instead of starting with
an independent set~$A$ we start with an $i$-independent set $A_i$ and contract the
disjoint $i/2$ or $(i+1)/2$-neighborhoods of the elements of~$A_i$, respectively, to single vertices. Then 
one iteratively finds $i$-independent sets $A_1,A_2,\ldots, A_r$ for larger and larger radii. 

\subsubsection{Implementation details}

We have implemented three variants of the above method, which we
denote \tree, \treeshrink and \ldit. In all variants, we get a graph $G$, a vertex 
subset $A\subseteq V(G)$ and $r\in \N$ as input. We do not have the number
$m$ as input but we aim to find an $r$-independent
subset $B\subseteq A$ which is as large as possible while deleting as few
elements as possible. 

For the odd cases (which reduce to $r=1$
in the description above), in each variant we use a simple heuristic for finding
independent sets described in Section~\ref{sec:isnaive}. 

For more interesting even cases (which reduce to $r=2$ in the description 
above), \treeshrink computes a set of candidate solutions $(C,S)$ . Here, 
$C$ is a set which corresponds to a long path in the distance tree and 
$S$ is the set of vertices removed so far (for this set $C$). 
At every step we compute one candidate solution $(C, S)$,
remove a vertex $w$, i.e., move it to $S$, which has largest intersection
$|N(w)\cap A|$ and continue the process with $N(w)\cap A$ until $A$
becomes too small.
In the end, we output the best solution from the pool of 
collected solutions.

In the version denoted by \tree, we modify \treeshrink as follows. We let
$C$ be a~candidate for~a large $2$-independent set, which, however, 
we do not choose as a subset of the currently handled set $A$, but of the
original input set $A$. That is, we re-classify all distances of elements of 
the initial set $A$ in a distance tree 
with vertices $S$ that were deleted in later steps, to draw
the candidate $2$-independent set from a larger pool of vertices. 


Finally, in the \ldit version (least degree iterated) we do not find $2$-independent
sets based on the distance tree, but rather in a simple greedy manner as an independent
set in the graph $(G-S)^2[A]$. 


\subsection{Weak coloring numbers and uniform quasi-wideness}

A work of Kreutzer et al.~\cite{KreutzerPRS16} bound weak coloring numbers
with uniform quasi-wideness in graphs of bounded expansion. 
We include their approach in our comparison, as well as a new arguably simpler
algorithm inspired by their approach.

\subsubsection{Description of the algorithm by Kreutzer, Pilipczuk, Rabinovich, and Siebertz}
The following statement summarizes the theoretical bounds of the work of Kreutzer, Pilipczuk, Rabinovich, and Siebertz~\cite{KreutzerPRS16}.

\begin{theorem}[Theorem~4 of~\cite{KreutzerPRS16}]
Let $G$ be a graph and let $r, m \in \mathbb{N}$. Let $c \in \mathbb{N}$ be such that
$\wcol_r(G) \leq c$ and let $A \subseteq V(G)$ be a set of size at least $(c+1)\cdot 2^m$. 
Then there exists a set $S$ of size at most $c(c-1)$ and a set $B \subseteq A \setminus S$ 
of size at least $m$ which is $r$-independent in $G-S$.
\end{theorem}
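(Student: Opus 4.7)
The plan is to prove the statement by induction on $c$, exploiting the ordering $L$ witnessing $\wcol_r(G) \leq c$. Write $\pi(a) := \WReach_r[G,L,a]$ and for a set $T \subseteq V(G)$ write $\pi_T(a) := \WReach_r[G-T,L,a]$; note $\pi_T(a) \subseteq \pi(a) \setminus T$ since every path in $G - T$ is a path in $G$. The central reachability observation is that for any $a, a' \in V(G) \setminus T$ at distance $\leq r$ in $G - T$, the $L$-minimum vertex on a length-$\leq r$ connecting path belongs to $\pi_T(a) \cap \pi_T(a')$. Contrapositively, $\pi_T(a) \cap \pi_T(a') = \emptyset$ implies $\mathrm{dist}_{G-T}(a, a') > r$. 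So our task reduces to finding $m$ elements of $A$ whose full profiles $\pi(\cdot)$ agree only inside $S$.

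The base case $c = 1$ is immediate: every $\pi(a)$ equals $\{a\}$, so profiles of distinct vertices of $A$ are already pairwise disjoint, $A$ is $r$-independent in $G$, and $S = \emptyset$ together with any $m$-subset $B \subseteq A$ works (using $|A| \geq 2 \cdot 2^m \geq m$).

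For the inductive step with $c \geq 2$, I would split on how concentrated the profiles are. For $w \in V(G)$, set $A_w := \{a \in A : w \in \pi(a)\}$. In the \emph{hub case}, some $w$ satisfies $|A_w| \geq c \cdot 2^m$. I add $w$ to $S$ and apply the inductive hypothesis to $(G-w, A_w)$ with parameters $c-1$ and $m$: the ordering $L$ restricted to $V(G) \setminus \{w\}$ still certifies $\wcol_r(G-w) \leq c-1$ on $A_w$, because for every $a \in A_w$ we have $\pi_{\{w\}}(a) \subseteq \pi(a) \setminus \{w\}$, so $|\pi_{\{w\}}(a)| \leq c - 1$. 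The hypothesis yields $S' \subseteq V(G) \setminus \{w\}$ with $|S'| \leq (c-1)(c-2)$ and $B$ of size $\geq m$ that is $r$-independent in $G - w - S'$; taking $S = \{w\} \cup S'$ gives $|S| \leq 1 + (c-1)(c-2) = c^2 - 3c + 3 \leq c(c-1)$ for $c \geq 2$, which is exactly the slack the bound $c(c-1)$ provides.

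The main obstacle is the complementary \emph{spread case}, where $|A_w| < c \cdot 2^m$ for every $w$. Here no single deletion reduces $c$ for a large sub-population, so one must produce $m$ elements with pairwise disjoint profiles (giving $S = \emptyset$) directly from $A$. A naive greedy that picks $a_i$ and discards everything hitting $\pi(a_i)$ costs up to $c \cdot (c \cdot 2^m) = c^2 \cdot 2^m$ elements per step, which does not fit inside the budget $(c+1) \cdot 2^m$ for a single halving. The resolution is to sharpen the case split: either work with a refined spread condition (say, bounding $|A_w|$ by a function of $|A|$ rather than $c \cdot 2^m$), or interleave the hub-reduction with halving so that after each hub step the ``residual'' $c$ has dropped enough to make the greedy removal cheap. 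This careful tuning is what pins down the exact constants $(c+1)$ in $|A| \geq (c+1)\cdot 2^m$ and $c(c-1)$ in the bound on $|S|$; the verification, once the split is chosen, is a straightforward but bookkeeping-heavy induction closing both cases against the formulas above.
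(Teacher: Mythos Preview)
Your plan has two issues, one minor and one substantive.

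The minor one: in the hub case you want to apply the inductive hypothesis to $(G-w,A_w)$ with parameter $c-1$, but the hypothesis as stated is $\wcol_r(G-w)\le c-1$, which is a bound on \emph{all} vertices of $G-w$, not just those in $A_w$. Deleting a single vertex need not drop the global weak $r$-colouring number. You correctly observe that $|\pi_{\{w\}}(a)|\le c-1$ for $a\in A_w$, so this is repairable by strengthening the inductive statement to assume only $|\WReach_r[G,L,a]|\le c$ for $a\in A$; but you should say so explicitly.

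The substantive gap is the spread case. You acknowledge that the naive greedy costs up to $c^2\cdot 2^m$ per step against a budget of only $(c+1)\cdot 2^m$, and then offer two vague resolutions (``refine the threshold'' or ``interleave with halving'') without carrying either out. I do not see how to make a hub/spread induction land on the stated constants: any natural split (e.g.\ threshold $|A_w|\ge |A|/(2c)$, or greedy removal in the spread case) produces a recursion $f(c)\ge \Theta(c)\cdot f(c-1)$, forcing $f(c)$ to be at least factorial in $c$, far from $c+1$. So this is not just bookkeeping left to the reader; the mechanism that yields the constant $(c+1)$ is genuinely missing.

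The paper's argument is structurally different and supplies exactly that mechanism. It is a two-phase procedure rather than an induction on $c$. First, form the auxiliary graph $H$ on $V(G)$ with $uv\in E(H)$ iff $u\in\WReach_r[G,L,v]$ or $v\in\WReach_r[G,L,u]$; the order $L$ witnesses that $H$ is $c$-degenerate, so from $|A|\ge (c+1)\cdot 2^m$ one greedily extracts an $H$-independent set $I\subseteq A$ of size $2^m$. This is where the factor $(c+1)$ comes from. Second, process $I$ in decreasing $L$-order in Ramsey fashion: for the current largest $v$, if more than half of the remaining $I$ is within distance $r$ of $v$ in $G-S$, add $\WReach_r[G,L,v]\setminus\{v\}$ (at most $c-1$ vertices) to $S$, put $v$ into $B$, and keep the close half; otherwise put $v$ into $B$ and keep the far half. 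The $H$-independence of $I$ is what bounds the number of deletion rounds by $c$, giving $|S|\le c(c-1)$, while the halving guarantees $|B|\ge m$. Your central reachability observation is correct and is used here too, but the degeneracy step that produces $I$ is the missing idea.
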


Let $G$ be a graph, $A\subseteq V(G)$ and $m,r\in \N$ be given. 
First, fix some order $L\in \Pi(G)$ such that 
$|\WReach_r[G,L,v]|\leq c$ for every $v\in V(G)$ (for some constant $c$). 
Let $H$ be the graph with vertex set $V(G)$, where we put an edge $uv\in E(H)$ 
if and only if $u\in \WReach_r[G,L,v]$ or $v\in \WReach_r[G,L,u]$. 
Then $L$ certifies that $H$ is $c$-degenerate, and hence, assuming that
$|A|\geq (c+1)\cdot 2^m$, we can greedily
find an independent set $I\subseteq A$ of size $2^m$ in $H$. 
By the definition of the graph $H$, we have that $\WReach_r[G, L, v]\cap I=\{v\}$
for each $v\in I$. Now observe that for $v\in I$, deleting $\WReach_r[G, L, v] \setminus \{v\}$
from $G$ leaves $v$ at a~distance greater than $r$ (in $G - (\WReach_r[G, L, v] \setminus \{v\}))$ from all the other vertices of $I$.

Based on this observation, one follows the simple approach also used to prove
Ramsey's Theorem with exponential bounds. For each vertex $v$ of $I$ 
(in decreasing order, starting with the largest vertex with respect to $L$), 
we test whether $v$ is connected by a~path of length at most $r$ to more than 
half of the remaining vertices of $I$. If this is the case, we delete the set $\WReach_r[G,L,v]$ from $G$ (i.e., add it to $S$) and add the vertex $v$ to the 
set $B$. We continue with the subset of $I$ that had such a~connection to $v$
(which is, however, now separated by the deletion of $S$). Otherwise, $v$ is not
connected to more than half of the remaining vertices of $I$, in which case we simply
add $v$ to $B$ and do not delete anything. In this case, we continue the construction
with those vertices of $I$ that are not connected to $v$. It is proved that the first
case can happen at most $\wcol_r(G)\leq c$ many times, hence, in total we delete at most $c^2$ vertices and arrive at a~set $B$ with $m$ vertices that are pairwise at
distance greater than $r$ in $G-S$. 


We have implemented exactly the algorithm outlined above. 
We denote it by \mfcs. 

\subsubsection{A new algorithm}\label{ss:uqw-tgv}

\iflipics{\input{short-uqw-tgv-theory}}{Motivated by the rather conservative character of the algorithm of~\cite{KreutzerPRS16}
described above, we propose here a new algorithm (albeit inspired by~\cite{KreutzerPRS16}). Furthermore, in Section~\ref{sec:lb} we show an almost tight lower bound for the guarantees
of this algorithm in graphs excluding a fixed minor.

More formally, we show the following theorem.
\begin{theorem}\label{thm:uqw-tgv}
Assume we are given a graph $G$, a set $A \subseteq V(G)$,
integers $r \geq 1$ and $m \geq 2$, and an ordering~$\order$ of $V(G)$ with
$c = \max_{v \in V(G)} \left|\operatorname{WReach}_r[G, L, v]\right|$.
Furthermore, assume that
$|A| \geq 4\cdot (2cm)^{c+1}$. 
Then in polynomial time, one can compute sets $S \subseteq V(G)$
and $B \subseteq A \setminus S$ such that $|S| \leq c$, $|B| \geq m$, and $B$ 
is $r$-independent in $G-S$.
\end{theorem}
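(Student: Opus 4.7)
The plan is to combine two classical tools: greedy extraction of a large independent set in a low-degeneracy graph, and the sunflower (or $\Delta$-system) lemma applied to the family of weakly reachable sets. Concretely, define the auxiliary graph $H$ on $V(G)$ by putting $\{u,v\} \in E(H)$ whenever $u\in \WReach_r[G,L,v]$ or $v\in \WReach_r[G,L,u]$. The ordering $L$ witnesses that every vertex of $H$ has at most $c-1$ smaller neighbors, so $H$ is $(c-1)$-degenerate and we can greedily extract an independent set $I \subseteq A$ in $H$ of size at least $|A|/c$.

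Next, apply the sunflower lemma to the family $\mathcal{F} = \{\WReach_r[G,L,v] : v \in I\}$, whose members all have size at most $c$. Provided $|I| \geq c! \cdot (m-1)^c + 1$, this yields a sunflower of $m$ petals: a set $B \subseteq I$ with $|B|=m$ and a common core $S$ such that $\WReach_r[G,L,u] \cap \WReach_r[G,L,v] = S$ for all distinct $u,v \in B$. Because $S$ is contained in every $\WReach_r[G,L,v]$, we get $|S|\le c$. Moreover, since $B$ is independent in $H$, for distinct $u,v\in B$ we have $u\notin \WReach_r[G,L,v]$, hence $u\notin S$, i.e.\ $B\cap S = \emptyset$.

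The core step is the $r$-independence claim for $B$ in $G-S$. Given distinct $u,v \in B$, suppose for contradiction that a path $P$ of length at most $r$ between $u$ and $v$ exists in $G-S$. Let $w$ be the $L$-minimum vertex of $P$. The subpaths from $w$ to $u$ and from $w$ to $v$ witness that $w\in \WReach_r[G,L,u]$ and $w\in \WReach_r[G,L,v]$, so $w\in S$ by the sunflower property. Since $B\cap S = \emptyset$, $w$ is distinct from $u$ and $v$, hence it is an interior vertex of $P$, contradicting $P \subseteq G-S$. Therefore $B$ is $r$-independent in $G-S$.

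Finally, a routine arithmetic check shows that the hypothesis $|A| \geq 4\cdot(2cm)^{c+1}$ implies $|I| \geq |A|/c \geq c!\cdot(m-1)^c+1$, so the sunflower step applies. The algorithmic content is straightforward: both the degeneracy-based independent set and the sunflower extraction are implementable in polynomial time. The only obstacle I foresee is bookkeeping the constants so as to arrive exactly at the stated form $4(2cm)^{c+1}$ (as opposed to a slightly different polynomial), and verifying that the proof handles the edge cases $m=2$ and $c=1$ correctly; no further structural insight beyond the combination of the two standard tools above seems to be needed.
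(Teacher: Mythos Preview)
Your argument is correct, and it is genuinely different from the paper's proof. Both proofs exploit the same basic fact---that any short $u$--$v$ path in $G-S$ whose $L$-minimum lies in $\WReach_r[G,L,u]\cap\WReach_r[G,L,v]$ forces that minimum into $S$---but they reach the configuration in different ways. The paper does \emph{not} first extract an $H$-independent set and then analyse the family $\{\WReach_r[G,L,v]\}$ combinatorially; instead it runs an explicit iterative procedure that alternates between \emph{growth steps} (move into $B$ a vertex of $A_i$ with at most $|A_i|/m$ conflicts, and discard the conflicts) and \emph{deletion steps} (move into $S$ a vertex $z$ that lies in the weakly reachable sets of a $1/(cm)$-fraction of $A_i$, and restrict $A_i$ to those vertices). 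The invariant $S_i\subseteq\WReach_r[G,L,v]$ for all surviving $v$ caps the number of deletion steps at $c$, and a product estimate on $|A_i|$ forces at least $m$ growth steps. Your route is the cleaner existence proof: greedy $(c-1)$-degeneracy extraction of $I\subseteq A$ in the auxiliary graph $H$, followed by Erd\H{o}s--Rado on $\{\WReach_r[G,L,v]:v\in I\}$ to get the sunflower core $S$ and petals $B$; the constants work out (indeed you even get $|S|\le c-1$ since each $v\in B$ lies in its own $\WReach$-set but not in $S$). What the paper's approach buys is an \emph{implementable, tunable} algorithm whose steps (e.g.\ the $|A_i|/m$ threshold, or which vertex to delete) can be varied heuristically---which is exactly what the surrounding section exploits when describing the implemented variants \texttt{new1}, \texttt{new2}, \texttt{new\_ld}. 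Your sunflower-based proof is shorter and more modular but less naturally adjustable in that sense.
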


\begin{proof}
The algorithm iteratively constructs sets
$A = A_0 \supseteq A_1 \supseteq A_2 \supseteq \ldots$,
$\emptyset = S_0 \subseteq S_1 \subseteq S_2 \subseteq \ldots$, and
$\emptyset = B_0 \subseteq B_1 \subseteq B_2 \subseteq \ldots$,
maintaining the following invariants in every step $i$:
$B_i \subseteq A \setminus S_i$,
the set $B_i$ is an $r$-independent set in $G-S_i$, and every vertex of 
$A_i$ is within distance greater than $r$ from every vertex in $B_i$
in the graph $G-S_i$.

At step $i$, given $A_i$, $S_i$, and $B_i$, the algorithm proceeds as follows.
\begin{description}
\item[(stopping condition)] If $|A_i| \leq 2cm$, then stop and return $S = S_i$ and $B = B_i$.
\item[(growth step)] If $|A_i| > 2cm$ and
there exists $v \in A_i$ such that at most
$|A_i|/m$ vertices of $A_i$ are within distance at most $r$ from 
$v$ in $G-S_i$ (i.e., $|N_r^{G-S_i}[v] \cap A_i| \leq |A_i|/m$),
then move $v$ to $B_{i+1}$ and delete the conflicting vertices from $A_i$,
     that is set
\begin{align*}
A_{i+1} &= A_i \setminus N_r^{G-S_i}[v] \\
S_{i+1} &= S_i \\
B_{i+1} &= B_i \cup \{v\}.
\end{align*}
\item[(deletion step)] Otherwise, pick a vertex $z \in V(G) \setminus S_i$
that appears in a maximum number of weakly reachable sets of vertices of $A_i$.
That is, pick $z \in V(G) \setminus S_i$ maximizing the quantity
\[|\{v \in A_i ~|~ z \in \WReach_r[G, L, v]\}|.\] 
Insert $z$ into $S_{i+1}$ and restrict $A_i$ to vertices containing $z$
in their weak reachable sets. More formally,
\begin{align*}
A_{i+1} &= \{v \in A_i ~|~ z \in \operatorname{WReach}_r[G, L, v]\} \setminus \{z\} \\
S_{i+1} &= S_i \cup \{z\} \\
B_{i+1} &= B_i.
\end{align*}
\end{description}

Let us now analyze the algorithm. The fact that in the growth step we remove
from $A_{i+1}$ the vertices of~$A_i$ that are within distance at most $r$
from $v$ preserves the invariant that the distance between $A_i$ and~$B_i$
in $G-S_i$ is greater than $r$. This invariant, in turn, proves that 
$B_i$ is an $r$-independent set in $G-S_i$. It remains to show the bounds
on the sizes of $S$ and $B$. To this end, we show the following two claims.
\begin{claim}\label{cl:tgv-Z-bound}
At every step $i$, for every $z \in S_i$ and $v \in A_i$,
   we have that $z \in \WReach_r[G, L, v]$.
\end{claim}
\begin{proof}
The claim follows directly from the fact that in the deletion step,
    we restrict $A_{i+1}$ to be the set of those vertices
    of $A_i$ that have $z$ in their weak reachability set.
\cqed\end{proof}
\begin{claim}\label{cl:tgv-B-bound}
At every step $i$, if there is no vertex $v \in A_i$
with $|N_r^{G-S_i}[v] \cap A_i| \leq |A_i|/m$, 
then there exists $z \in V(G) \setminus S_i$ 
with at least $|A_i|/(cm)$ vertices $v \in A_i$\
satisfying $z \in \WReach_r[G, L, v]$.
\end{claim}
\begin{proof}
Let $v \in A_i$ be the least vertex of $A_i$ in the ordering $L$.
Since the growth step is not applicable, we have
that the set $X \coloneqq N_r^{G-S_i}[v] \cap A_i$ is of size
larger than $|A_i|/m$.
For every $x \in X$, fix a path $P_x$ of length at most~$r$ between
$x$ and $v$ in $G-S_i$, and let $z_x$ be the $L$-minimal vertex on this path.
The subpath of $P_x$ from $z_x$ to~$v$ shows that
$z_x \in \WReach_r[G, L, v]$
and the subpath of $P_x$ from $z_x$ to $x$ shows that
$z_x \in \WReach_r[G, L, x]$.
Since $|\WReach_r[G, L, v]| \leq c$, while $|X| > |A_i|/m$,
there exists $z \in \WReach_r[G, L, v]$ with 
$$|\{v \in A_i ~|~ z \in \WReach_r[G, L, v]\}| \geq 
\frac{|X|}{|\WReach_r[G, L, v]|} \geq \frac{|A_i|}{cm}.$$
This finishes the proof of the claim.
\cqed\end{proof}
Consequently, when the algorithm executes the deletion step,
  we have $|A_{i+1}| \geq |A_i|/(cm)-1$ (the $-1$ comes from the case $z \in A_i$).

In particular, we have that the last step of the algorithm is the growth step:
the deletion step executes only if $|A_i| > 2cm$, and then 
$|A_{i+1}| \geq |A_i|/(cm)-1 > 1$. 
Let $v$ be the vertex added to $B_{i+1}$ in this last growth step.
Then we have that
$S = S_{i+1} = S_{i} \subseteq \WReach_r[G, L, v]$.
Consequently, the algorithm executed at most $c$ deletion steps
and $|S| \leq c$.

For the bound on the size of set $B$, let $i$ be the index when the algorithm stopped, that is,
with $|A_i| \leq 2cm$. For every $0 \leq j < i$ that executed
a deletion step, we have
$$|A_{j+1}| \geq \frac{|A_j|}{cm} - 1 \geq \frac{|A_j|}{2cm}.$$
For every $0 \leq j < i$ that executed a growth step, we have
$$|A_{j+1}| \geq |A_j| - \frac{|A_j|}{m} = \left(1-\frac{1}{m}\right)|A_j|.$$
In particular, we have $A_i \neq \emptyset$ due to $m \geq 2$.
Consequently, since the algorithm executed $|S_i|$ deletion steps
and $|B_i|$ growth steps, we have
$$2cm \geq |A_i| \geq |A| \cdot \left(1-\frac{1}{m}\right)^{|B_i|} (2cm)^{-|S_i|}.$$
Hence, since $(1-1/m)^m \geq 1/4$ for every $m \geq 2$ and $|S_i| \leq c$,
if $|A| \geq 4 \cdot (2cm)^{c+1}$, then we have $|B_i| \geq m$.
This finishes the proof.
\end{proof}

\paragraph{Implementation details}
The actual implementation of the above algorithm differs in a number of aspects.
First, we found the threshold $|A_i|/m$ for the distinction between the growth step
and the deletion step too small in practice, despite working well in the proof
above. Moreover, experiments with this algorithm showed that it is
unstable in the sense that
small changes in this threshold can trigger big changes in the produced result
which are, a priori, hard to predict.
Because of that our implementation has a fixed constant $k$ and executes
the above algorithm with thresholds $\frac{1}{k+1}, \frac{2}{k+1}, \ldots, \frac{k}{k+1}$
and chooses the best result (we will address comparing different results later).

Second, the above algorithm can be modified so that the growth step is applied only in cases where the least vertex
of $A_i$ with respect to $L$ has only a small number of conflicts,
in which case we use that first vertex to enlarge $B$.
Note that such an algorithm also satisfies the theorem, because in the analysis of 
the algorithm
we used only the fact that if the growth step is not applicable, then this condition
is not satisfied for the first vertex of $A_i$. Such a variant is present in our implementation.

Third, in the proof above, the algorithm always applies the growth step when
the size of $A_i$ drops below the threshold $2cm$. This is a minor technical detail, and can be omitted at the cost of some more hassle in the proof (in the analysis
    of the last steps of the algorithm) and somewhat worse bounds for $|S|$ and $|A|$. In the implementation, we do not have this threshold, but instead
we roll back the unnecessary deletion steps that were performed by the algorithm
near the end of the execution. It is straightforward (but a bit more tedious)
  to adapt the above analysis to this variant.


\paragraph{Implemented variants}
We have implemented three variants of the above described method, which
we denote \tgva, \tgvb and \tgvc. In the outlined algorithm, when we consider 
a vertex $v$, we compute the set of vertices from $A$ conflicting with~$v$.
In \tgva, we consider two vertices to be conflicting if their 
$\WReach_r$ sets intersect. In \tgvb and \tgvc, two vertices are considered 
to be conflicting if the distance between them in the remaining part of the graph 
is at most $r$. Moreover, \tgvc after every step tries to fill its partial solution 
with the heuristic described in Section~\ref{sec:isnaive} to find an independent set in 
$(G-S)^r \cap A$, where $S$ is a set of already removed vertices. 

}

\subsection{Other naive approaches and heuristic optimizations}\label{sec:isnaive}
Since uniform quasi-wideness for $r=1$ is exactly finding independent sets, it makes sense to
include heuristics for finding independent sets as a~baseline. Moreover,
the problem of finding independent sets is also used as a~subroutine in the 
approach based on distance trees.
We used the following simple greedy algorithm to find independent sets. 
As long as our graph
is nonempty, take any vertex that has the smallest degree, add it
to the independent set and remove it and its neighbors from the graph.

The following algorithm is what we came up with as a~naive but 
reasonable heuristic for larger values of $r$.
For every number $k \in \{0, 1, \ldots, K\}$ (where $K$ is some hardcoded constant)
computes the biggest independent set in the graph
$(G - S_k)^r[A]$ using
the greedy procedure described above, where~$S_k$ is a set of $k$ vertices with biggest degrees.
This heuristic 
is based on the fact that independent sets
in $G^r$ correspond to $r$-independent sets in $G$. Without any other 
knowledge about the graph, vertices with the biggest degree seem to be the 
best candidates to be removed. In the end,
we output the best solution obtained in this manner. 
In the following, we abbreviate this approach as \ldpow
(least degree on power graph).

We remark that the used least degree heuristic is probably the simplest one for finding a maximum independent set
in a graph, but there are multiple better solutions available, both heuristic~\cite{mis-heur} and exact~\cite{cliquer,Lamm0SWZ19}.
Exploring the usage of more sophisticated algorithms in place of the least degree heuristic is beyond the scope of this work,
and, judging from the good performance of the heuristic described in this section, an interesting direction for future work.

\subsection{Comparing different results}

Uniform quasi-wideness is a~two-dimensional measure: we have to measure both
the size $m$ of the $r$-independent set $B$ which we desire to find, 
as well as the size $s(r)$ of vertices to be deleted. In order to compare the 
performance of our studied methods we propose the following approach that
arises from applications of uniform quasi-wideness in several algorithms~\cite{DawarK09,DrangeDFKLPPRVS16,Pil2017number,Siebertz17}.

Let $G, A \subseteq V(G), r \in \N$ be an input to any of our algorithms
(note that none of our algorithms takes the target size of the $r$-independent 
set as input) and let $S \subseteq V(G)$ and $B \subseteq A \setminus S$
such that $B$ is $r$-independent in $G - S$ be its output. 
Let us define $\pi_r[v, S]$ -- the \textit{$r$-distance profile of $v$ on $S$} --
as the function from $S$ to $\{0, 1, \ldots, r, \infty\}$ so that $\pi_r[v, S](a) = dist_G(v, a)$ if this distance is at most $r$, and $\pi_r[v, S](a) = \infty$ otherwise. 

The performance of the algorithms~\cite{DawarK09,DrangeDFKLPPRVS16,Pil2017number,Siebertz17} 
strongly depends on the size of the largest
equivalence class on $B$ defined by $u\sim v$ if $\pi_r[u,S]=\pi_r[v,S]$ for $u,v\in B$. 
Indeed, a recurring theme in these algorithms is to argue that if an equivalence class is sufficiently large, then an arbitrary vertex of the class is \emph{irrelevant} for the problem;
for example, the main argument of the kernelization algorithm for \textsc{Dominating Set}~\cite{DrangeDFKLPPRVS16} asserts that, given large equivalence class $B$, for every $v \in B$
one can lift the requirement to dominate $v$ without changing the answer to the problem. 

We hence decided to use the size of the largest equivalence class in the above
relation as the scoring function to measure the 
performance of our algorithms. 
Note that the number of different $r$-distance profiles is bounded by $(r + 2)^{|S|}$, so 
if $r$ is fixed and $|S|$
is bounded then the number of different $r$-distance profiles is also bounded, 
so having a big $r$-independent set
implies having a big subset of this set with equal $r$-distance profiles on $S$.

This well defined scoring function makes it possible to compare the results of the 
algorithms. Furthermore, in our code the implementation of the scoring function 
can be easily exchanged, so if different scoring functions are preferred, re-computation
and re-evaluation
is easily possible.


\section{Experimental setup}\label{sec:setup}

\subsection{Hard- and Software}

The experiments on generalized coloring numbers have been performed on
an Asus K53SC laptop
with Intel\textregistered{} Core\texttrademark{} i3-2330M CPU @ 2.20GHz x 2 processor and with 7.7 GiB of
RAM. Weak coloring numbers of a larger number of graphs for the statistics in
Section~\ref{sec:statistics_wcol} (presented without running times)  were
produced on a cluster at the Logic and Semantics Research Group, Technische
Universit\"{a}t Berlin.
The experiments on uniform quasi-wideness have been performed on a cluster of
16 computers at the Institute of Informatics, University of Warsaw.
Each machine was equipped with Intel Xeon E3-1240v6 3.70 GHz processor and 16 GB RAM.
All machines shared the same NFS drive. Since the size of the inputs and
outputs to the programs is relatively small, the network communication was negligible for tests with substantial running times.
The dtf implementation has been done in Python, while all other code in C++ or C.
The code is available at~\cite{our-repo,fnp-webpage}.

\subsection{Test data}

Our dataset consists of a number of graphs from different sources.
\vspace{-2mm}

\begin{description}[leftmargin=*]
\item[Real-world data] We collected appropriately-sized networks 
  from several collections~\cite{GephiData,GraphBaseData,SnapData,PajekData,NetworkRepo,konect}.
  Our selection contains classic social networks \cite{Karate,BrightKite},
  collaboration networks \cite{Densification,Netscience,CondMat}
  contact networks \cite{SpDataSchool,Dolphins},
  communication patterns~\cite{Densification,Gnutella,EnronData,Slashdot,Epinions,Polblogs},
  protein-protein interaction~\cite{Yeast},
  gene expression~\cite{Diseasome},
  infrastructure~\cite{Power}, tournament data \cite{Football}, and neural networks~\cite{CElegans}.
  We kept the names assigned to these files by the respective source.

\item[PACE 2016 Feedback Vertex Set] 
The Parameterized Algorithms and Computational Experiments Challenge
is an annual programming challenge started in 2016 that aims
to \emph{investigate the applicability of algorithmic ideas studied and developed in the subfields of multivariate, fine-grained, parameterized, or fixed-parameter tractable algorithms} (from the PACE webpage). 
In the first edition, one of the tracks focused on the \textsc{Feedback Vertex Set} problem~\cite{PACE2016},
   providing 230 instances from various sources and of different sizes.
We have chosen a number of instances with small feedback vertex set number,
   guaranteeing their very strong sparsity properties (in particular, low
       treewidth).
In our result tables, they are named \texttt{fvs???}, where \texttt{???}
is the number in the PACE 2016 dataset.
\item[Random planar graphs]
In their seminal paper, Alber, Fellows, and Niedermeier~\cite{AlberFN04}
initiated the very fruitful direction of developing \enlargethispage{\baselineskip}
of polynomial kernels (preprocessing routines
    rigorously analyzed through the framework of parameterized complexity)
in sparse graph classes
by providing a linear kernel for \textsc{Dominating Set} in planar graphs.
\iflipics{}{\textsc{Dominating Set} soon turned out to be the pacemaker of the development
of fixed-parameter and kernelization algorithms in bounded expansion
and nowhere dense graph
classes~\cite{AmiriMRS17,DawarK09,DrangeDFKLPPRVS16,Dvorak13}.}
In~\cite{AlberFN04}, an experimental evaluation is
conducted on random planar graphs
generated by the LEDA library~\cite{LEDA}. 
We followed their setup and included a number of random planar graphs
with various size and average degree.
In our result tables, they are named \texttt{planarN}, where \texttt{N}
stands for the number of vertices.
\item[Random graphs with bounded expansion]
A number of random graph models has been shown to produce almost 
surely graphs of bounded expansion~\cite{ComplexNetworks}.
We include a number of graphs generated
by O'Brien and Sullivan~\cite{OBrienSullivan} using the following
models: the stochastic block model (\texttt{sb-?} in our dataset)~\cite{SB}
and the Chung-Lu model with households (\texttt{clh-?}) and without
households (\texttt{cl-?})~\cite{ChungL03}.
We refer to~\cite{ComplexNetworks,OBrienSullivan} for more discussion
on these sources.
\end{description}
The graphs have been partitioned into four groups, depending on their size:
the \texttt{small} group gathers graphs up to $1\,000$ edges,
    \texttt{medium} between $1\,000$ and $10\,000$ edges,
    \texttt{big} between $10\,000$ and $48\,000$ edges,
    and \texttt{huge} above $48\,000$ edges.
The random planar graphs in every test group have respectively
$900$, $3\,900$, $21\,000$, and $150\,000$ edges.
The whole dataset is available for download at~\cite{fnp-webpage}.
Table~\ref{tb:teststats} gathers basic statistics about test groups.
For every test group, the repository~\cite{our-repo} offers a CSV file \texttt{group\_test\_stats.csv}
with a detailed breakdown.

\begin{table}[htb]
\begin{center}
  \begin{tabular}{@{}l|cccc|cccc@{}}
\multirow{2}{*}{group} & \multicolumn{4}{c|}{$|V(G)|$} & \multicolumn{4}{c}{$|E(G)|$} \\
& min & med & avg & max 
& min & med & avg & max \\\hline
small & 34 & 115 & 222.52 & 620 & 62 & 612 & 520.61 & 930 \\
medium & 235 & 1302 & 1448.44 & 4941 & 1017 & 3032 & 3343.44 & 8581 \\
big & 1224 & 7610 & 7963.64 & 16264 & 10445 & 21000 & 19519.00 & 47594 \\
huge & 3656 & 27775 & 34598.69 & 77360 & 48130 & 186940 & 237300.06 & 546487 \\
\end{tabular}
\caption{Basic statistics of test groups. avg stands for average, med stands for median.}\label{tb:teststats}
\end{center}
\end{table}


\section{Weak coloring numbers: results}\label{sec:wcol-results}

\subsection{Quality ratio}
As already discussed in the introduction, for all graphs in our data set we
do not know the exact (optimal) value of the weak coloring number
and we do not know how to compute them efficiently even in the data set consisting
of small graphs.

Thus, to evaluate the quality of each algorithm, we proceed as follows. 
For each graph in the data set, we take all the orderings produced by all algorithms
in the experiment (including the improved orderings produced by the local search routine)
and take note of the smallest weak coloring number encountered. 
This number is the best known upper bound on the weak coloring number of the graph
in question and we grade each algorithm by the ratio of the weak coloring number
of the ordering produced by the algorithm to this best known upper bound. 
That is, in this section, the term \emph{ratio} always refers to the ratio
to the best known upper bound on the weak coloring number of the graph in question.

In Table~\ref{tb:wcolstats} we gather basic statistics on the values of 
the weak coloring number in different data sets. 
Note that in our repository~\cite{our-repo} one can find CSV files with the values
of the weak coloring number of each ordering produced by each algorithm on each test.

\begin{table}[htb]
\begin{center}
  \begin{tabular}{@{}lc|cccc@{}}
\multirow{2}{*}{group} &
\multirow{2}{*}{radius} &
  \multicolumn{4}{c}{$\wcol_r$} \\
& & min & med & avg & max \\\hline
\multirow{5}{*}{small}
 & 1 & 3 & 5 & 5.70 & 10 \\
 & 2 & 5 & 10 & 12.87 & 38 \\
 & 3 & 6 & 15 & 19.00 & 65 \\
 & 4 & 8 & 18 & 22.39 & 74 \\
 & 5 & 8 & 21 & 24.39 & 74 \\
\hline
\multirow{5}{*}{medium}
 & 1 & 3 & 6 & 8.03 & 34 \\
 & 2 & 6 & 19 & 23.50 & 118 \\
 & 3 & 9 & 33 & 45.50 & 143 \\
 & 4 & 12 & 40 & 67.62 & 172 \\
 & 5 & 15 & 40 & 84.19 & 243 \\
\hline
\multirow{5}{*}{big}
 & 1 & 3 & 7 & 10.41 & 37 \\
 & 2 & 6 & 31 & 32.59 & 162 \\
 & 3 & 11 & 59 & 84.86 & 301 \\
 & 4 & 14 & 81 & 182.18 & 1021 \\
 & 5 & 19 & 116 & 285.27 & 1734 \\
\hline
\multirow{5}{*}{huge}
 & 1 & 4 & 38 & 52.50 & 239 \\
 & 2 & 10 & 223 & 279.75 & 885 \\
 & 3 & 18 & 531 & 730.81 & 2595 \\
 & 4 & 23 & 884 & 1349.88 & 4564 \\
 & 5 & 27 & 1138 & 2109.88 & 7706 \\
\hline
\end{tabular}
\caption{Basic statistics of best known weak coloring numbers. avg stands for average, med stands for median.}\label{tb:wcolstats}
\end{center}
\end{table}

\subsection{Fine-tuning flat decompositions}

\begin{table}[bht]
\centering%
\begin{tabular}{l>{\hspace*{9pt}}ll>{\hspace*{9pt}}ll>{\hspace*{9pt}}l}
option & \hspace*{-9pt}\shortstack{average \\ ratio}
& option & \hspace*{-9pt}\shortstack{average \\ ratio}
& option & \hspace*{-9pt}\shortstack{average \\ ratio} \\ \toprule
BFS/\eqref{flat:1} & 1.159 & DFS/\eqref{flat:1} & 1.156 & SORT/\eqref{flat:1} & 1.072 \\
BFS/\eqref{flat:2} & 1.131 & DFS/\eqref{flat:2} & 1.117 & \textbf{SORT/\eqref{flat:2}} & \textbf{1.039} \\
BFS/\eqref{flat:3} & 1.147 & DFS/\eqref{flat:3} & 1.135 & SORT/\eqref{flat:3} & 1.054 \\ \midrule
$\overline{\textrm{BFS}}$/\eqref{flat:1} & 1.363 & $\overline{\textrm{DFS}}$/\eqref{flat:1} & 1.368 & $\overline{\textrm{SORT}}$/\eqref{flat:1} & 1.41 \\
$\overline{\textrm{BFS}}$/\eqref{flat:2} & 1.277 & $\overline{\textrm{DFS}}$/\eqref{flat:2} & 1.291 & $\overline{\textrm{SORT}}$/\eqref{flat:2} & 1.329 \\
$\overline{\textrm{BFS}}$/\eqref{flat:3} & 1.309 & $\overline{\textrm{DFS}}$/\eqref{flat:3} & 1.324 & $\overline{\textrm{SORT}}$/\eqref{flat:3} & 1.36 \\\bottomrule \vspace{4pt}
\end{tabular}
\caption{Comparison of different flat decomposition variants: sorting
vertices of the new blobs $B_i$ by the BFS, DFS, by degree (non-increasing),
 or these orders reversed; the second coordinate refers to
 the choice of the root vertex: \eqref{flat:1} maximizing the number of neighbors
 already processed, \eqref{flat:2} maximizing degree in~$U$,
 \eqref{flat:3} as previous, but only among neighbors of already processed vertices.
The value is the average of the ratios to the best generalized
coloring numbers found by all versions of this algorithm.}\label{tb:flat}\vspace{-.5cm}
\end{table}

\noindent
As discussed in Section~\ref{ss:flat}, we have experimented with a number of
variants of the flat decompositions approach, with regards to the choice of
the next root vertex and the internal order of the vertices of the next $B_i$.
The results for the \texttt{big} dataset are presented in Table~\ref{tb:flat}.
They clearly indicate that (a) all reversed orders performed much worse, and
(b) among other options, the best is to sort the vertices of
a new $B_i$ non-increasingly by degree and choose as the next root the vertex
of maximum degree.
In the subsequent tests, we use this best configuration for comparison with
other approaches.

\subsection{Comparison of all approaches}


Table~\ref{tb:wcol} presents the results of our experiments on all test
instances and all approaches, summarized as follows:
\begin{description}
\item[dtf] dtf-augmentations with the respective radius~$r$ supplied as the distance bound;
\item[flat] the best configuration of the flat decompositions approach (see previous section);
\item[treedepth] the treedepth approximation heuristic;
\item[treewidth] the treewidth heuristic;
\item[degree sort] the heuristic which sorts the vertices non-increasingly by degree.
\item[WReach] greedy approach constructing the ordering from left to right, picking at every step a vertex with the largest potential weakly reachable set;
\item[SReach] greedy approach constructing the ordering from right to left, picking at every step a vertex with the smallest potential strongly reachable set.
\end{description}
Out of all simple heuristics (c.f.\ Section~\ref{ss:simple}) degree
sorting was supreme and we skip the results of inferior
heuristics (see~\cite{our-repo,fnp-webpage} for full data).
Interestingly, this heuristic also outperformed most other (much
more involved) approaches.
In all cases, the greedy approaches described in sections~\ref{ss:WReachLeft} and~\ref{ss:SReachRight}
outperformed the rest, with the left-to-right greedy algorithm based on weakly reachable sets being the best for smaller radii
and the right-to-left greedy algorithm based on strongly reachable sets being the best for larger radii.

Interestingly, on small graphs, the treewidth heuristic returns competitive results. An explanation why the
treewidth heuristic is better on smaller graphs $G$ might be that $\tw(G) =
\col_\infty(G)$ and on small graphs the difference between $\col_\infty(G)$ and
$\col_r(G)$ for the considered $r$ is not that big. However, this does not
explain why treedepth does not perform better than treewidth. (Recall that $\td(G)
= \wcol_\infty(G)$.)
It is worth observing that on larger graphs (the \texttt{big} group)
the performance of the flat decomposition matches or outperforms the one
of the treewidth heuristic for radii $r=2,3,4$.
However, the treewidth heuristic outperforms all
approaches with proved guarantees for $r=5$ on test sets up to the
\texttt{big} group.

Table~\ref{tb:wcol} gathers total running time of our programs on
discussed data sets. These results clearly indicate large discrepancy
between consumed resources for different approaches.
Out of the approaches with provable guarantees on the output
coloring number, the flat decompositions approach is clearly the most efficient.

\afterpage{%
\clearpage%
\begin{landscape}
\begin{table}
\centering
\setlength{\aboverulesep}{0pt} 
\setlength{\belowrulesep}{0pt} 
\def\spacetop{\rule{0pt}{12pt}} 
\begin{tabular}{@{}lcAcAcAcAcAcAcAc@{}}
\chead{tests} & \chead{$r$} & \mhead{dtf} & \mhead{flat} & \mhead{treedepth} & \mhead{treewidth} & \mhead{degree sort} & \mhead{WReach} & \mhead{SReach} \\ \toprule
\mrow{small}
& \spacetop 2 & 1.275 & 0:04.74 & 1.289 & \mrow{0:00.02} & 1.514 & \mrow{0:09.57} & 1.202 & \mrow{0:00.35} & 1.267 & \mrow{0:00.32} & 1.083 & 0:00.04     & 1.155 & 0:00.06 \\
& 3           & 1.513 & 0:04.18 & 1.307 &                & 1.516 &                & 1.186 &                & 1.276 &                & 1.100 & 0:00.05     & 1.107 & 0:00.02 \\
& 4           & 1.627 & 0:04.70 & 1.346 &                & 1.447 &                & 1.184 &                & 1.269 &                & 1.177 & 0:00.07     & 1.075 & 0:00.03 \\
& 5           & 1.749 & 0:05.61 & 1.382 &                & 1.440 &                & 1.187 &                & 1.290 &                & 1.226 & 0:00.06     & 1.084 & 0:00.07 \\[2pt]
\midrule
\mrow{medium}
& \spacetop 2 & 1.326 & 0:20.41 & 1.541 & \mrow{0:01.13} & 2.474 & \mrowNA        & 1.751 & \mrow{0:18.36} & 1.285 & \mrow{0:00.85} & 1.085 & 0:00.60     & 1.191 &   0:00.98    \\
& 3           & 1.440 & 0:44.33 & 1.655 &                & 2.240 &                & 1.513 &                & 1.271 &                & 1.116 & 0:00.71     & 1.104 &   0:00.96    \\
& 4           & 1.698 & 1:11.08 & 1.672 &                & 1.974 &                & 1.343 &                & 1.285 &                & 1.089 & 0:01.11     & 1.058 &   0:01.14    \\
& 5           & 1.777 & 1:37.55 & 1.660 &                & 1.816 &                & 1.232 &                & 1.294 &                & 1.163 & 0:01.52     & 1.040 &   0:01.34    \\[2pt]
\midrule
\mrow{big}
& \spacetop 2 & 1.304 & \NA     & 1.706 & \mrow{0:17.54} & \NAgr & \mrowNA & 2.773 & \mrowNA & 1.400 & \mrow{0:02.28} & 1.075 & 0:03.73 & 1.202 & 0:11.32 \\
& 3           & 1.528 & \NA     & 1.796 &                & \NAgr &         & 2.452 &         & 1.356 &                & 1.084 & 0:06.61 & 1.185 & 0:12.40 \\
& 4           & \NAgr & \NA     & 1.827 &                & \NAgr &         & 1.862 &         & 1.382 &                & 1.097 & 0:14.57 & 1.117 & 0:16.02 \\
& 5           & \NAgr & \NA     & 1.777 &                & \NAgr &         & 1.495 &         & 1.329 &                & 1.345 & 0:25.35 & 1.042 & 0:24.80 \\[2pt]
\midrule
\mrow{huge}
& \spacetop 2 & \NAgr & \NA     & 2.124 & \mrow{4:14.11} & \NAgr & \mrowNA & \NAgr   & \mrowNA & 1.432 & \mrow{0:16.91}  & 1.086 &  1:07.98    & \NAgr & \NA  \\
& 3           & \NAgr & \NA     & 2.618 &                & \NAgr &         & \NAgr   &         & 1.342 &                 & 1.152 & \NA         & \NAgr & \NA       \\
& 4           & \NAgr & \NA     & 2.506 &                & \NAgr &         & \NAgr   &         & 1.293 &                 & \NAgr & \NA         & \NAgr & \NA       \\
& 5           & \NAgr & \NA     & 2.389 &                & \NAgr &         & \NAgr   &         & 1.234 &                 & \NAgr & \NA         & \NAgr & \NA       \\[2pt]
\bottomrule
\end{tabular}%
\caption{\emph{Gray columns}: Comparison of the main approaches and their average
  ratio to the best found coloring number.
  Some of the approaches did not finish in time on larger graphs or ran out of memory.
  \emph{White columns}:
  Total running time of the main approaches.
  Note that for some approaches the ordering (and thus running time)
  is independent of the radius.}\label{tb:wcol}
\end{table}%
\begin{table}
\centering
\setlength{\aboverulesep}{0pt} 
\setlength{\belowrulesep}{0pt} 
\def\spacetop{\rule{0pt}{12pt}} 
\begin{tabular}{@{}ccAcAcAcAcAcAcAc@{}}
\chead{tests} & \chead{radius} & \mhead{dtf} & \mhead{flat} & \mhead{treedepth} & \mhead{treewidth} & \mhead{degree sort} & \mhead{WReach} & \mhead{SReach} \\\toprule
\mrow{small}
& 2 & 1.155 & \mrow{16.7\%} & 1.060 & \mrow{16.9\%} & 1.172 & \mrow{15.2\%} & 1.087 & \mrow{7.0\%} & 1.053  & \mrow{16.2\%} & 1.069 & \mrow{6.7\%} & 1.063 & \mrow{7.3\%} \\
& 3 & 1.256 &               & 1.100 &               & 1.263 &              & 1.122 &              & 1.065  &                & 1.053 & & 1.041 &\\
& 4 & 1.343 &               & 1.105 &               & 1.299 &              & 1.145 &              & 1.066  &                & 1.096 & & 1.032 &\\
& 5 & 1.480 &               & 1.148 &               & 1.325 &              & 1.165 &              & 1.100  &                & 1.136 & & 1.056 &\\
\midrule
\mrow{medium}
& 2 & 1.207 & \mrow{13.9\%} & 1.151 & \mrow{21.4\%} & 1.224 & \mrow{30.9\%} & 1.149 & \mrow{15.3\%} & 1.024 & \mrow{17.1\%} & 1.070 & \mrow{2.8\%} & 1.012 & \mrow{9.9\%} \\
& 3 & 1.249 &               & 1.159 &               & 1.354 &               & 1.167 &               & 1.062 &               & 1.110 & & 1.011 & \\
& 4 & 1.530 &               & 1.359 &               & 1.440 &               & 1.216 &               & 1.087 &               & 1.108 & & 1.006 & \\
& 5 & 1.582 &               & 1.424 &               & 1.505 &               & 1.226 &               & 1.118 &               & 1.161 & & 1.021 & \\
\midrule
\mrow{big}
& 2 & 1.172 & \mrowNA & 1.196 & \mrow{24.4\%} & \NAgr & \mrowNA & 1.268 & \mrow{24.3\%} & 1.091 &  \mrow{18.5\%} & 1.087 & \mrow{1.6\%} & 1.023 & \mrow{11.5\%} \\
& 3 & 1.321 && 1.239  && \NAgr && 1.415  && 1.097 & & 1.105 & & 1.019 & \\
& 4 & \NAgr   && 1.390 && \NAgr && 1.434 && 1.145 & & 1.123 & & 1.020 & \\
& 5 & \NAgr   && 1.438 && \NAgr && 1.387 && 1.164 & & 1.177 & & 1.010 & \\
\bottomrule \vspace*{1pt}
\end{tabular}
\caption{\emph{Gray columns}: Comparison of average ratio
  after local search.
  \emph{White columns}: Relative improvement of local search for ordering output
  by the studied approaches.
  }\label{tb:ls-merged}\vspace*{-.5cm}
\end{table}%
\end{landscape}%
\clearpage%
}

Note that we applied different timeout policies for generating different data.
For generating time of execution and for applying local search
we set the timeout to be 1 minute, however
for generating orders and wcol numbers we set the timeout to be 5 minutes,
but for the sake of completeness we sometimes allowed some
programs to run longer.

In summary, on our data sets the greedy approaches of Sections~\ref{ss:WReachLeft} and~\ref{ss:SReachRight} produce the best results and have competitive running times.
If one looks for something faster, the simple sort-by-degrees heuristic is consistently the fastest and produces good results.
It is worth noting that on the smallest graphs it is outperformed by the treewidth heuristic.

We remark here that it is simple to ``fool'' the degree-sorting heuristic
by adding multiple pendant vertices of degree one and thus forcing it to take
an arbitrarily bad ordering, but such adversarial obstacles seem to be
absent in real-world graphs.
If one is to choose an algorithm with provable guarantees, the discussed
variant of the flat decompositions approach appears to be the best choice.

\subsection{Local search}

\noindent
In a second round of experiments we applied a simple local-search routine
that, given an ordering output by one of the approaches, tries to improve it
by moving vertices with the largest weakly reachable sets earlier in the
ordering. The white columns in Table~\ref{tb:ls-merged} show how local search
improved orderings output by discussed approaches, and the gray columns show
average ratios of orderings improved by local search. Two
remarks are in place.

First, regardless of how the ordering was computed, a
local search step almost always significantly improves the ordering.
The main exception is the case of the left-to-right greedy approach of Section~\ref{ss:WReachLeft}, which can be explained by the fact
that already the greedy algorithm explicitly optimizes sizes of the same sets as the local search heuristic.
We have no good explanation on why local search is significantly less effective on the orderings output by the treewidth heuristic for bigger radii.

Second, in general the local search step
does \emph{not} improve the orderings enough to change the relative order of the
performance of the base approaches.
However, there are few exceptions.
The poor performance of local search on the output of the left-to-right greedy algorithm of Section~\ref{ss:WReachLeft}
puts it behind the right-to-left greedy algorithm of Section~\ref{ss:SReachRight} and the sort-by-degrees heuristic.
Moreover, on the \texttt{medium} group
the treewidth heuristic gave better results than the sort-by-degrees heuristic on $r=5$, however degree sort regained the lead
after application of local search due to its low performance on larger radii
for treewidth heuristic.

We therefore recommend the local search
improvement as a relatively cheap post-processing improvement to any existing algorithm.
The combination of the right-to-left greedy algorithm based on strongly reachable sets (described in Section~\ref{ss:SReachRight})
with the local search improvement is the clear winner in our final comparison.
If one needs something faster, we recommend the simple degree sort heuristic.

\subsection{Correlation of weak coloring numbers with other parameters}
\label{sec:statistics_wcol}
\noindent
While it is undeniable that weak coloring numbers have immense algorithmic
power from a theoretical perspective, the efficient computation of such weak
coloring orders is only one component to leverage them in practice: we also
need these numbers to be reasonably low. So far, this had only been
established on a smaller scale~\cite{ComplexNetworks,FelixThesis} for a
related measure. Here, we computed the weak coloring number for $r \in
\{1, \ldots,5\}$ for 1675 real-world networks from various
sources~\cite{konect,SnapData,NetworkRepo,PajekData,GephiData}.
Figure~\ref{fig:corr} summarizes our findings for~$r \in \{1,3,5\}$: we
find a modest correlation with~$n$ and a significant correlation with~$m$.
The correlation with $n$ becomes quite pronounced for~$r = 5$; the probable reason being
that for all networks involved $\log n \leq 10$. Still, even in the worst
examples $\wcol_5$ is at least one order of magnitude smaller than~$n$ or~$m$.
We further see a high correlation between $\wcol_1$ and the average
degree~$\bar d$ which vanishes for larger radii. It is no big surprise
that~$\bar d$ and the degeneracy~$\wcol_1$ are highly correlated since these
values are only far apart in graphs with highly inhomogeneous densities. The decrease for larger radii indicates that vertices of high degree do not tend to build large highly connected clusters.

The low dependence on the maximum degree 
confirms the
findings of \cite{ComplexNetworks}: the exact shape of the degree
distribution's tail is much more relevant than the singular value of the
maximum degree. Finally, note that in our graphs the degeneracy
$\wcol_1$ practically does not grow with~$n$.
\begin{figure}[htb]
  \centering
  \includegraphics[scale=.25]{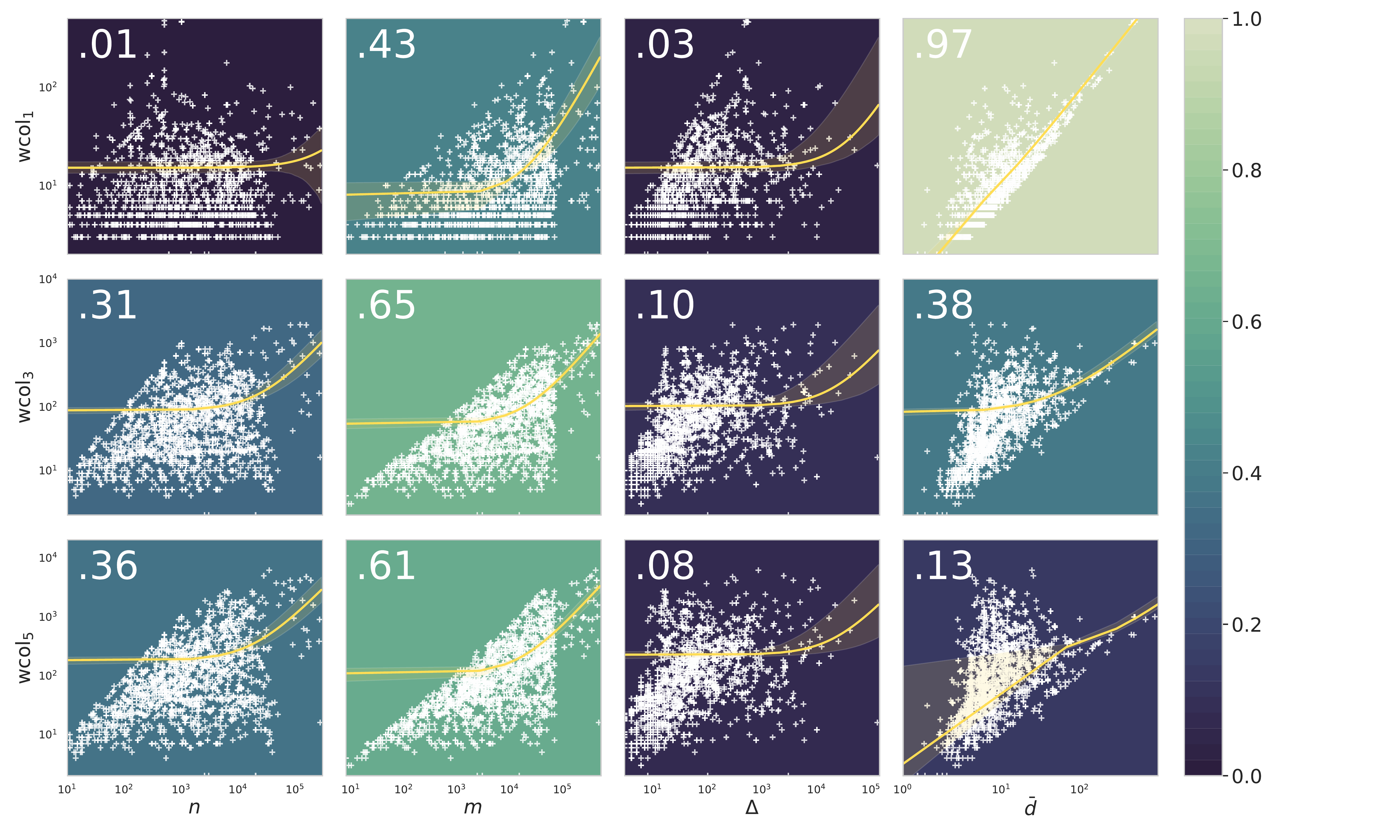}
  \caption{\label{fig:corr}%
    Correlation of $\wcol$ (computed using the SReach heuristic where possible,
    otherwise resorting to the degree sort heuristic) with graph size,
    maximum degree and average degree of 1703 real-world graphs.
    The background shade and number reflect the correlation of the two respective measures,
    superimposed is a log-log plot of the measurements. The yellow lines are linear regressions with
    lightly shaded confidence intervals.
  }
\end{figure}


\section{Uniform quasi wideness: results}\label{sec:uqw-results}

Table~\ref{tb:uqw} gathers aggregated data from our experiments on the \texttt{medium} dataset.
(Full data can be downloaded from~\cite{our-repo,fnp-webpage}.)
Every tested algorithm has been run on every test with timeout $10$ minutes
and with radii $r \in \{2,3,4,5\}$ and with the starting set either $A = V(G)$ or a random subset
of $20\%$ of vertices of $V(G)$.

Data indicate the simple heuristic, \ldpow, as the best choice in most scenarios,
as it has always best or nearly-best total score and runs relatively quickly.
The third variant of the new algorithm \tgvc has comparable results, but is inefficient and does not
finish within the timeout. 
Other variants \tgva and \tgvb as well as \mfcs are significantly outperformed by other approaches.
Out of other approaches with provable guarantees, the variants \tree, \treeshrink, and \ldit provide results in most cases less
than $10\%$ worse than the heuristic \ldpow, with \treeshrink being consistently worse.

Our initial inspiration for designing the new algorithm (variants \tgva, \tgvb, and \tgvc) was to avoid conservative deletion
steps in the algorithm \mfcs. On one hand, this particular goal has been achieved, as the deletion sets output 
by the algorithms \tgva, \tgvb, and \tgvc are of order of magnitude smaller than the ones output by the algorithm \mfcs.
However, the overall quality of \tgva and \tgvb turned out to be still poor compared to the variants based on distance trees,
and \tgvc is clearly the slowest of the algorithms while producing results comparable with the best other algorithms when
it finished within reasonable time. 
This suggests the following explanation.
The main combinatorial idea of the algorithms \mfcs, \tgva, \tgvb, and \tgvc is, upon a deletion step, to restrict
to the weakly reachable set of the deleted vertex. This allows to provide a strong theoretical guarantee on the size
of the deletion set but in practice turns out to be too conservative, as witnessed by the results of the algorithm \tgvc.
This algorithm, by additionally performing a heuristic step of finding an independent set after every deletion step, 
escapes this pitfall, but at too large running time cost.

We also remark that the total independent set size found by \mfcs is large, by far the largest among all algorithms for larger radii. 
However, this comes up at the cost of a very large deletion set which, in turn, makes the final score low. 
Note that it is very simple to come up with a large $r$-independent set if one does not care about the size of the deletion set:
just find (e.g., by a greedy heuristic) a large $1$-independent set $B$ and delete $V(G) \setminus B$, making $B$
$r$-independent for every $r$. 
As almost all known algorithmic usages of uniform quasi-wideness focus on the largest equivalence class of the distance profile,
we think this should be the main factor in evaluating uniform quasi-wideness algorithms and, consequently, we evaluate the performance
of the algorithm \mfcs as rather poor.

To sum up, our experiments show that the simple heuristic \ldpow gives best results, but if one
is interested in algorithm with provable guarantees, one should choose one of the variant \tree over \mfcs or \tgva/\tgvb.
\iflipics{%
\begin{table}[htb]
\resizebox{\textwidth}{!}{
\begin{tabular}{@{}llllllllll@{}}
\multirow{2}{*}{$r$} & \multirow{2}{*}{algorithm} &
\multicolumn{4}{c}{\qquad\quad start with whole $V(G)$} &
\multicolumn{4}{c}{start with $20\%$ of $V(G)$} \\
& & deleted & independent & score & time & deleted & independent & score & time \\
\toprule
\multirow{8}{*}{3} & \mfcs & 5076 & 11471 & 2153 & 0:01.25 & 1922 & 3459 & 1135 & 0:00.48\\
 & \tgva & 78 & 2345 & 2211 & 0:37.53                       & 49 & 1192 & 1159 & 0:29.96\\  
 & \tgvb & 84 & 3820 & 3673 & 0:34.34                      & 49 & 2132 & 2096 & 0:23.36\\  
 & \tgvc & \NA & \NA & \NA & \NA                                   & 5 & 2926 & 2873 & 11:10.63\\  
 & \tree & 7 & 6072 & 5686 & 0:02.77                       & 4 & 2652 & 2598 & 0:00.48\\   
 & \treeshrink & 5 & 5645 & 5645 & 0:01.00                 & 4 & 2603 & 2603 & 0:00.38\\   
 & \ldit & 7 & 6136 & 5748 & 0:01.71                       & 4 & 2741 & 2688 & 0:00.39\\   
 & \ldpow & 5 & 6471 & 6296 & 0:08.13                      & 6 & 2972 & 2871 & 0:02.01\\   
 \midrule
\multirow{8}{*}{5} & \mfcs & 7946 & 15773 & 1164 & 0:01.93& 4057 & 4396 & 594 & 0:00.67\\
 & \tgva & 115 & 1623 & 1445 & 4:38.57                     & 84 & 709 & 676 & 3:20.15\\   
 & \tgvb & 122 & 2079 & 1888 & 4:19.50                    & 103 & 1036 & 982 & 3:07.82\\ 
 & \tgvc & \NA & \NA & \NA & \NA                                  & \NA & \NA & \NA & \NA\\       
 & \tree & 11 & 2988 & 2643 & 0:02.85                     & 4 & 1325 & 1282 & 0:00.53\\  
 & \treeshrink & 5 & 2603 & 2603 & 0:01.05                & 4 & 1284 & 1284 & 0:00.45\\  
 & \ldit & 12 & 3102 & 2752 & 0:01.84                     & 5 & 1380 & 1336 & 0:00.64\\  
 & \ldpow & 7 & 3192 & 3043 & 0:29.32                     & 5 & 1517 & 1473 & 0:07.15\\ 
 \bottomrule\vspace{1pt}
\end{tabular}}
\caption{Aggregated results of uniform quasi-wideness
  on \texttt{medium} set for $r=3$ and $r=5$ (values for $r=2$ and $r=4$ can be found in the full version of the paper):
    the sum of the sizes
  of all deleted and independent sets, 
     total score (sum of all sizes of largest equivalence classes
         w.r.t. deleted vertices),
     and total running time.}\label{tb:uqw}
\end{table}
}{%
\begin{table}[htb]
\resizebox{\textwidth}{!}{
\begin{tabular}{@{}llllllllll@{}}
\multirow{2}{*}{$r$} & \multirow{2}{*}{algorithm} &
\multicolumn{4}{c}{\qquad\quad start with whole $V(G)$} &
\multicolumn{4}{c}{start with $20\%$ of $V(G)$} \\
& & deleted & independent & score & time & deleted & independent & score & time \\
\toprule
\multirow{8}{*}{2} & \mfcs & 587 & 5189 & 4004 & 0:00.88 & 258 & 2426 & 1985 & 0:00.50\\
 & \tgva & 16 & 3892 & 3879 & 0:11.14                     & 6 & 1920 & 1915 & 0:06.12\\
 & \tgvb & 17 & 5990 & 5951 & 0:09.72                    & 22 & 3300 & 3284 & 0:04.64\\
 & \tgvc & 4 & 10013 & 9944 & 39:33.53                   & 2 & 4160 & 4121 & 4:23.58\\
 & \tree & 7 & 8854 & 8424 & 0:02.75                     & 4 & 3834 & 3761 & 0:00.49\\
 & \treeshrink & 5 & 8394 & 8394 & 0:00.98               & 4 & 3770 & 3770 & 0:00.35\\
 & \ldit & 7 & 8985 & 8553 & 0:01.64                     & 4 & 3971 & 3894 & 0:00.40\\
 & \ldpow & 5 & 10169 & 9952 & 0:03.06                   & 3 & 4193 & 4117 & 0:00.89\\
 \midrule
\multirow{8}{*}{3} & \mfcs & 5076 & 11471 & 2153 & 0:01.25 & 1922 & 3459 & 1135 & 0:00.48\\
 & \tgva & 78 & 2345 & 2211 & 0:37.53                       & 49 & 1192 & 1159 & 0:29.96\\  
 & \tgvb & 84 & 3820 & 3673 & 0:34.34                      & 49 & 2132 & 2096 & 0:23.36\\  
 & \tgvc & \NA & \NA & \NA & \NA                                   & 5 & 2926 & 2873 & 11:10.63\\  
 & \tree & 7 & 6072 & 5686 & 0:02.77                       & 4 & 2652 & 2598 & 0:00.48\\   
 & \treeshrink & 5 & 5645 & 5645 & 0:01.00                 & 4 & 2603 & 2603 & 0:00.38\\   
 & \ldit & 7 & 6136 & 5748 & 0:01.71                       & 4 & 2741 & 2688 & 0:00.39\\   
 & \ldpow & 5 & 6471 & 6296 & 0:08.13                      & 6 & 2972 & 2871 & 0:02.01\\   
 \midrule
\multirow{8}{*}{4} & \mfcs & 7269 & 14568 & 1365 & 0:01.73 & 3418 & 4234 & 718 & 0:00.57\\ 
 & \tgva & 106 & 1926 & 1772 & 2:03.13                      & 97 & 886 & 846 & 1:32.35\\    
 & \tgvb & 123 & 2643 & 2471 & 1:53.16                     & 90 & 1361 & 1322 & 1:22.24\\  
 & \tgvc & \NA & \NA & \NA & \NA                                   & \NA & \NA & \NA & \NA\\       
 & \tree & 12 & 3744 & 3388 & 0:02.82                      & 5 & 1726 & 1679 & 0:00.54\\   
 & \treeshrink & 6 & 3344 & 3344 & 0:01.04                 & 5 & 1683 & 1683 & 0:00.39\\   
 & \ldit & 14 & 3959 & 3598 & 0:01.77                      & 5 & 1808 & 1761 & 0:00.56\\   
 & \ldpow & 11 & 4442 & 4079 & 0:20.13                     & 5 & 2004 & 1956 & 0:04.56\\ 
 \midrule
\multirow{8}{*}{5} & \mfcs & 7946 & 15773 & 1164 & 0:01.93& 4057 & 4396 & 594 & 0:00.67\\
 & \tgva & 115 & 1623 & 1445 & 4:38.57                     & 84 & 709 & 676 & 3:20.15\\   
 & \tgvb & 122 & 2079 & 1888 & 4:19.50                    & 103 & 1036 & 982 & 3:07.82\\ 
 & \tgvc & \NA & \NA & \NA & \NA                                  & \NA & \NA & \NA & \NA\\       
 & \tree & 11 & 2988 & 2643 & 0:02.85                     & 4 & 1325 & 1282 & 0:00.53\\  
 & \treeshrink & 5 & 2603 & 2603 & 0:01.05                & 4 & 1284 & 1284 & 0:00.45\\  
 & \ldit & 12 & 3102 & 2752 & 0:01.84                     & 5 & 1380 & 1336 & 0:00.64\\  
 & \ldpow & 7 & 3192 & 3043 & 0:29.32                     & 5 & 1517 & 1473 & 0:07.15\\ 
 \bottomrule\vspace{1pt}
\end{tabular}}
\caption{Aggregated results of uniform quasi-wideness
  on \texttt{medium} set: total size
  of all deleted and independent sets, 
     total score (total size of largest equivalence classes
         w.r.t. deleted vertices),
     and total running time.}\label{tb:uqw}
\end{table}
}

\section{A lower bound to the TGV algorithm}\label{sec:lb}
In this section we observe that the construction of~\cite{GroheKRSS15} shows
also that the bounds of 
our new uniform quasi wideness algorithm of Section~\ref{ss:uqw-tgv} are close to optimal.
More precisely, we show the following corollary of the construction of~\cite{GroheKRSS15}.

\begin{theorem}\label{thm:lb}
For every two integers $k,r \geq 1$ and every integer $m' > c$ where
$c = \binom{k+r}{r}$, there exists a graph $G_{k,r,m'}$ with the following properties:
\begin{itemize}
\item the treewidth of $G_{k,r,m'}$ is at most $k$;
\item $\wcol_r(G_{k,r,m'}) = c$;
\item $|V(G_{k,r,m'})| \geq (m'-1)^c$.
\item for every pair of disjoint sets $B,Z \subseteq V(G_{k,r,m'})$ such that $B$ is $2r$-independent
in $G_{k,r,m'}-Z$, we have $|B| \leq |Z| \cdot m' + 1$; in particular, if $|B| \geq cm' + 1$ then
$|Z| \geq c$ and if $|Z| \leq c$ then $|B| \leq cm' + 1$.
\end{itemize}
\end{theorem}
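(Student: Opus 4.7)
The plan is to reuse the lower-bound construction of Grohe, Kreutzer, Rabinovich, Siebertz and Stavropoulos~\cite{GroheKRSS15}, which for integer parameters $k,r$ and a branching parameter $N$ produces a graph $H^{N}_{k,r}$ of treewidth at most $k$, weak $r$-coloring number exactly $c=\binom{k+r}{r}$, and at least $N^{c}$ vertices (the matching upper bound $\wcol_r \le c$ for graphs of treewidth $k$ is from~\cite{HeuvelMQRS17}). Setting $N=m'-1$ and defining $G_{k,r,m'} \coloneqq H^{m'-1}_{k,r}$ immediately yields the first three bullets of the theorem, so the new content is the fourth bullet, the uniform quasi-wideness lower bound.

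To establish the fourth bullet I would exploit the recursive structure of $H^{N}_{k,r}$: it is built in $c$ levels, at each level gluing $N$ ``sibling copies'' of the previous-level gadget along a shared boundary vertex. Two properties of this construction drive the argument. First, any two vertices of $G_{k,r,m'}$ are joined by a canonical path of length at most $2r$ that traverses the boundary vertices in a prescribed way; in particular the diameter of $G_{k,r,m'}$ is at most $2r$. Second, deleting one boundary vertex ``frees'' only the $N$ sibling subpieces attached at that vertex, and each of them can contribute at most one further element to a $2r$-independent set (before recursing).

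Concretely, I would prove the bound $|B|\le |Z|\cdot m'+1$ by induction on $|Z|$, simultaneously over all recursion levels. The base case $|Z|=0$ follows from the diameter bound: every two vertices are within distance $2r$, so $|B|\le 1$. For the inductive step, pick any $z\in Z$, identify the level at which $z$ sits in the hierarchy, and decompose $G_{k,r,m'}-z$ into the $N$ sibling subpieces it separates at that level. Applying the inductive hypothesis inside each sibling subpiece (with its own share of $Z\setminus\{z\}$) and summing, the single deletion $z$ contributes at most $N+1\le m'$ to the bound, yielding $|B|\le 1+|Z|\cdot m'$.

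The main obstacle will be making the ``separation budget'' precise and, in particular, verifying that deletions at \emph{different} levels of the recursion compose additively rather than multiplicatively. This boils down to a careful analysis of how canonical short paths in $H^{m'-1}_{k,r}$ interact with the hierarchy of boundary vertices, and is where the explicit coordinate description of the construction in~\cite{GroheKRSS15} is essential. Once this separation principle is established, the induction itself is routine, and the small slack between the actual branching factor $m'-1$ and the constant $m'$ in the statement accommodates the $+1$ terms accrued during the bookkeeping.
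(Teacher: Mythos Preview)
Your overall plan is sound: the construction from~\cite{GroheKRSS15} with branching parameter $m'-1$ does give the first three bullets, and the fourth bullet is indeed the new content. However, your proposed induction on $|Z|$ through the gadget hierarchy is more complicated than necessary, and the step you yourself flag as the ``main obstacle'' is a genuine gap as written: after deleting a vertex $z$, the resulting pieces are \emph{not} copies of $H^{m'-1}_{k',r'}$ for smaller parameters, so your inductive hypothesis (``simultaneously over all recursion levels'') does not apply to them directly. Repairing this would force you to strengthen the hypothesis to arbitrary subtrees of the spanning tree, at which point the level structure becomes irrelevant.

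The paper sidesteps all of this. It first proves (by a short induction on $k+r$, not on $|Z|$) that in $G(k,r)$ every vertex is within distance $r$ of each of its ancestors in the spanning tree $T(k,r)$, via a path using only vertices on the tree path between them. Consequently every subtree of $T(k,r)$ induces a subgraph of diameter at most $2r$. The fourth bullet is then a one-liner: each connected component of the \emph{tree} $T(k,r)-Z$ contains at most one vertex of $B$, and since $T(k,r)$ has maximum degree $d+1=m'$, removing $|Z|$ vertices leaves at most $|Z|\cdot m'+1$ components. There is no recursion into the gadget structure, no level-tracking, and the additive composition you were worried about is just the elementary component count in a bounded-degree tree.
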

Before we proceed with the proof, let us discuss the statement and its implications.
Most importantly,
the example of Theorem~\ref{thm:lb} is weak in the sense that it treats $2r$-independent
sets, as opposed to $r$-independent sets output by the algorithm of Section~\ref{ss:uqw-tgv}.
However, it shows that even in bounded treewidth graph classes the dependency between
the size of the input set $A$ and the size of the output independent set $B$
needs to be polynomial with degree depending on the quality of the graph class in question
(here, $c = \wcol_r(G_{k,r,m'})$).
Apart from this slackness, the bounds in Theorem~\ref{thm:lb} are very similar to the ones
of Theorem~\ref{thm:uqw-tgv}: to get an independent set of size $m := cm'+1$ in a graph
with $\wcol_r = c$ one needs
a vertex set of a graph of size $(m'-1)^c \sim (m/c)^c$ 
and the deletion of $c$ vertices.

\begin{proof}[Proof of Theorem~\ref{thm:lb}.]
We start by recalling the construction of~\cite{GroheKRSS15}.
Fix a branching degree $d$.
For every $k,r \geq 1$ let $T(k,r)$ be a rooted tree of depth $c = \binom{k+r}{r}$
and branching degree $d$.
We define graphs $G(k,r)$ inductively as follows. 

First, we start with $T(k,r)$ being a spanning tree of $G(k,r)$.
We will maintain the invariant that every edge of $G(k,r)$ connects an ancestor and a descendant
in $T(k,r)$ (i.e., $G(k,r)$ is a subgraph of ancestor-descendant closure of $T(k,r)$).

For $k=1$, we take $G(k,r) = T(k,r)$. 
For $r=1$, we take $G(k,r)$ to be the whole ancestor-descendant closure of $T(k,r)$, that is,
we add $uv$ to $E(G(k,r))$ whenever $u$ is an ancestor of $v$ in $T(k,r)$.
For $k,r \geq 2$, note that one can equivalently construct $T(k,r)$ as follows:
start with $T(k,r-1)$ and for every leaf $v$ of $T(k,r-1)$, create $d$ copies of $T(k-1,r)$
and connect their roots to $v$.
To define $G(k,r)$, we proceed as follows: we start with $G(k,r-1)$ and for every leaf
$v$ of the spanning tree $T(k,r-1)$ of $G(k,r-1)$, we create $d$ copies of $G(k-1,r)$
and make all of them fully adjacent to $v$.

In~\cite{GroheKRSS15}, it is shown that the treewidth
of $G(k,r)$ is $k$, and that as long as $d \geq c = \binom{k+r}{r}$, in every
ordering~$L$ of $V(G(k,r))$ there exists a leaf $v$ of $T(k,r)$ with its every ancestor
belonging to $\operatorname{WReach}_r[G(k,r), L, v]$ (in particular, $\wcol_r(G(k,r)) \geq c$).
We take $G(k,r,m') = G(k,r)$ for branching degree $d = m'-1$; recall that $m' > c = \binom{k+r}{r}$.
The bound on the number of vertices of $G(k,r,m')$ is straightforward.
It remains to show the last property of $G(k,r,m')$.

We start by observing the following.
\begin{claim}\label{cl:lb:jump}
For every $v \in V(G(k,r))$ and its ancestor $u$ in $T(k,r)$, there exists
a path from $v$ to $u$ of length at most $r$ that traverses only vertices on the unique
path from $v$ to $u$ in $T(k,r)$.
\end{claim}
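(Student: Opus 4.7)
The plan is to prove the claim by induction on the pair $(k,r)$, mirroring the recursive construction of $G(k,r)$ described just before the claim. The two base cases are $r=1$ and $k=1$: for $r=1$ the graph $G(k,1)$ is the full ancestor--descendant closure of $T(k,1)$, so the single edge $vu$ itself is a path of length $1$ whose vertex set $\{v,u\}$ trivially lies on the tree path; for $k=1$ the graph $G(1,r)$ equals $T(1,r)$, and the unique tree path from $v$ to $u$ already has length at most $r$ under the paper's depth convention, so it witnesses the claim.

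For the inductive step with $k,r \geq 2$, recall that $G(k,r)$ is obtained from the inner graph $G(k,r-1)$ by attaching $d$ copies of $G(k-1,r)$ to every leaf $w$ of the spanning tree $T(k,r-1)$, with every vertex of each copy made fully adjacent to the attachment vertex $w$. I would split on where $v$ and $u$ sit relative to this decomposition. If both lie inside $G(k,r-1)$, the relevant tree path in $T(k,r)$ is inherited from $T(k,r-1)$ and induction on $r$ provides a path of length at most $r-1$ along it. If both lie inside a common attached copy of $G(k-1,r)$, then the tree path in $T(k,r)$ coincides with the tree path in the copy's own $T(k-1,r)$, and induction on $k$ supplies a path of length at most $r$ along it.

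The only genuinely new case is when $v$ sits inside an attached copy hanging off some leaf $w$ of $T(k,r-1)$ while $u$ is either equal to $w$ or lies strictly above $w$ in the inner graph. Here I would exploit the full adjacency between $w$ and the attached copy: the single edge $vw$ is present in $G(k,r)$, and $w$ lies on the tree path from $v$ to $u$ in $T(k,r)$. Applying the induction hypothesis on $r$ to the pair $(w,u)$ inside $G(k,r-1)$ produces a path of length at most $r-1$ all of whose vertices lie on the tree path from $w$ to $u$; concatenating with the edge $vw$ yields the desired path from $v$ to $u$ of total length at most $r$, whose vertices all lie on the tree path from $v$ to $u$. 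I do not anticipate any serious obstacle: the only book-keeping is to check that the single-edge ``teleport'' from $v$ to $w$ does not introduce vertices off the tree path, which is immediate since $v$ and $w$ are themselves on it; this case is also precisely the one where the tightening from the $r-1$ bound in the inner part to the global $r$ bound happens.
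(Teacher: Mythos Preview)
Your proposal is correct and follows essentially the same approach as the paper: induction on the recursive construction of $G(k,r)$, with the same case split (both vertices in $G(k,r-1)$, both in the same copy of $G(k-1,r)$, or $v$ in an attached copy and $u$ in the inner part handled via the edge $vw$ plus the inductive $(r-1)$-path from $w$ to $u$). You spell out the base cases and the depth convention for $k=1$ a bit more explicitly than the paper does, but the argument is the same.
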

\begin{proof}
We proceed by induction on $k+r$. For $k=1$ or $r=1$ the statement is straightforward.
Assume then $k,r \geq 2$, and recall that $G(k,r)$ consists of $G(k,r-1)$ and
$d$ copies of $G(k-1,r)$ attached to every leaf of $T(k,r-1)$.

If $u$ and $v$ both belong to $G(k,r-1)$ or to the same copy of $G(k-1,r)$, then we are done
by the inductive hypothesis. Otherwise, $v$ belongs to a copy of $G(k-1,r)$ attached to a leaf $w$
of $T(k,r-1)$, and $u$ belongs to $G(k,r-1)$. By the inductive hypothesis, there exists
a path of length at most $r-1$ from $w$ to $u$ that uses only vertices on the path from
$w$ to $u$ in $T(k,r-1)$. Together with the edge $vw$, this path forms the desired path
from $v$ to $u$.
\cqed\end{proof}
Consequently, for every subtree $T$ of $T(k,r)$, every vertex of $T$
is within distance at most $r$ from the topmost vertex of $T$ in $G(k,r)$, and, consequently,
the vertex set of $T$ induces a graph of diameter at most $2r$ in $G(k,r)$.

Consider now a pair of disjoint sets $B,Z \subseteq V(G)$ such that $B$ is $2r$-independent
in $G-Z$. The observation from the preceding paragraph implies that every connected
component of $T(k,r)-Z$ contains at most one vertex of $B$. On the other hand, the maximum
degree of $T(k,r)$ is $d+1 = m'$. Consequently, $|B| \leq |Z|m' + 1$. This finishes
the proof.
\end{proof}


\section{Conclusions}

We have conducted a thorough empirical evaluation of algorithms for computing generalized coloring numbers and uniform quasi-wideness.
In the case of the weak coloring number, one of the simplest heuristics achieved very good results and was
only outperformed by two greedy heuristics that also do not enjoy any theoretical guarantees.
For uniform quasi-wideness, again the simplest heuristic outperformed all other approaches.
From the algorithms with provable guarantees, the experiments indicated
a variant of the algorithm of~\cite{HeuvelMQRS17} as the algorithm
of choice for generalized coloring numbers and
a variant of the algorithm of~\cite{Pil2017number} as the algorithm of choice
for uniform quasi-wideness.

Furthermore, 
our new algorithm for uniform quasi-wideness, whose development was motivated by the conservativeness of the previous approach of~\cite{KreutzerPRS16}, performed rather poorly in the experiments. Our explanation for this result 
is that the main combinatorial idea in this approach, to restrict the search space upon deletion step to the weakly reachable set
of the deleted vertex, while necessary for the theoretical guarantee on the size of the deletion set, is too conservative in practice.

As a direction for future work, we would like to suggest a more in-depth study of the distribution 
of the values of generalized coloring numbers in different classes of real-world networks, similarly as it is 
done for $p$-treedepth colorings in~\cite{ComplexNetworks}.
Furthermore, as discussed in Section~\ref{sec:isnaive}, one could explore the possibility of using 
more sophisticated maximum independent set heuristics to improve upon the simplest heuristic for uniform quasi-wideness. 
It would also be interesting to find and implement efficient heuristics for lower bounds of weak coloring numbers. A small gap between them and our upper bounds would mean that both have a good quality. Otherwise we would know that there is room for improvement.  
Finally, it would be interesting to use the findings of this work for some start-to-end pipeline for a problem such as
motif counting (see~\cite{OBrienSullivan} for experimental evaluation of a pipeline using $p$-treedepth colorings).



\section*{Acknowledgments}

We thank Christoph Dittmann for providing us with his code for the
\mfcs algorithm, which we partially used for our implementation. 
We thank Micha\l{} Pilipczuk for many hours of
fruitful discussions.
Furthermore, we thank anonymous reviewers for their very valuable and in-depth comments on the manuscript.

\bibliographystyle{abbrv}
\bibliography{refs-corrected}

\begin{thebibliography}{10}

\bibitem{GephiData}
Gephi datasets.
\newblock \url{https://github.com/gephi/gephi/wiki/Datasets}.

\bibitem{LEDA}
{LEDA}.

\bibitem{fnp-webpage}
Recent trends in kernelization: theory and experimental evaluation --- project
  website, 2018.

\bibitem{Polblogs}
L.~A. Adamic and N.~Glance.
\newblock The political blogosphere and the 2004 {US} election: divided they
  blog.
\newblock In {\em Proceedings of the 3rd International Workshop on Link
  Discovery}, pages 36--43. ACM, 2005.

\bibitem{AmiriMRS17}
S.~Akhoondian~Amiri, P.~Ossona~de Mendez, R.~Rabinovich, and S.~Siebertz.
\newblock Distributed domination on graph classes of bounded expansion.
\newblock In {\em Proceedings of the 30th on Symposium on Parallelism in
  Algorithms and Architectures, SPAA 2018}, pages 143--151, 2018.

\bibitem{AlberFN04}
J.~Alber, M.~R. Fellows, and R.~Niedermeier.
\newblock Polynomial-time data reduction for dominating set.
\newblock {\em Journal of the {ACM}}, 51(3):363--384, 2004.

\bibitem{PajekData}
V.~Batagelj and A.~Mrvar.
\newblock Pajek datasets.
\newblock \url{ http://vlado.fmf.uni-lj.si/pub/networks/data/}, 2006.

\bibitem{BerrySeparators}
A.~Berry, J.-P. Bordat, and O.~Cogis.
\newblock Generating all the minimal separators of a graph.
\newblock {\em International Journal of Foundations of Computer Science},
  11(03):397--403, 2000.

\bibitem{Bodlaender96}
H.~L. Bodlaender.
\newblock A linear-time algorithm for finding tree-decompositions of small
  treewidth.
\newblock {\em {SIAM} J. Comput.}, 25(6):1305--1317, 1996.

\bibitem{bodlaender1997treewidth}
H.~L. Bodlaender.
\newblock Treewidth: Algorithmic techniques and results.
\newblock In {\em Mathematical Foundations of Computer Science 1997, 22nd
  International Symposium, MFCS 1997}, volume 1295, pages 19--36, 1997.

\bibitem{BodlaenderKoster2010}
H.~L. Bodlaender and A.~M. C.~A. Koster.
\newblock Treewidth computations {I}. {U}pper bounds.
\newblock {\em Information and Computation}, 208(3):259--275, 2010.

\bibitem{Yeast}
D.~Bu, Y.~Zhao, L.~Cai, H.~Xue, X.~Zhu, H.~Lu, J.~Zhang, S.~Sun, L.~Ling,
  N.~Zhang, et~al.
\newblock Topological structure analysis of the protein--protein interaction
  network in budding yeast.
\newblock {\em Nucleic acids research}, 31(9):2443--2450, 2003.

\bibitem{BrightKite}
E.~Cho, S.~A. Myers, and J.~Leskovec.
\newblock Friendship and mobility: user movement in location-based social
  networks.
\newblock In {\em Proceedings of the 17th {ACM} {SIGKDD} International
  Conference on Knowledge Discovery and Data Mining, 2011}, pages 1082--1090,
  2011.

\bibitem{chung2002average}
F.~Chung and L.~Lu.
\newblock The average distances in random graphs with given expected degrees.
\newblock {\em Proceedings of the National Academy of Sciences},
  99(25):15879--15882, 2002.

\bibitem{chung2002connected}
F.~Chung and L.~Lu.
\newblock Connected components in random graphs with given expected degree
  sequences.
\newblock {\em Annals of combinatorics}, 6(2):125--145, 2002.

\bibitem{ChungL03}
F.~R.~K. Chung and L.~Lu.
\newblock The average distance in a random graph with given expected degrees.
\newblock {\em Internet Mathematics}, 1(1):91--113, 2003.

\bibitem{dabook}
M.~Cygan, F.~V. Fomin, L.~Kowalik, D.~Lokshtanov, D.~Marx, M.~Pilipczuk,
  M.~Pilipczuk, and S.~Saurabh.
\newblock {\em Parameterized Algorithms}.
\newblock Springer, 2015.

\bibitem{mis-heur}
K.~N. Das and B.~Chaudhuri.
\newblock Heuristics to find maximum independent set: An overview.
\newblock In {\em Proceedings of the International Conference on Soft Computing
  for Problem Solving, SocProS 2011}, pages 881--892, 2012.

\bibitem{Dawar10}
A.~Dawar.
\newblock Homomorphism preservation on quasi-wide classes.
\newblock {\em Journal of Computer and System Sciences}, 76(5):324--332, 2010.

\bibitem{DawarK09}
A.~Dawar and S.~Kreutzer.
\newblock Domination problems in nowhere-dense classes.
\newblock In {\em {IARCS} Annual Conference on Foundations of Software
  Technology and Theoretical Computer Science, {FSTTCS} 2009}, pages 157--168,
  2009.

\bibitem{PACE2016}
H.~Dell, T.~Husfeldt, B.~M.~P. Jansen, P.~Kaski, C.~Komusiewicz, and F.~A.
  Rosamond.
\newblock {The First Parameterized Algorithms and Computational Experiments
  Challenge}.
\newblock In {\em 11th International Symposium on Parameterized and Exact
  Computation, IPEC 2016}, volume~63, pages 30:1--30:9, 2017.

\bibitem{DemaineH08}
E.~D. Demaine and M.~Hajiaghayi.
\newblock The bidimensionality theory and its algorithmic applications.
\newblock {\em The Computer Journal}, 51(3):292--302, 2007.

\bibitem{DemaineHK09}
E.~D. Demaine, M.~Hajiaghayi, and K.~Kawarabayashi.
\newblock Algorithmic graph minor theory: Improved grid minor bounds and
  wagner's contraction.
\newblock {\em Algorithmica}, 54(2):142--180, 2009.

\bibitem{demaine2005algorithmic}
E.~D. Demaine, M.~Hajiaghayi, and K.-i. Kawarabayashi.
\newblock Algorithmic graph minor theory: Decomposition, approximation, and
  coloring.
\newblock In {\em 46th Annual IEEE Symposium on Foundations of Computer
  Science, {FOCS} 2005}, pages 637--646, 2005.

\bibitem{demaine2011contraction}
E.~D. Demaine, M.~Hajiaghayi, and K.-i. Kawarabayashi.
\newblock Contraction decomposition in h-minor-free graphs and algorithmic
  applications.
\newblock In {\em Proceedings of the forty-third annual ACM symposium on Theory
  of computing, {STOC} 2011}, pages 441--450. ACM, 2011.

\bibitem{demaine2010approximation}
E.~D. Demaine, M.~Hajiaghayi, and B.~Mohar.
\newblock Approximation algorithms via contraction decomposition.
\newblock {\em Combinatorica}, 30(5):533--552, 2010.

\bibitem{ComplexNetworks}
E.~D. Demaine, F.~Reidl, P.~Rossmanith, F.~S. Villaamil, S.~Sikdar, and B.~D.
  Sullivan.
\newblock Structural sparsity of complex networks: Random graph models and
  linear algorithms.
\newblock {\em CoRR}, abs/1406.2587, 2014.

\bibitem{devos2004excluding}
M.~DeVos, G.~Ding, B.~Oporowski, D.~P. Sanders, B.~Reed, P.~Seymour, and
  D.~Vertigan.
\newblock Excluding any graph as a minor allows a low tree-width 2-coloring.
\newblock {\em J. Comb. Theory, Ser. B}, 91(1):25--41, 2004.

\bibitem{DrangeDFKLPPRVS16}
P.~G. Drange, M.~S. Dregi, F.~V. Fomin, S.~Kreutzer, D.~Lokshtanov,
  M.~Pilipczuk, M.~Pilipczuk, F.~Reidl, F.~S. Villaamil, S.~Saurabh,
  S.~Siebertz, and S.~Sikdar.
\newblock Kernelization and sparseness: the case of dominating set.
\newblock In {\em 33rd Symposium on Theoretical Aspects of Computer Science,
  {STACS} 2016}, pages 31:1--31:14, 2016.

\bibitem{Dvorak13}
Z.~Dvo{\v{r}}{\'{a}}k.
\newblock Constant-factor approximation of the domination number in sparse
  graphs.
\newblock {\em European Journal of Combinatorics}, 34(5):833--840, 2013.

\bibitem{dvovrak2017distance}
Z.~Dvo{\v{r}}{\'a}k.
\newblock On distance-dominating and-independent sets in sparse graphs.
\newblock {\em Journal of Graph Theory}, 2017.

\bibitem{DvorakKT13}
Z.~Dvo{\v{r}}{\'{a}}k, D.~Kr{\'{a}}l', and R.~Thomas.
\newblock Testing first-order properties for subclasses of sparse graphs.
\newblock {\em Journal of the {ACM}}, 60(5):36:1--36:24, 2013.

\bibitem{EibenKMPS18}
E.~Eiben, M.~Kumar, A.~E. Mouawad, F.~Panolan, and S.~Siebertz.
\newblock Lossy kernels for connected dominating set on sparse graphs.
\newblock In {\em 35th Symposium on Theoretical Aspects of Computer Science,
  {STACS} 2018}, volume~96 of {\em LIPIcs}, pages 29:1--29:15, 2018.

\bibitem{EickmeyerGKKPRS17}
K.~Eickmeyer, A.~C. Giannopoulou, S.~Kreutzer, O.~Kwon, M.~Pilipczuk,
  R.~Rabinovich, and S.~Siebertz.
\newblock Neighborhood complexity and kernelization for nowhere dense classes
  of graphs.
\newblock In {\em 44th International Colloquium on Automata, Languages, and
  Programming, {ICALP}}, pages 63:1--63:14, 2017.

\bibitem{Football}
M.~Girvan and M.~E.~J. Newman.
\newblock Community structure in social and biological networks.
\newblock {\em Proceedings of the National Academy of Sciences},
  99(12):7821--7826, 2002.

\bibitem{Diseasome}
K.-I. Goh, M.~E. Cusick, D.~Valle, B.~Childs, M.~Vidal, and A.-L. Barab{\'a}si.
\newblock The human disease network.
\newblock {\em Proceedings of the National Academy of Sciences},
  104(21):8685--8690, 2007.

\bibitem{GroheKRSS15}
M.~Grohe, S.~Kreutzer, R.~Rabinovich, S.~Siebertz, and K.~Stavropoulos.
\newblock Colouring and covering nowhere dense graphs.
\newblock In {\em Graph-Theoretic Concepts in Computer Science - 41st
  International Workshop, {WG} 2015}, volume 9224, pages 325--338, 2015.

\bibitem{GroheKS17}
M.~Grohe, S.~Kreutzer, and S.~Siebertz.
\newblock Deciding first-order properties of nowhere dense graphs.
\newblock {\em Journal of the {ACM}}, 64(3):17:1--17:32, 2017.

\bibitem{SB}
P.~W. Holland, K.~B. Laskey, and S.~Leinhardt.
\newblock Stochastic blockmodels: First steps.
\newblock {\em Social networks}, 5(2):109--137, 1983.

\bibitem{KazanaS13}
W.~Kazana and L.~Segoufin.
\newblock Enumeration of first-order queries on classes of structures with
  bounded expansion.
\newblock In {\em Proceedings of the 32nd {ACM} {SIGMOD-SIGACT-SIGART}
  Symposium on Principles of Database Systems, {PODS}}, pages 297--308, 2013.

\bibitem{kierstead03}
H.~A. Kierstead and D.~Yang.
\newblock Orderings on graphs and game coloring number.
\newblock {\em Order}, 20:255--264, 2003.

\bibitem{EnronData}
B.~Klimt and Y.~Yang.
\newblock Introducing the enron corpus.
\newblock In {\em {CEAS} 2004 - First Conference on Email and Anti-Spam}, 2004.

\bibitem{GraphBaseData}
D.~E. Knuth.
\newblock {\em The Stanford GraphBase: a platform for combinatorial computing},
  volume~37.
\newblock Addison-Wesley Reading, 1993.

\bibitem{KreutzerPRS16}
S.~Kreutzer, M.~Pilipczuk, R.~Rabinovich, and S.~Siebertz.
\newblock The generalised colouring numbers on classes of bounded expansion.
\newblock In {\em 41st International Symposium on Mathematical Foundations of
  Computer Science, {MFCS}}, 2016.

\bibitem{KreutzerRS17}
S.~Kreutzer, R.~Rabinovich, and S.~Siebertz.
\newblock Polynomial kernels and wideness properties of nowhere dense graph
  classes.
\newblock In {\em Proceedings of the Twenty-Eighth Annual {ACM-SIAM} Symposium
  on Discrete Algorithms, {SODA} 2017}, pages 1533--1545, 2017.

\bibitem{KunOS18}
J.~Kun, M.~P. O'Brien, and B.~D. Sullivan.
\newblock Treedepth bounds in linear colorings.
\newblock In {\em Graph-Theoretic Concepts in Computer Science - 44th
  International Workshop, {WG} 2018}, pages 331--343, 2018.

\bibitem{konect}
J.~Kunegis.
\newblock {K}{O}{N}{E}{C}{T} - the {K}oblenz network collection.
\newblock In {\em Proc. Int. Web Observatory Workshop}, pages 1343--1350, 2013.

\bibitem{Lamm0SWZ19}
S.~Lamm, C.~Schulz, D.~Strash, R.~Williger, and H.~Zhang.
\newblock Exactly solving the maximum weight independent set problem on large
  real-world graphs.
\newblock In {\em Proceedings of the Twenty-First Workshop on Algorithm
  Engineering and Experiments, {ALENEX} 2019}, pages 144--158, 2019.

\bibitem{Slashdot}
J.~Leskovec, D.~Huttenlocher, and J.~Kleinberg.
\newblock Signed networks in social media.
\newblock In {\em Proceedings of the SIGCHI conference on human factors in
  computing systems}, pages 1361--1370. ACM, 2010.

\bibitem{Densification}
J.~Leskovec, J.~Kleinberg, and C.~Faloutsos.
\newblock Graphs over time: densification laws, shrinking diameters and
  possible explanations.
\newblock In {\em Proceedings of the eleventh ACM SIGKDD international
  conference on Knowledge discovery in data mining}, pages 177--187, 2005.

\bibitem{SnapData}
J.~Leskovec and A.~Krevl.
\newblock {SNAP Datasets}: {Stanford} large network dataset collection.
\newblock \url{http://snap.stanford.edu/data}, June 2014.

\bibitem{lokshtanov2018reconfiguration}
D.~Lokshtanov, A.~E. Mouawad, F.~Panolan, M.~Ramanujan, and S.~Saurabh.
\newblock Reconfiguration on sparse graphs.
\newblock {\em Journal of Computer and System Sciences}, 95:122--131, 2018.

\bibitem{Dolphins}
D.~Lusseau, K.~Schneider, O.~J. Boisseau, P.~Haase, E.~Slooten, and S.~M.
  Dawson.
\newblock The bottlenose dolphin community of doubtful sound features a large
  proportion of long-lasting associations.
\newblock {\em Behavioral Ecology and Sociobiology}, 54(4):396--405, 2003.

\bibitem{NadaraDomSet}
W.~Nadara.
\newblock Experimental evaluation of kernelization algorithms to dominating
  set.
\newblock {\em CoRR}, abs/1811.07831, 2018.

\bibitem{our-repo}
W.~Nadara, M.~Pilipczuk, F.~Reidl, R.~Rabinovich, and S.~Siebertz.
\newblock Empirical evaluation of approximation algorithms for generalized
  graph coloring and uniform quasi-wideness. code repository, 2018.

\bibitem{nevsetvril2006tree}
J.~\Nesetril and P.~{Ossona de Mendez}.
\newblock Tree-depth, subgraph coloring and homomorphism bounds.
\newblock {\em European Journal of Combinatorics}, 27:1022--1041, 2006.

\bibitem{NesetrilM08}
J.~\Nesetril and P.~{Ossona de Mendez}.
\newblock Grad and classes with bounded expansion~{I}. decompositions.
\newblock {\em European Journal of Combinatorics}, 29(3):760--776, 2008.

\bibitem{NesetrilM08a}
J.~\Nesetril and P.~{Ossona de Mendez}.
\newblock Grad and classes with bounded expansion~{II.} algorithmic aspects.
\newblock {\em Europeean Journal of Combinatorics}, 29(3):777--791, 2008.

\bibitem{NesetrilM08b}
J.~\Nesetril and P.~{Ossona de Mendez}.
\newblock Grad and classes with bounded expansion~{III.} restricted graph
  homomorphism dualities.
\newblock {\em European Journal of Combinatorics}, 29(4):1012--1024, 2008.

\bibitem{NesetrilM10}
J.~\Nesetril and P.~{Ossona de Mendez}.
\newblock First order properties on nowhere dense structures.
\newblock {\em The Journal of Symbolic Logic}, 75(3):868--887, 2010.

\bibitem{NesetrilM11a}
J.~\Nesetril and P.~{Ossona de Mendez}.
\newblock On nowhere dense graphs.
\newblock {\em European Journal of Combinatorics}, 32(4):600--617, 2011.

\bibitem{sparsity}
J.~\Nesetril and P.~{Ossona de Mendez}.
\newblock {\em Sparsity - Graphs, Structures, and Algorithms}, volume~28 of
  {\em Algorithms and combinatorics}.
\newblock Springer, 2012.

\bibitem{CondMat}
M.~E. Newman.
\newblock The structure of scientific collaboration networks.
\newblock {\em Proceedings of the national academy of sciences},
  98(2):404--409, 2001.

\bibitem{Netscience}
M.~E. Newman.
\newblock Finding community structure in networks using the eigenvectors of
  matrices.
\newblock {\em Physical review E}, 74(3):036104, 2006.

\bibitem{cliquer}
S.~Niskanen and P.~R.~J. \"{O}sterg\aa{}rd.
\newblock Cliquer user's guide, version 1.0.
\newblock Communications Laboratory, Helsinki University of Technology, Espoo,
  Finland, Tech. Rep. T48, 2003.

\bibitem{OBrienSullivan}
M.~P. O'Brien and B.~D. Sullivan.
\newblock Experimental evaluation of counting subgraph isomorphisms in classes
  of bounded expansion.
\newblock {\em CoRR}, abs/1712.06690, 2017.

\bibitem{TDImplementation}
T.~Oelschl\"agel.
\newblock {\em {Treewidth from Treedepth}}.
\newblock Bachelor's thesis, RWTH Aachen University, Germany, 2014.

\bibitem{pilipczuk2018kernelization}
M.~Pilipczuk and S.~Siebertz.
\newblock Kernelization and approximation of distance-r independent sets on
  nowhere dense graphs.
\newblock {\em CoRR}, abs/1809.05675, 2018.

\bibitem{pilipczuk2018polynomial}
M.~Pilipczuk and S.~Siebertz.
\newblock Polynomial bounds for centered colorings on proper minor-closed graph
  classes.
\newblock In {\em Proceedings of the Thirtieth Annual {ACM-SIAM} Symposium on
  Discrete Algorithms, {SODA} 2019}, pages 1501--1520, 2019.

\bibitem{Pil2017number}
M.~Pilipczuk, S.~Siebertz, and S.~Toru{\'{n}}czyk.
\newblock On the number of types in sparse graphs.
\newblock In {\em Proceedings of the 33rd Annual {ACM/IEEE} Symposium on Logic
  in Computer Science, {LICS} 2018}, pages 799--808, 2018.

\bibitem{podewski1978stable}
K.-P. Podewski and M.~Ziegler.
\newblock Stable graphs.
\newblock {\em Fund. Math}, 100(2):101--107, 1978.

\bibitem{FelixThesis}
F.~Reidl.
\newblock {\em Structural sparseness and complex networks}.
\newblock PhD thesis, {RWTH} Aachen University, Germany, 2016.

\bibitem{ReidlVS16}
F.~Reidl, F.~S. Villaamil, and K.~Stavropoulos.
\newblock Characterising bounded expansion by neighbourhood complexity.
\newblock {\em CoRR}, abs/1603.09532, 2016.

\bibitem{Epinions}
M.~Richardson, R.~Agrawal, and P.~Domingos.
\newblock Trust management for the semantic web.
\newblock In {\em International semantic Web conference}, pages 351--368.
  Springer, 2003.

\bibitem{Gnutella}
M.~Ripeanu, I.~Foster, and A.~Iamnitchi.
\newblock Mapping the {G}nutella {N}etwork.
\newblock {\em {IEEE} Internet Computing}, 6(1):50--57, 2002.

\bibitem{GM}
N.~Robertson and P.~D. Seymour.
\newblock Graph minors {I}-{X}{X}{I}{I}.
\newblock 1982-2010.

\bibitem{NetworkRepo}
R.~A. Rossi and N.~K. Ahmed.
\newblock The network data repository with interactive graph analytics and
  visualization.
\newblock In {\em Proceedings of the Twenty-Ninth {AAAI} Conference on
  Artificial Intelligence, AAAI 2015}, pages 4292--4293, 2015.

\bibitem{FernandoThesis}
F.~{S{\'{a}}nchez Villaamil}.
\newblock {\em {A}bout {T}reedepth and {R}elated {N}otions}.
\newblock Dissertation, RWTH Aachen University, Aachen, 2017.

\bibitem{Siebertz17}
S.~Siebertz.
\newblock Reconfiguration on nowhere dense graph classes.
\newblock {\em Electr. J. Comb.}, 25(3):P3.24, 2018.

\bibitem{SpDataSchool}
J.~Stehl{\'{e}}, N.~Voirin, A.~Barrat, C.~Cattuto, L.~Isella, J.~Pinton,
  M.~Quaggiotto, W.~V. den Broeck, C.~R{\'{e}}gis, B.~Lina, and P.~Vanhems.
\newblock High-resolution measurements of face-to-face contact patterns in a
  primary school.
\newblock {\em PLOS ONE}, 6(8):e23176, 08 2011.

\bibitem{Tamaki17}
H.~Tamaki.
\newblock Positive-instance driven dynamic programming for treewidth.
\newblock In {\em 25th Annual European Symposium on Algorithms, {ESA} 2017},
  volume~87, pages 68:1--68:13, 2017.

\bibitem{HeuvelMQRS17}
J.~van~den Heuvel, P.~O. de~Mendez, D.~A. Quiroz, R.~Rabinovich, and
  S.~Siebertz.
\newblock On the generalised colouring numbers of graphs that exclude a fixed
  minor.
\newblock {\em European Journal of Combinatorics}, 66:129--144, 2017.

\bibitem{Power}
D.~J. Watts and S.~H. Strogatz.
\newblock Collective dynamics of `small-world' networks.
\newblock {\em Nature}, 393(6684):440, 1998.

\bibitem{CElegans}
J.~G. White, E.~Southgate, J.~N. Thomson, and S.~Brenner.
\newblock The structure of the nervous system of the nematode caenorhabditis
  elegans: the mind of a worm.
\newblock {\em Phil. Trans. R. Soc. Lond}, 314:1--340, 1986.

\bibitem{Karate}
W.~W. Zachary.
\newblock An information flow model for conflict and fission in small groups.
\newblock {\em Journal of anthropological research}, pages 452--473, 1977.

\bibitem{zhu2009colouring}
X.~Zhu.
\newblock Colouring graphs with bounded generalized colouring number.
\newblock {\em Discrete Math.}, 309:5562--5568, 2009.

\end{thebibliography}


\begin{thebibliography}{10}

\bibitem{GephiData}
Gephi datasets.
\newblock \url{https://github.com/gephi/gephi/wiki/Datasets}.

\bibitem{LEDA}
{LEDA}.
\newblock \texttt{http://www.algorithmic-solutions.com/leda/index.htm}.

\bibitem{fnp-webpage}
Recent trends in kernelization: theory and experimental evaluation --- project
  website, 2018.
\newblock \texttt{http://kernelization-experiments.mimuw.edu.pl}.

\bibitem{Polblogs}
Lada~A Adamic and Natalie Glance.
\newblock The political blogosphere and the 2004 {US} election: divided they
  blog.
\newblock In {\em Proceedings of the 3rd International Workshop on Link
  Discovery}, pages 36--43. ACM, 2005.

\bibitem{AlberFN04}
Jochen Alber, Michael~R. Fellows, and Rolf Niedermeier.
\newblock Polynomial-time data reduction for dominating set.
\newblock {\em Journal of the {ACM}}, 51(3):363--384, 2004.

\bibitem{AmiriMRS17}
Saeed~Akhoondian Amiri, Patrice~Ossona de~Mendez, Roman Rabinovich, and
  Sebastian Siebertz.
\newblock Distributed domination on graph classes of bounded expansion.
\newblock {\em CoRR}, abs/1702.02848, 2017.

\bibitem{PajekData}
Vladimir Batagelj and Andrej Mrvar.
\newblock Pajek datasets.
\newblock \url{ http://vlado.fmf.uni-lj.si/pub/networks/data/}, 2006.

\bibitem{BerrySeparators}
Anne Berry, Jean-Paul Bordat, and Olivier Cogis.
\newblock Generating all the minimal separators of a graph.
\newblock {\em International Journal of Foundations of Computer Science},
  11(03):397--403, 2000.

\bibitem{bodlaender1997treewidth}
Hans~L. Bodlaender.
\newblock Treewidth: Algorithmic techniques and results.
\newblock In {\em International Symposium on Mathematical Foundations of
  Computer Science}, pages 19--36. Springer, 1997.

\bibitem{BodlaenderKoster2010}
Hans~L. Bodlaender and Arie M. C.~A. Koster.
\newblock Treewidth computations {I}. {U}pper bounds.
\newblock {\em Information and Computation}, 208(3):259--275, 2010.

\bibitem{Yeast}
Dongbo Bu, Yi~Zhao, Lun Cai, Hong Xue, Xiaopeng Zhu, Hongchao Lu, Jingfen
  Zhang, Shiwei Sun, Lunjiang Ling, Nan Zhang, et~al.
\newblock Topological structure analysis of the protein--protein interaction
  network in budding yeast.
\newblock {\em Nucleic acids research}, 31(9):2443--2450, 2003.

\bibitem{BrightKite}
Eunjoon Cho, Seth~A. Myers, and Jure Leskovec.
\newblock Friendship and mobility: user movement in location-based social
  networks.
\newblock pages 1082--1090.

\bibitem{chung2002average}
Fan Chung and Linyuan Lu.
\newblock The average distances in random graphs with given expected degrees.
\newblock {\em Proceedings of the National Academy of Sciences},
  99(25):15879--15882, 2002.

\bibitem{chung2002connected}
Fan Chung and Linyuan Lu.
\newblock Connected components in random graphs with given expected degree
  sequences.
\newblock {\em Annals of combinatorics}, 6(2):125--145, 2002.

\bibitem{ChungL03}
Fan R.~K. Chung and Linyuan Lu.
\newblock The average distance in a random graph with given expected degrees.
\newblock {\em Internet Mathematics}, 1(1):91--113, 2003.

\bibitem{Dawar10}
Anuj Dawar.
\newblock Homomorphism preservation on quasi-wide classes.
\newblock {\em Journal of Computer and System Sciences}, 76(5):324--332, 2010.

\bibitem{DawarK09}
Anuj Dawar and Stephan Kreutzer.
\newblock Domination problems in nowhere-dense classes.
\newblock In {\em {IARCS} Annual Conference on Foundations of Software
  Technology and Theoretical Computer Science, {FSTTCS}}, pages 157--168, 2009.

\bibitem{PACE2016}
Holger Dell, Thore Husfeldt, Bart M.~P. Jansen, Petteri Kaski, Christian
  Komusiewicz, and Frances~A. Rosamond.
\newblock {The First Parameterized Algorithms and Computational Experiments
  Challenge}.
\newblock In {\em 11th International Symposium on Parameterized and Exact
  Computation (IPEC 2016)}, volume~63 of {\em Leibniz International Proceedings
  in Informatics (LIPIcs)}, pages 30:1--30:9, Dagstuhl, Germany, 2017. Schloss
  Dagstuhl--Leibniz-Zentrum fuer Informatik.

\bibitem{DemaineH08}
Erik~D. Demaine and MohammadTaghi Hajiaghayi.
\newblock The bidimensionality theory and its algorithmic applications.
\newblock {\em The Computer Journal}, 51(3):292--302, 2007.

\bibitem{DemaineHK09}
Erik~D. Demaine, MohammadTaghi Hajiaghayi, and Ken{-}ichi Kawarabayashi.
\newblock Algorithmic graph minor theory: Improved grid minor bounds and
  wagner's contraction.
\newblock {\em Algorithmica}, 54(2):142--180, 2009.

\bibitem{ComplexNetworks}
Erik~D. Demaine, Felix Reidl, Peter Rossmanith, Fernando~S{\'{a}}nchez
  Villaamil, Somnath Sikdar, and Blair~D. Sullivan.
\newblock Structural sparsity of complex networks: Random graph models and
  linear algorithms.
\newblock {\em CoRR}, abs/1406.2587, 2014.

\pagebreak
\bibitem{DrangeDFKLPPRVS16}
P{\aa}l~Gr{\o}n{\aa}s Drange, Markus~Sortland Dregi, Fedor~V. Fomin, Stephan
  Kreutzer, Daniel Lokshtanov, Marcin Pilipczuk, Micha{\l} Pilipczuk, Felix
  Reidl, Fernando~S{\'{a}}nchez Villaamil, Saket Saurabh, Sebastian Siebertz,
  and Somnath Sikdar.
\newblock Kernelization and sparseness: the case of dominating set.
\newblock In {\em 33rd Symposium on Theoretical Aspects of Computer Science,
  {STACS}}, pages 31:1--31:14, 2016.

\bibitem{Dvorak13}
Zdenek Dvorak.
\newblock Constant-factor approximation of the domination number in sparse
  graphs.
\newblock {\em European Journal of Combinatorics}, 34(5):833--840, 2013.

\bibitem{DvorakKT13}
Zdenek Dvorak, Daniel Kr{\'{a}}l, and Robin Thomas.
\newblock Testing first-order properties for subclasses of sparse graphs.
\newblock {\em Journal of the {ACM}}, 60(5):36:1--36:24, 2013.

\bibitem{EickmeyerGKKPRS17}
Kord Eickmeyer, Archontia~C. Giannopoulou, Stephan Kreutzer, O{-}joung Kwon,
  Micha{\l} Pilipczuk, Roman Rabinovich, and Sebastian Siebertz.
\newblock Neighborhood complexity and kernelization for nowhere dense classes
  of graphs.
\newblock In {\em 44th International Colloquium on Automata, Languages, and
  Programming, {ICALP}}, pages 63:1--63:14, 2017.

\bibitem{GajarskyHOORRVS17}
Jakub \Gajarsky, Petr \Hlineny, Jan \Obdrzalek, Sebastian Ordyniak, Felix
  Reidl, Peter Rossmanith, Fernando~S{\'{a}}nchez Villaamil, and Somnath
  Sikdar.
\newblock Kernelization using structural parameters on sparse graph classes.
\newblock {\em Journal of Computer and System Sciences}, 84:219--242, 2017.

\bibitem{Football}
Michelle Girvan and Mark E.~J. Newman.
\newblock Community structure in social and biological networks.
\newblock {\em Proceedings of the National Academy of Sciences},
  99(12):7821--7826, 2002.

\bibitem{Diseasome}
Kwang-Il Goh, Michael~E. Cusick, David Valle, Barton Childs, Marc Vidal, and
  Albert-L{\'a}szl{\'o} Barab{\'a}si.
\newblock The human disease network.
\newblock {\em Proceedings of the National Academy of Sciences},
  104(21):8685--8690, 2007.

\bibitem{GroheKRSS15}
Martin Grohe, Stephan Kreutzer, Roman Rabinovich, Sebastian Siebertz, and
  Konstantinos Stavropoulos.
\newblock Colouring and covering nowhere dense graphs.
\newblock In Ernst~W. Mayr, editor, {\em Graph-Theoretic Concepts in Computer
  Science - 41st International Workshop, {WG} 2015}, volume 9224 of {\em
  Lecture Notes in Computer Science}, pages 325--338. Springer, 2015.

\bibitem{GroheKS17}
Martin Grohe, Stephan Kreutzer, and Sebastian Siebertz.
\newblock Deciding first-order properties of nowhere dense graphs.
\newblock {\em Journal of the {ACM}}, 64(3):17:1--17:32, 2017.

\bibitem{SB}
Paul~W. Holland, Kathryn~B. Laskey, and Samuel Leinhardt.
\newblock Stochastic blockmodels: First steps.
\newblock {\em Social networks}, 5(2):109--137, 1983.

\bibitem{KazanaS13}
Wojciech Kazana and Luc Segoufin.
\newblock Enumeration of first-order queries on classes of structures with
  bounded expansion.
\newblock In {\em 32nd
  Symposium on Principles of Database Systems, {PODS}}, pages 297--308, 2013.

\bibitem{kierstead03}
Henry~A. Kierstead and Daqing Yang.
\newblock Orderings on graphs and game coloring number.
\newblock {\em Order}, 20:255--264, 2003.

\bibitem{EnronData}
Bryan Klimt and Yiming Yang.
\newblock Introducing the {Enron} corpus.
\newblock In {\em CEAS}, 2004.

\bibitem{GraphBaseData}
Donald~Ervin Knuth.
\newblock {\em The Stanford GraphBase: a platform for combinatorial computing},
  volume~37.
\newblock Addison-Wesley Reading, 1993.

\bibitem{KreutzerPRS16}
Stephan Kreutzer, Micha{\l} Pilipczuk, Roman Rabinovich, and Sebastian
  Siebertz.
\newblock The generalised colouring numbers on classes of bounded expansion.
\newblock In {\em 41st International Symposium on Mathematical Foundations of
  Computer Science, {MFCS}}, 2016.

\bibitem{KreutzerRS17}
Stephan Kreutzer, Roman Rabinovich, and Sebastian Siebertz.
\newblock Polynomial kernels and wideness properties of nowhere dense graph
  classes.
\newblock In {\em Proceedings of the Twenty-Eighth Annual {ACM-SIAM} Symposium
  on Discrete Algorithms, {SODA}}, pages 1533--1545, 2017.

\bibitem{konect}
Jérôme Kunegis.
\newblock {K}{O}{N}{E}{C}{T} - the {K}oblenz network collection.
\newblock In {\em Proc. Int. Web Observatory Workshop}, pages 1343--1350, 2013.

\bibitem{Slashdot}
Jure Leskovec, Daniel Huttenlocher, and Jon Kleinberg.
\newblock Signed networks in social media.
\newblock In {\em Proceedings of the SIGCHI conference on human factors in
  computing systems}, pages 1361--1370. ACM, 2010.

\bibitem{Densification}
Jure Leskovec, Jon Kleinberg, and Christos Faloutsos.
\newblock Graphs over time: densification laws, shrinking diameters and
  possible explanations.
\newblock In {\em Proceedings of the eleventh ACM SIGKDD international
  conference on Knowledge discovery in data mining}, pages 177--187. ACM, 2005.

\bibitem{SnapData}
Jure Leskovec and Andrej Krevl.
\newblock {SNAP Datasets}: {Stanford} large network dataset collection.
\newblock \url{http://snap.stanford.edu/data}, June 2014.

\bibitem{Dolphins}
David Lusseau, Karsten Schneider, Oliver~J. Boisseau, Patti Haase, Elisabeth
  Slooten, and Steve~M. Dawson.
\newblock The bottlenose dolphin community of doubtful sound features a large
  proportion of long-lasting associations.
\newblock {\em Behavioral Ecology and Sociobiology}, 54(4):396--405, 2003.


\bibitem{our-repo}
Wojciech Nadara, Marcin Pilipczuk, Felix Reidl, Roman Rabinovich, and Sebastian
  Siebertz.
\newblock Empirical evaluation of approximation algorithms for generalized
  graph coloring and uniform quasi-wideness. code repository, 2018.
\newblock
  \texttt{https://bitbucket.org/marcin\_pilipczuk/wcol-uqw-experiments}.

\bibitem{nevsetvril2006tree}
Jaroslav \Nesetril and Patrice {Ossona de Mendez}.
\newblock Tree-depth, subgraph coloring and homomorphism bounds.
\newblock {\em European Journal of Combinatorics}, 27:1022--1041, 2006.

\bibitem{NesetrilM08}
Jaroslav \Nesetril and Patrice {Ossona de Mendez}.
\newblock Grad and classes with bounded expansion~{I}. decompositions.
\newblock {\em European Journal of Combinatorics}, 29(3):760--776, 2008.

\bibitem{NesetrilM08a}
Jaroslav \Nesetril and Patrice {Ossona de Mendez}.
\newblock Grad and classes with bounded expansion~{II.} algorithmic aspects.
\newblock {\em Europeean Journal of Combinatorics}, 29(3):777--791, 2008.

\bibitem{NesetrilM08b}
Jaroslav \Nesetril and Patrice {Ossona de Mendez}.
\newblock Grad and classes with bounded expansion~{III.} restricted graph
  homomorphism dualities.
\newblock {\em European Journal of Combinatorics}, 29(4):1012--1024, 2008.

\bibitem{NesetrilM10}
Jaroslav \Nesetril and Patrice {Ossona de Mendez}.
\newblock First order properties on nowhere dense structures.
\newblock {\em The Journal of Symbolic Logic}, 75(3):868--887, 2010.

\bibitem{NesetrilM11a}
Jaroslav \Nesetril and Patrice {Ossona de Mendez}.
\newblock On nowhere dense graphs.
\newblock {\em European Journal of Combinatorics}, 32(4):600--617, 2011.

\bibitem{sparsity}
Jaroslav \Nesetril and Patrice {Ossona de Mendez}.
\newblock {\em Sparsity - Graphs, Structures, and Algorithms}, volume~28 of
  {\em Algorithms and combinatorics}.
\newblock Springer, 2012.

\bibitem{CondMat}
Mark~EJ Newman.
\newblock The structure of scientific collaboration networks.
\newblock {\em Proceedings of the national academy of sciences},
  98(2):404--409, 2001.

\bibitem{Netscience}
Mark~EJ Newman.
\newblock Finding community structure in networks using the eigenvectors of
  matrices.
\newblock {\em Physical review E}, 74(3):036104, 2006.

\bibitem{OBrienSullivan}
Michael~P. O'Brien and Blair~D. Sullivan.
\newblock Experimental evaluation of counting subgraph isomorphisms in classes
  of bounded expansion.
\newblock {\em CoRR}, abs/1712.06690, 2017.

\bibitem{TDImplementation}
Tobias Oelschl\"agel.
\newblock {\em {Treewidth from Treedepth}}.
\newblock Bachelor's thesis, RWTH Aachen University, Germany, 2014.

\bibitem{Pil2017number}
Micha{\l} Pilipczuk, Sebastian Siebertz, and Szymon Toru\'nczyk.
\newblock On the number of types in sparse graphs.
\newblock {\em arXiv preprint arXiv:1705.09336}, 2017.

\bibitem{FelixThesis}
Felix Reidl.
\newblock {\em {S}tructural sparseness and complex networks}.
\newblock Dr., Aachen, Techn. Hochsch., Aachen, 2016.
\newblock Aachen, Techn. Hochsch., Diss., 2015.

\bibitem{ReidlVS16}
Felix Reidl, Fernando~S{\'{a}}nchez Villaamil, and Konstantinos Stavropoulos.
\newblock Characterising bounded expansion by neighbourhood complexity.
\newblock {\em CoRR}, abs/1603.09532, 2016.

\bibitem{Epinions}
Matthew Richardson, Rakesh Agrawal, and Pedro Domingos.
\newblock Trust management for the semantic web.
\newblock In {\em International semantic Web conference}, pages 351--368.
  Springer, 2003.

\bibitem{Gnutella}
Matei Ripeanu, Ian Foster, and Adriana Iamnitchi.
\newblock Mapping the {G}nutella {N}etwork.
\newblock {\em {IEEE} Internet Computing}, 6(1):50--57, 2002.

\bibitem{GM}
Neil Robertson and Paul~D. Seymour.
\newblock Graph minors {I}-{X}{X}{I}{I}.
\newblock 1982-2010.

\bibitem{NetworkRepo}
Ryan~A. Rossi and Nesreen~K. Ahmed.
\newblock The network data repository with interactive graph analytics and
  visualization.
\newblock In {\em Proceedings of the Twenty-Ninth AAAI Conference on Artificial
  Intelligence}, 2015.

\bibitem{FernandoThesis}
Fernando Sanchez~Villaamil.
\newblock {\em {A}bout {T}reedepth and {R}elated {N}otions}.
\newblock Dissertation, RWTH Aachen University, Aachen, 2017.

\bibitem{Siebertz17}
Sebastian Siebertz.
\newblock Reconfiguration on nowhere dense graph classes.
\newblock {\em CoRR}, abs/1707.06775, 2017.

\bibitem{SpDataSchool}
Juliette Stehl{\'{e}}, N.~Voirin, Alain Barrat, Ciro Cattuto, Lorenzo Isella,
  Jean{-}Fran{\c{c}}ois Pinton, Marco Quaggiotto, Wouter~Van den Broeck,
  C.~R{\'{e}}gis, B.~Lina, and P.~Vanhems.
\newblock High-resolution measurements of face-to-face contact patterns in a
  primary school.
\newblock {\em PLOS ONE}, 6(8):e23176, 08 2011.

\bibitem{HeuvelMQRS17}
Jan van~den Heuvel, Patrice~Ossona de~Mendez, Daniel~A. Quiroz, Roman
  Rabinovich, and Sebastian Siebertz.
\newblock On the generalised colouring numbers of graphs that exclude a fixed
  minor.
\newblock {\em European Journal of Combinatorics}, 66:129--144, 2017.

\bibitem{HeuvelKPQRS17}
Jan van~den Heuvel, Stephan Kreutzer, Micha{\l} Pilipczuk, Daniel~A. Quiroz,
  Roman Rabinovich, and Sebastian Siebertz.
\newblock Model-checking for successor-invariant first-order formulas on graph
  classes of bounded expansion.
\newblock In {\em 32nd Annual {ACM/IEEE} Symposium on Logic in Computer
  Science, {LICS}}, pages 1--11, 2017.

\bibitem{Power}
Duncan~J Watts and Steven~H Strogatz.
\newblock Collective dynamics of `small-world' networks.
\newblock {\em nature}, 393(6684):440, 1998.

\bibitem{CElegans}
John~G. White, Eileen Southgate, J.~Nichol Thomson, and Sydney Brenner.
\newblock The structure of the nervous system of the nematode caenorhabditis
  elegans: the mind of a worm.
\newblock {\em Phil. Trans. R. Soc. Lond}, 314:1--340, 1986.

\bibitem{Karate}
Wayne~W. Zachary.
\newblock An information flow model for conflict and fission in small groups.
\newblock {\em Journal of anthropological research}, pages 452--473, 1977.

\bibitem{zhu2009colouring}
Xuding Zhu.
\newblock Colouring graphs with bounded generalized colouring number.
\newblock {\em Discrete Math.}, 309:5562--5568, 2009.

\end{thebibliography}

\end{document}

\end{document}
